\title{Diverse Collections in Matroids and Graphs}
\author{Fedor V. Fomin}{University of Bergen, Norway \and \url{http://www.ii.uib.no/~fomin/}}{fomin@ii.uib.no}{https://orcid.org/0000-0003-1955-4612}{Supported by the Research Council of Norway via the project ``MULTIVAL'' (grant no. 263317).}
\author{Petr A. Golovach}{University of Bergen, Norway \and \url{https://folk.uib.no/pgo041/}}{Petr.Golovach@uib.no}{https://orcid.org/0000-0002-2619-2990}{Supported by the Research Council of Norway via the project ``MULTIVAL'' (grant no. 263317).}
\author{Fahad Panolan}{Department of Computer Science and Engineering, IIT Hyderabad, India \and \url{https://iith.ac.in/~fahad/}}{fahad@cse.iith.ac.in}{https://orcid.org/0000-0001-6213-8687}{Seed grant, IIT Hyderabad (SG/IITH/F224/2020-21/SG-79)}
\author{Geevarghese Philip}{Chennai Mathematical Institute, India \and UMI ReLaX
  \and \url{https://www.cmi.ac.in/~gphilip}}{gphilip@cmi.ac.in}{http://orcid.org/0000-0003-0717-7303}{}
\author{Saket Saurabh}{Institute of Mathematical Sciences, India \and University
  of Bergen, Norway \and \url{https://www.imsc.res.in/~saket}}{saket@imsc.res.in}{}{European Research Council (ERC) under the European Union’s Horizon 2020 research and innovation programme (grant no. 819416), and Swarnajayanti Fellowship
grant DST/SJF/MSA-01/2017-18.\begin{minipage}{0.1\textwidth}
    \begin{center}
        \includegraphics[scale=0.5]{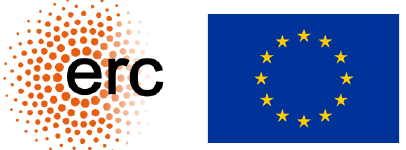}
    \end{center}
\end{minipage}}
\authorrunning{F.\,V. Fomin and P.\,A. Golovach and F. Panolan and G. Philip and
S. Saurabh} 
\keywords{Matroids, Matching, Diverse solutions, Fixed-parameter tractable algorithms} 
\newcommand{\yes}{\textsc{Yes}\xspace}
\newcommand{\no}{\textsc{No}\xspace}
\newcommand{\yesinstance}{\yes-instance\xspace}
\newcommand{\noinstance}{\no-instance\xspace}
\newcommand{\dl}[1]{\textcolor{lipicsGray}{\sffamily\bfseries\mathversion{bold}#1}}
\DeclareMathOperator{\operatorClassFPT}{{\sf FPT}\xspace}
\newcommand{\classFPT}{\ensuremath{\operatorClassFPT}\xspace}
\DeclareMathOperator{\operatorClassP}{{\sf P}}
\newcommand{\classP}{\ensuremath{\operatorClassP}\xspace}
\DeclareMathOperator{\operatorClassNP}{{\sf NP}}
\newcommand{\classNP}{\ensuremath{\operatorClassNP}\xspace}
\newcommand{\Oh}{\ensuremath{\mathcal{O}}}
\newcommand{\naturals}{\ensuremath{\mathbb{N}}\xspace}
\newcommand{\gfq}{\textsf{GF(}q{\textsf{)}}\xspace}
\newcommand{\bfA}{\textbf{A}\xspace}
\newcommand{\rank}{\textsf{rank}}
\newcommand{\corank}{\textsf{corank}}
\newcommand{\cl}{\textsf{cl}}
\newcommand{\MStar}{\ensuremath{M^{*}}\xspace}
\newcommand{\prob}{{\operatorname{Pr}}}
\newtheorem{observation}{Observation}
\newcommand{\sd}{\bigtriangleup}
\newcommand{\pname}{\textsc}
\newcommand{\ProblemFormat}[1]{\pname{#1}}
\newcommand{\ProblemIndex}[1]{\index{problem!\ProblemFormat{#1}}}
\newcommand{\ProblemName}[1]{\ProblemFormat{#1}\ProblemIndex{#1}{}\xspace}
\newcommand{\probWDB}{\ProblemName{Weighted Diverse Bases}}
\newcommand{\probPMD}{\ProblemName{Diverse Perfect Matchings}}
\newcommand{\probWDCIS}{\ProblemName{Weighted Diverse Common Independent Sets}}
\newcommand{\probWMI}{\ProblemName{Weighted Matroid Intersection}}
\newcommand{\probMI}{\ProblemName{Matroid Intersection}}
\newlength{\RoundedBoxWidth}
\newsavebox{\GrayRoundedBox}
\newenvironment{GrayBox}[1]%
   {\setlength{\RoundedBoxWidth}{.93\textwidth}
    \def\boxheading{#1}
    \begin{lrbox}{\GrayRoundedBox}
       \begin{minipage}{\RoundedBoxWidth}}%
   {   \end{minipage}
    \end{lrbox}
    \begin{center}
    \begin{tikzpicture}%
       \node(Text)[draw=black!20,fill=white,rounded corners,%
             inner sep=2ex,text width=\RoundedBoxWidth]%
             {\usebox{\GrayRoundedBox}};
        \coordinate(x) at (current bounding box.north west);
        \node [draw=white,rectangle,inner sep=3pt,anchor=north west,fill=white] 
        at ($(x)+(6pt,.75em)$) {\boxheading};
    \end{tikzpicture}
    \end{center}}     
\newenvironment{defproblemx}[2][]{\noindent\ignorespaces%
                                \FrameSep=6pt%
                                \parindent=0pt%
                \vspace*{-1.5em}
                \ifthenelse{\isempty{#1}}{%
                  \begin{GrayBox}{\textsc{#2}}%
                }{%
                  \begin{GrayBox}{\textsc{#2}  parameterized by~{#1}}%
                }
                \begin{tabular*}{\textwidth}{@{\hspace{.1em}} >{\itshape} p{1.8cm} p{0.8\textwidth} @{}}%
            }{
                \end{tabular*}%
                \end{GrayBox}%
                \ignorespacesafterend
            }  
\newcommand{\defproblema}[3]{
  \begin{defproblemx}{#1}
    Input:  & #2 \\
    Task: & #3
  \end{defproblemx}
}%
\begin{document}

\maketitle

\begin{abstract}
  We investigate the parameterized complexity of finding diverse sets of
  solutions to three fundamental combinatorial problems, two from the theory of
  matroids and the third from graph theory. The input to the \probWDB problem
  consists of a matroid \(M\), a weight function \(\omega:E(M)\to\naturals\),
  and integers \(k\geq 1, d\geq 0\). The task is to decide if there is a
  collection of \(k\) \emph{bases} \(B_{1}, \dotsc, B_{k}\) of \(M\) such that
  the weight of the symmetric difference of any pair of these bases is at least
  \(d\). This is a diverse variant of the classical matroid base packing
  problem. The input to the \probWDCIS problem consists of two matroids
  \(M_{1},M_{2}\) defined on the same ground set \(E\), a weight function
  \(\omega:E\to\naturals\), and integers \(k\geq 1, d\geq 0\). The task is to
  decide if there is a collection of \(k\) \emph{common independent sets}
  \(I_{1}, \dotsc, I_{k}\) of \(M_{1}\) and \(M_{2}\) such that the weight of
  the symmetric difference of any pair of these sets is at least \(d\). This is
  motivated by the classical weighted matroid intersection problem. The input to
  the \probPMD problem consists of a graph \(G\) and integers \(k\geq 1, d\geq
  0\). The task is to decide if \(G\) contains \(k\) \emph{perfect matchings}
  \(M_{1},\dotsc,M_{k}\) such that the symmetric difference of any two of these
  matchings is at least \(d\).

  The underlying problem of finding \emph{one} solution (basis, common
  independent set, or perfect matching) is known to be doable in polynomial time
  for each of these problems, and \probPMD is known to be \classNP-hard for
  \(k=2\). We show that \probWDB and \probWDCIS are both \classNP-hard. We show
  also that \probPMD cannot be solved in polynomial time (unless
  \(\classP=\classNP\)) even for the case \(d=1\). We derive fixed-parameter
  tractable (\classFPT) algorithms for all three problems with \((k,d)\) as the
  parameter.

  The above results on matroids are derived under the assumption that the input
  matroids are given as \emph{independence oracles}. For \probWDB we present a
  polynomial-time algorithm that takes a representation of the input matroid
  over a finite field and computes a \(poly(k,d)\)-sized kernel for the problem.
\end{abstract}


\section{Introduction}\label{sec:intro}

In this work we study the parameterized complexity of finding \emph{diverse
  collections of solutions} to three basic algorithmic problems. Two of these
problems arise in the theory of matroids. The third problem belongs to the
domain of graph theory, and its restriction to bipartite graphs can be rephrased
as a question about matroids. Each of these is a fundamental algorithmic problem
in its respective domain.
\paragraph*{Diverse \classFPT Algorithms.}
Nearly every existing approach to solving algorithmic
problems focuses on finding \emph{one solution of good quality} for a given
input. 
For algorithmic problems which are---eventually---motivated by problems from the
real world, finding ``one good solution'' may not be of much use for
practitioners of the real-world discipline from which the problem was originally
drawn. This is primarily because the process of abstracting out a ``nice''
algorithmic problem from a ``messy'' real-world problem invariably involves
throwing out a lot of ``side information'' which is very relevant to the
real-world problem, but is inconvenient, difficult, or even impossible to model
mathematically.

The other extreme of enumerating \emph{all} (or even all minimal or maximal)
solutions to an input instance is also usually not a viable solution. 
A third approach is to look for \emph{a few solutions of good quality} which are
``far away'' from one another according to an appropriate notion of distance.
The intuition is that given such a collection of ``diverse'' solutions, an
end-user can choose one of the solutions by factoring in the ``side
information'' which is absent from the algorithmic
model. 

These and other considerations led Fellows to propose \emph{the Diverse \(X\)
  Paradigm}~\cite{fellows2018diverseXparadigm}. Here ``\(X\)'' is a placeholder
for an optimization problem, and the goal is to study the fixed-parameter
tractability of finding a diverse collection of good-quality solutions for
\(X\). 
Recall that the \emph{Hamming distance} of two sets is the size of their
symmetric difference. A natural measure of diversity for problems whose
solutions are subsets of some kind is the \emph{minimum} Hamming distance of any
pair of solutions. In this work we study the parameterized complexity of finding
diverse collections of solutions for three fundamental problems with this
diversity measure and its weighted variant.

\paragraph*{Our problems.} Let \(M\) be a matroid on ground set \(E(M)\) and
with rank function \(r()\). The departure point of our work is the classical
theorem of Edmonds from 1965~\cite{edmonds1965lehman} about matroid partition.
This theorem states that a matroid $M$ has $k$ \emph{pairwise disjoint} bases if
and only if, for every subset $X$ of $E(M)$,
\[
k \cdot r(X) +|E(M) -X|\geq k \cdot r(M).
\]
An important algorithmic consequence of this result is that given access to an
independence oracle for a matroid \(M\), one can find a maximum number of
\emph{pairwise disjoint bases} of \(M\) in polynomial time (See, e.g.,
\cite[Theorem~42.5]{schrijver2003combinatorial}). This in turn implies, for
instance, that the maximum number of pairwise edge-disjoint spanning trees of a
connected graph can be found in polynomial time.

We take a fresh look at this fundamental result of Edmonds: what happens if we
don't insist that the bases be pairwise disjoint, and instead allow them to have
some pairwise intersection? We work in the weighted setting where each element
\(e\) of the ground set \(E(M)\) has a positive integral weight \(\omega(e)\)
associated with it, and the weight of a subset \(X\) of \(E(M)\) is the sum of
the weights of the elements in \(X\). The relaxed version of the pairwise
disjoint bases problem is then: Given an independence oracle for a matroid \(M\)
and integers \(k,d\) as input, find if \(M\) has \(k\) bases \(B_{1}, \dotsc,
B_{k}\) such that for every pair of bases \(B_{i}, B_{j}\;;\;i\neq j\) the
weight \(\omega(B_{i} \sd B_{j})\) of their symmetric difference is at least
\(d\). We call this the \probWDB problem:

\defproblema{\probWDB}%
{A matroid $M$, a weight function $\omega\colon E(M)\rightarrow \mathbb{N}$, and
  integers $k\geq 1$ and $d\geq 0$.}%
{Decide whether there are bases $B_1,\ldots,B_k$ of $M$ such that $\omega(B_i
  \sd B_j)\geq d$ holds for all distinct $i,j\in\{1,\ldots,k\}$. } Due to the
expressive power of matroids \probWDB captures many interesting computational
problems. We list a few examples; in each case the weight function assigns
positive integral weights, \(k \geq 1\) and \(d \geq 0\) are integers, and we
say that a collection of objects is \emph{diverse} if the weight of the
symmetric difference of each pair of objects in the collection is at least
\(d\). When $M$ is a graphic matroid \probWDB corresponds to finding diverse
\emph{spanning trees} in an edge-weighted graph. When $M$ is a vector matroid
then this is the problem of finding diverse \emph{column (or row) bases} of a
matrix with column (or row) weights. And when \(M\) is a transversal matroid on
a weighted ground set then this problem corresponds to finding diverse
\emph{systems of distinct representatives}.

Another celebrated result of Edmonds is the \emph{Matroid Intersection
  Theorem}~\cite{Edmonds70} which states that if \(M_{1}, M_{2}\)
are matroids on a common ground set \(E\) and with rank functions \(r_{1},
r_{2}\), respectively, then the size of a largest subset of \(E\) which is
independent in both \(M_{1}\) and \(M_{2}\) (a \emph{common independent set}) is
given by

\[
\min_{T \subseteq E}(r_{1}(T) + r_{2}(E-T)).
\]

Edmonds showed that given access to independence oracles for \(M_{1}\) and
\(M_{2}\), a maximum-size common independent set of \(M_{1}\) and \(M_{2}\) can
be found in polynomial time~\cite{Edmonds70}. This is called the \probMI
problem. Frank~\cite{Frank81} found a polynomial-time algorithm for the more
general \probWMI problem where the input has an additional weight function
\(\omega:E\to\naturals\) and the goal is to find a common independent set of the
maximum \emph{weight}. The second problem that we address in this work is a
``diverse'' take on \probWMI where we replace the maximality requirement on
individual sets with a lower bound on the weight of their symmetric difference.
Given \(M_{1}, M_{2},\omega\) as above and integers \(k,d\), we ask if 
there are \(k\) common independent sets whose pairwise symmetric differences
have weight at least \(d\) each; this is the \probWDCIS problem.

\defproblema{\probWDCIS}%
{Matroids $M_1$ and $M_2$ with a common ground set $E$, a weight function
  $\omega\colon E\rightarrow \mathbb{N}$, and integers $k\geq 1$ and $d\geq
  0$.}%
{Decide whether there are sets $I_1,\ldots,I_k\subseteq E$ such that $I_i$ is
  independent in both $M_1$ and $M_2$ for every $i\in\{1,\ldots,k\}$ and
  $\omega(I_i \sd I_j)\geq d$ for all distinct $i,j\in\{1,\ldots,k\}$. }

\probWDCIS also captures many interesting algorithmic problems. We give a few
examples (\emph{cf.}~\cite[Section~41.1a]{schrijver2003combinatorial}). We use
``diverse'' here in the sense defined above. Given a bipartite graph \(G\) with
edge weights, \probWDCIS can be used to ask if there is a diverse collection of
\(k\) \emph{matchings} in \(G\). A \emph{partial orientation} of an undirected
graph \(G\) is a directed graph obtained by (i) assigning directions to some
subset of edges of \(G\) and (ii) deleting the remaining edges. Given an
undirected graph \(G = (V,E)\) with edge weights and a function \(\iota: V \to
\mathbb{N}\), we say that a partial orientation \(\mathcal{O}\) of \(G\)
\emph{respects} \(\iota\) if the in-degree of every vertex \(v\) in
\(\mathcal{O}\) is at most \(\iota(v)\). We can use \probWDCIS to ask if there
is a diverse collection of \(k\) partial orientations of \(G\), all of which
respect \(\iota\). For a third example, let \(G = (V, E)\) be an undirected
graph with edge weights, in which each edge is assigned a---not necessarily
distinct---\emph{color}. A \emph{colorful forest} in \(G\) is any subgraph of
\(G\) which is a forest in which no two edges have the same color. We can use
\probWDCIS to ask if there is a diverse collection of \(k\) colorful forests in
\(G\).

Finding whether a bipartite graph has a perfect matching or not is a well-known
application of \probMI (\cite[Section~41.1a]{schrijver2003combinatorial}). The
third problem that we study in this work is a diverse version of the former
problem, extended to general graphs. Note that there is no known interpretation
of the problem of finding perfect matchings in (general) undirected graphs in
terms of \probMI. 

\defproblema{\probPMD}%
{An undirected graph $G$ on \(n\) vertices, and integers $k\geq 1$ and $d\geq
  0$.}%
{Decide whether there are perfect matchings $M_1,\ldots,M_k$ of $G$ such that
  $|M_i \sd M_j|\geq d$ for all distinct $i,j\in\{1,\ldots,k\}$. }

\paragraph*{Our results.} We assume throughout that matroids in the input are
given in terms of an \emph{independence oracle}. Recall that with this
assumption, we can find \emph{one} basis of the largest weight and \emph{one}
common independent set (of two matroids) of the largest weight, both in
polynomial time. In contrast, we show that the diverse versions \probWDB and
\probWDCIS are both \classNP-hard, even when the weights are expressed in
unary\footnote{See \autoref{thm:WDB_NP_hard} for an alternative hardness result
  for \probWDB.}.

\begin{restatable}{theorem}{nphardness}
  \label{thm:np-hard}
  Both \probWDB and \probWDCIS are strongly \classNP-complete, even on the
  uniform matroids $U_n^3$.
\end{restatable}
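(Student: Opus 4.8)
Here is a proof plan. The overall plan is to establish membership in \classNP\ by a short counting argument and \classNP-hardness by a single reduction from \textsc{3-Partition} that serves both problems. For membership, note that if $d\ge 1$ then the sets in any feasible collection are pairwise distinct (two equal sets have symmetric difference of weight $0$), so only polynomially many sets can occur; if $k$ exceeds their number we output a fixed \noinstance, if $d=0$ we output \yes\ (take $k$ copies of a single basis, respectively of $\emptyset$), and otherwise a feasible collection is a polynomially bounded certificate checkable in polynomial time. Hence both problems lie in \classNP.

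For hardness I will reduce from \textsc{3-Partition}, which is strongly \classNP-complete: given $3m$ positive integers $a_1,\dots,a_{3m}$ and a bound $B$ with $\sum_i a_i=mB$ and $B/4<a_i<B/2$ for all $i$, one must decide whether $\{1,\dots,3m\}$ can be partitioned into $m$ sets of sum $B$ each (necessarily triples, by the bounds on the $a_i$); the problem stays hard under the extra assumption $m\ge 3$, since $m\le 2$ is handled by brute force. Given such an instance with $m\ge 3$, I will take the ground set $E=\{e_1,\dots,e_{3m}\}$, set $\omega(e_i)=a_i+B$, let $M=U_{3m}^3$ (whose bases are exactly the $3$-element subsets of $E$) for \probWDB and $M_1=M_2=U_{3m}^3$ (whose common independent sets are exactly the subsets of $E$ of size at most $3$) for \probWDCIS, and put $k=m$ and $d=8B$. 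Every number produced is linear in the input numbers, so this is a polynomial-time reduction that preserves unary encodings.

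For correctness, the forward direction is immediate: a valid $3$-partition gives $m$ pairwise disjoint triples $T_1,\dots,T_m$, each of $\omega$-weight $B+3B=4B$, so $\omega(T_i\sd T_j)=8B=d$ for all $i\ne j$, and these triples are both bases and common independent sets of $U_{3m}^3$. For the converse, suppose we are given $m$ feasible sets whose pairwise symmetric differences all have weight at least $8B$. Since $a_i<B/2$, every element weighs less than $\tfrac{3B}{2}$, so if two of the sets have sizes $p,q\le 3$ and share $t$ elements, their symmetric difference has weight less than $(p+q-2t)\cdot\tfrac{3B}{2}$, which is at least $8B$ only if $p+q-2t\ge 6$, i.e.\ only if $p=q=3$ and $t=0$. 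Hence all $m$ sets are triples (automatic for \probWDB) and pairwise disjoint, so they partition $E$ into triples $T_1,\dots,T_m$. Summing the $\binom m2$ pairwise inequalities yields $(m-1)\sum_j\omega(T_j)\ge 8B\binom m2$, while $\sum_j\omega(T_j)=\sum_i(a_i+B)=4mB$ makes this an equality; with $m\ge 3$ this forces every $\omega(T_j)=4B$, i.e.\ $\sum_{e\in T_j}a_e=B$, giving a valid $3$-partition. As the reduction and its analysis are uniform over the two problems, strong \classNP-completeness on $U_n^3$ follows for both.

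The step I expect to be the main obstacle is the claim, for \probWDCIS, that all $m$ feasible sets must be full-size triples --- common independent sets of $U_n^3$ may have fewer than three elements --- and this is exactly the reason the weights are shifted by the additive constant $B$ rather than using the $a_i$ directly. With the unshifted weights one can, even when no $3$-partition exists, satisfy every pairwise bound using $m-1$ moderately heavy triples together with a single $2$-element set, because omitting one element costs too little in a symmetric difference. The $+B$ shift charges an extra $B$ per ``missing'' element, which (for $m\ge 3$) forces every feasible collection to be a partition of $E$ into equal-weight triples, after which the averaging step closes the argument. For \probWDB there is nothing to force here, since every basis of $U_{3m}^3$ already has exactly three elements.
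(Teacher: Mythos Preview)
Your proof is correct. The core reduction is the same as the paper's---\textsc{3-Partition} to $U_{3m}^3$ with weights derived from the $a_i$, followed by an averaging argument to recover the partition---but your treatment differs in two useful ways. First, you supply the explicit \classNP-membership argument (bounding the number of distinct bases and independent sets of $U_n^3$ by a polynomial), which the paper omits. Second, and more interestingly, you shift the weights by $+B$ and set $d=8B$; this forces every set in a feasible \probWDCIS\ collection to be a full triple, so the converse direction becomes literally the same for both problems. The paper instead uses the unshifted weights $\omega(i)=s_i$ with $d=2b$; for \probWDCIS\ this still allows one of the $k$ sets to have fewer than three elements, and the paper handles that residual case by padding the short set with the uncovered ground-set element(s) before invoking the \probWDB\ argument. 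Your shift buys a cleaner, uniform converse at no real cost.
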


Given this hardness, we analyze the parameterized complexity of these problems
with \(d,k\) as the parameters. Our first result is that \probWDB is
fixed-parameter tractable (\classFPT) under this parameterization:

\begin{restatable}{theorem}{divBasesFPT}
  \label{thm:divBasesFPT}
\probWDB can be solved in $2^{\Oh(dk^2(\log k+\log d))}\cdot |E(M)|^{\Oh(1)}$ time.
\end{restatable}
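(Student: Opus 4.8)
The plan is to normalize the instance, then to guess a small ``witness scaffold'' of a hypothetical solution and to complete it by a polynomial‑time matroid computation. First, if $d=0$ then any basis of $M$ repeated $k$ times is a solution, and a basis is found with $\Oh(|E(M)|)$ oracle calls, so assume $d\ge 1$. Next, observe that for every $X\subseteq E(M)$ one has $\sum_{e\in X}\omega(e)\ge d$ iff $\sum_{e\in X}\min\{\omega(e),d\}\ge d$ (if some $e\in X$ has $\omega(e)\ge d$ both sums are $\ge d$; otherwise they coincide). Hence replacing $\omega$ by $e\mapsto\min\{\omega(e),d\}$ gives an equivalent instance, and we may assume $1\le\omega(e)\le d$ for all $e$.

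Now suppose $(B_1,\dots,B_k)$ is a solution of such a normalized instance. For each pair $i<j$ fix an inclusion‑minimal $W_{ij}\subseteq B_i\sd B_j$ with $\omega(W_{ij})\ge d$; since removing any element of $W_{ij}$ drops its weight below $d$ while every element has weight $\le d$, we get $\omega(W_{ij})\le 2d-1$ and thus $|W_{ij}|\le 2d-1$. Each element of $W_{ij}$ lies either in $B_i\setminus B_j$ or in $B_j\setminus B_i$ — its \emph{side}. Abstracting away the identities of the elements, such a configuration is captured by: a set $\widehat W$ of at most $2d\binom k2=\Oh(dk^2)$ tokens; a weight in $\{1,\dots,d\}$ for each token; and, for each pair $i<j$, a subset $\widehat W_{ij}\subseteq\widehat W$ of at most $2d$ tokens together with a $2$‑colouring of $\widehat W_{ij}$ by colours $i,j$ with $\omega(\widehat W_{ij})\ge d$. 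Call this a \emph{scaffold}. There are at most $(dk^2)^{\Oh(dk^2)}=2^{\Oh(dk^2(\log k+\log d))}$ scaffolds, and we would enumerate all of them.

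For each scaffold we would test whether it is \emph{realizable}: whether there are a weight‑preserving injection $\phi\colon\widehat W\to E(M)$ and bases $B_1,\dots,B_k$ mapping, for every pair $i<j$, the colour‑$i$ part of $\widehat W_{ij}$ into $B_i\setminus B_j$ and the colour‑$j$ part into $B_j\setminus B_i$. Let $P_i$ be the tokens coloured $i$ in some $\widehat W_{ij}$, and $Q_i$ the tokens coloured $j$ in some $\widehat W_{ij}$ with $j\ne i$, and discard the scaffold if $P_i\cap Q_i\ne\emptyset$ for some $i$. Because outside $\phi(\widehat W)$ the bases are unconstrained and may be chosen independently of one another, realizability is then equivalent to the existence of a weight‑preserving injection $\phi$ for which, for every $i$, $\phi(P_i)$ is independent in $M$ and $E(M)\setminus\phi(Q_i)$ is spanning in $M$ (equivalently, $\phi(Q_i)$ is independent in $\MStar$); the bases are recovered by extending each $\phi(P_i)$ to a basis inside $E(M)\setminus\phi(Q_i)$. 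Correctness follows both ways: a realizable scaffold yields bases with $\omega(B_i\sd B_j)\ge\omega(\phi(\widehat W_{ij}))\ge d$, and a genuine solution yields, via the $W_{ij}$, a scaffold that the enumeration considers and that is realized by $\phi=\mathrm{id}$ together with the solution itself.

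The hard part will be the realizability test, which must run in polynomial time using only the independence oracle. It is a simultaneous matroid‑placement problem: instantiate the $\Oh(dk^2)$ tokens by distinct elements of prescribed weights so that $k$ prescribed — and in general overlapping — token‑sets become independent in $M$ and $k$ others become independent in $\MStar$. The overlaps are unavoidable, since in a real solution $B_i\sd B_j$ and $B_i\sd B_\ell$ can be forced to share elements, so this does not split into $k$ separate extension problems. I would attack it by encoding it as a (weighted) matroid‑intersection or matroid‑union instance on an auxiliary ground set built from $\widehat W$ and $E(M)$, exploiting that there are only $d$ weight classes and only $\Oh(k)$ independence constraints, each on a set of size $\Oh(dk)$. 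Granting a polynomial‑time realizability test, running it over all $2^{\Oh(dk^2(\log k+\log d))}$ scaffolds yields the claimed running time.
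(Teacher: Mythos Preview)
Your plan has a genuine gap at the realizability test, which you yourself flag as ``the hard part'' and then do not solve. The test asks for a weight‑preserving injection $\phi\colon\widehat W\to E(M)$ such that, simultaneously, each of the $k$ sets $\phi(P_i)$ is independent in $M$ and each of the $k$ sets $\phi(Q_i)$ is independent in $\MStar$. These $2k$ independence constraints live on overlapping token sets (a single token can belong to many $P_i$'s and $Q_j$'s), so the problem does not decompose into $k$ separate extension problems, as you correctly observe. But precisely for that reason, it is not a two‑matroid intersection or a matroid‑union instance either: matroid intersection handles two independence constraints on a common ground set, and matroid union handles a \emph{partition} into independent sets, whereas here you have $2k$ constraints on non‑disjoint supports. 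Simultaneous placement subject to three or more independence constraints is, in general, as hard as $3$‑matroid intersection, and nothing in your setup (the bound $|P_i|,|Q_i|=\Oh(dk)$, the fact that all $P_i$‑constraints are in the same matroid, or the $d$ weight classes) is known to make it tractable. Without a polynomial‑time realizability test, enumerating scaffolds does not yield an \classFPT algorithm.

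The paper sidesteps this entirely. It first checks whether $M$ has an independent‑and‑coindependent set of size $k\lceil d/2\rceil$; if so, the instance is trivially \yes. Otherwise, every basis and every cobasis has small ``rank defect'', and this is exploited to \emph{compress} the matroid: a short greedy process produces a set $S^*\subseteq E(M)$ with $|S^*|=\Oh(d^2k^3)$ and a matroid $\widetilde M$ on $S^*$ (obtained from $M$ by deletions and contractions) such that $(M,\omega,k,d)$ and $(\widetilde M,\omega,k,d)$ are equivalent. Moreover $\widetilde M$ comes with a partition $(L,L^*)$ of $S^*$ forcing every basis $B$ of $\widetilde M$ to satisfy $|B\cap L|\le\lceil d/2\rceil k$ and $|L^*\setminus B|\le\lceil d/2\rceil k$, so $\widetilde M$ has only $(d^2k^3)^{\Oh(dk)}$ bases. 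Brute‑forcing over all $k$‑tuples of bases then gives the stated running time. The key idea you are missing is this compression step: rather than guessing an abstract scaffold and trying to embed it into a large matroid, the paper shrinks the matroid itself until brute force suffices.
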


We have a stronger result if the input matroid is given as a representation over
a finite field (and not just as a ``black box'' independence oracle): in this
case we show that \probWDB admits a \emph{polynomial kernel} with this
parameterization.

\begin{restatable}{theorem}{kernel}
  \label{thm:kern}
  Given a representation of the matroid $M$ over a finite field \gfq as input,
  we can compute a kernel of \probWDB of size $\Oh(k^6d^4\log q)$.
\end{restatable}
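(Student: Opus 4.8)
The plan is to (i) normalise the weights, (ii) prove a structural lemma that traps a hypothetical solution inside a $\mathrm{poly}(k,d)$-sized part of the matroid, and (iii) use the \gfq-representation to carry out the corresponding reductions in polynomial time, turning $M$ into a concrete small matrix. First I would preprocess the weights: since every constraint only asks for $\omega(B_i\sd B_j)\ge d$, replacing $\omega$ by $\omega'(e):=\min\{\omega(e),d\}$ leaves the instance equivalent — for every $S\subseteq E(M)$ one has $\omega(S)\ge d\iff\omega'(S)\ge d$ — so from now on all weights lie in $\{1,\dots,d\}$ (hence cost $\Oh(\log d)$ bits each) and the elements fall into at most $d$ weight classes $E_1,\dots,E_d$.

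Next, the heart of the argument: a structural lemma asserting that if $(M,\omega',k,d)$ is a \yesinstance then it admits a solution $B_1,\dots,B_k$ in which $|B_i\sd B_j|=\Oh(kd)$ for every pair $i\neq j$, so that the whole ``footprint'' $U:=\bigcup_{i\ge 2}(B_1\sd B_i)=\bigcup_{i<j}(B_i\sd B_j)$ has size $\Oh(k^2d)$. I would prove this by taking a solution minimising $\sum_{i<j}|B_i\sd B_j|$ (ties broken by $\sum_i\omega'(B_i)$), assuming some pairwise symmetric difference exceeds $\Theta(kd)$, and performing one elementary exchange pushing one offending basis towards the other via the symmetric exchange property of matroids. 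Such a move strictly decreases $\sum_{i<j}|B_i\sd B_j|$, and it changes each $\omega'(B_p\sd B_j)$ by only $\Oh(d)$ (both exchanged elements enter or leave one symmetric difference each), so the only way it can fail is by dropping some distance below $d$; showing that, once $|B_p\sd B_q|$ is large enough, \emph{some} choice of the exchanged pair survives all $\Theta(k)$ remaining distance constraints simultaneously is the delicate point. \textbf{This simultaneous control of the $\binom{k}{2}$ distance constraints under local exchanges is the main obstacle of the proof}, and it is exactly where the extra factor $k$ beyond the naive $\Theta(d)$ bound on a symmetric difference enters.

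Given the structural lemma, let $C^\ast=\bigcap_i B_i$ be the core of the confined solution; then $|C^\ast|\ge\rank(M)-|U|$, so $M/C^\ast$ has rank $\Oh(k^2d)$, the sets $B_i\setminus C^\ast$ are diverse bases of $M/C^\ast$ with unchanged symmetric differences, and the problem ``lives'' in a rank-$\Oh(k^2d)$ minor. Contracting any independent set is automatically safe in one direction (diverse bases of $M/C$ extend by $C$ to diverse bases of $M$ with identical $\sd$'s), so it suffices to exhibit a \emph{polynomial-time computable} independent set $C$ with $\rank(M/C)=\Oh(k^2d)$ for which the converse also holds; here I would invoke the structural lemma again, together with linear algebra on the \gfq-representation, to show that a suitable large independent set of ``cheap/redundant'' elements can be frozen into every basis — the prototype being that if $e\in\cl(C^\ast)$ then $e$ can be forced into all $B_i$ by a common single exchange, preserving all distances. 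Iterating the contraction brings the rank down to $\rho_0=\Oh(k^2d)$.

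With the rank bounded, I would finally prune the ground set class by class: for each weight class $E_w$, keep a subfamily $E_w'\subseteq E_w$ that is representative for extending independent sets of the current matroid by at most $|U|=\Oh(k^2d)$ further elements — computable in polynomial time from the \gfq-representation by the standard efficient representative-set machinery for linear matroids — and discard the rest; a routine rerouting/exchange argument then shows that any confined solution can be rerouted through the kept elements, so the answer is unchanged. Tracking the exponents through the two reductions, this leaves $\Oh(k^4d^2)$ elements per class and $\Oh(k^4d^3)$ overall. The resulting \probWDB instance thus has a matroid of rank $\Oh(k^2d)$ on $\Oh(k^4d^3)$ elements, represented by a matrix over \gfq with $\Oh(k^2d)\cdot\Oh(k^4d^3)=\Oh(k^6d^4)$ entries of $\Oh(\log q)$ bits each, plus $\Oh(k^4d^3)$ weights of $\Oh(\log d)$ bits — in total a kernel of size $\Oh(k^6d^4\log q)$, computable in polynomial time.
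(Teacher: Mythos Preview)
Your plan diverges substantially from the paper's route and, as written, has a real gap precisely at the step you yourself flag as ``the main obstacle.'' You want a structural lemma saying that a \yesinstance always admits a solution with $|B_i\sd B_j|=\Oh(kd)$ for all pairs, proved by taking a minimiser of $\sum_{i<j}|B_i\sd B_j|$ and pushing an excessive pair together via a single basis exchange. But a single exchange on $B_p$ can lower $\omega'(B_p\sd B_h)$ by as much as $2d$ for \emph{each} other index~$h$, and you never explain why a good exchange pair exists that respects \emph{all} $k-1$ side constraints simultaneously. Without this, nothing downstream (the contraction of a large common core, the representative-set pruning) is grounded: both steps presuppose a confined solution whose existence you have not established. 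The representative-set step is also under-specified --- the standard linear-matroid representative families guarantee that one partial independent set can be completed, not that $k$ bases with $\binom{k}{2}$ pairwise weight constraints can be simultaneously rerouted.

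The paper avoids your obstacle entirely by a different win-win. It first solves \textsc{Matroid Intersection} on $M$ and its dual $M^*$: if there is an independent-and-coindependent set of size $\ge k\lceil d/2\rceil$, the instance is immediately a \yesinstance (\autoref{lem:ind-coind}). Otherwise, for \emph{every} basis $B$ one has $\rank(\overline{B})<k\lceil d/2\rceil$, so starting from an arbitrary basis and peeling off $\ell=\lceil d/2\rceil k^2$ successive max-weight bases of $M-S_{i-1}$ (the greedy algorithm) produces a set $S$ of size $\Oh(k^3d^2)$ that provably contains a solution --- proved by a pigeonhole/replacement argument inside \autoref{lem:main}, not by controlling all pairwise distances at once. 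Repeating the same trick on the dual $\widehat{M}^*$ and then contracting the surviving portion of $B$ yields a matroid on $\le 2\lceil d/2\rceil^2k^3$ elements; taking its \gfq-representation gives a matrix with $\Oh(k^6d^4)$ entries and hence the claimed kernel size. No exchange-argument structural lemma, no representative sets --- just matroid intersection, duality, and iterated greedy.
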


We then show that our second matroid-related diverse problem is also \classFPT
under the same parameterization.

\begin{restatable}{theorem}{divComIndSetsFPT}
\label{thm:FPT-WDCIS}
\probWDCIS can be solved in $2^{\Oh(k^3d^2\log(kd))}\cdot |E|^{\Oh(1)}$ time. 
\end{restatable}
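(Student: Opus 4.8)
The plan is the following: first show that a positive instance always has a solution consisting of \emph{short} common independent sets, then shrink the ground set to a function of $k$ and $d$, and finish by brute force.

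\emph{Preprocessing.} If $d=0$ the instance is trivially positive (take $I_1=\dots=I_k=\emptyset$), so assume $d\ge 1$. We may delete every element of weight $0$, since dropping such elements from the $I_i$'s keeps each set common independent and changes no value $\omega(I_i\sd I_j)$. We may also cap every weight at $d$, replacing $\omega(e)$ by $\min\{\omega(e),d\}$: this preserves every relevant inequality, because for every $X\subseteq E$ we have $\sum_{e\in X}\omega(e)\ge d$ if and only if $\sum_{e\in X}\min\{\omega(e),d\}\ge d$. Henceforth $1\le\omega(e)\le d$ for all $e$.

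\emph{Short solutions.} Among all solutions, pick one minimising $\sum_{i=1}^{k}|I_i|$, say $(I_1,\dots,I_k)$. Fix $i$ and $e\in I_i$. Since $I_i\setminus\{e\}$ is still independent in both matroids, minimality forces $(I_1,\dots,I_i\setminus\{e\},\dots,I_k)$ to violate some diversity constraint, i.e.\ $\omega((I_i\setminus\{e\})\sd I_j)<d$ for some $j\ne i$. If $e\in I_j$ then $(I_i\setminus\{e\})\sd I_j=(I_i\sd I_j)\cup\{e\}$, contradicting $\omega(I_i\sd I_j)\ge d$; so $e\notin I_j$, $e\in I_i\sd I_j$, and $\omega(I_i\sd I_j)=\omega((I_i\setminus\{e\})\sd I_j)+\omega(e)<d+\omega(e)\le 2d$. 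Thus every $e\in I_i$ lies in one of the at most $k-1$ sets $I_i\sd I_j$ of weight below $2d$, each of which has fewer than $2d$ elements; hence $|I_i|<2d(k-1)$ and $\big|\bigcup_i I_i\big|<2dk^2$. Set $s:=2dk$ and $q:=2dk^2$.

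\emph{Shrinking the ground set (the crux).} By the previous paragraph it suffices to compute a set $E'\subseteq E$ whose size is bounded by a function of $k$ and $d$ and which still contains $\bigcup_i I_i$ for \emph{some} minimal solution; the instance is then equivalent to its restriction to $E'$ (the independence oracles of $M_1|_{E'}$ and $M_2|_{E'}$ are simulated by those of $M_1,M_2$), and we finish by exhaustive search. For the analogous single-matroid situation this reduction can be carried out with representative families; the new difficulty here---and what I expect to be the main obstacle---is that each $I_i$ must stay independent in $M_1$ \emph{and} in $M_2$ simultaneously, and an element shared by several of the $I_i$'s has to be replaced in all of them by one common substitute, whereas the standard representative-set toolkit rests on the exchange axiom of a single matroid. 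I would handle this by running the representative-family construction on the exchange structure of the matroid intersection directly---equivalently, on a combined structure over two disjoint copies of the ground set---keeping an element exactly when it is $q$-irreplaceable for extending small sets in both matroids; an augmenting-path exchange argument then shows that any element of $\bigcup_i I_i$ lying outside $E'$ can be swapped out while the cap $\omega(e)\le d$ prevents any affected $\omega(I_i\sd I_j)$ from dropping below $d$, so that iterating pushes $\bigcup_i I_i$ inside $E'$. This should yield $|E'|=2^{\Oh(dk\log(dk))}$.

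\emph{Finishing.} On the reduced instance, enumerate all $k$-tuples $(I_1,\dots,I_k)$ of subsets of $E'$, each of size at most $s=2dk$; there are at most $2^{\Oh(sk\log|E'|)}=2^{\Oh(k^3d^2\log(kd))}$ of them. For each tuple, query the independence oracles to check that every $I_i$ is independent in both $M_1$ and $M_2$, and test the $\binom{k}{2}$ inequalities $\omega(I_i\sd I_j)\ge d$. The overall running time is $2^{\Oh(k^3d^2\log(kd))}\cdot|E|^{\Oh(1)}$, as claimed.
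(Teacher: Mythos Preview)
Your preprocessing and ``short solutions'' steps are correct; the minimality argument bounding $|I_i|<2d(k-1)$ is clean and is in fact a bit sharper than what the paper uses. The problem is Step~3, which you yourself flag as ``the crux'' but then only sketch. This sketch does not go through as written, for two concrete reasons.

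First, representative families in the sense you invoke are a single-matroid tool: they rely on the matroid exchange axiom to certify that a discarded element can always be swapped for a retained one while preserving independence. Common independent sets of two matroids do \emph{not} satisfy the exchange axiom, and there is no off-the-shelf ``representative family for matroid intersection'' that guarantees an element outside $E'$ can be replaced by one inside $E'$ while staying independent in both $M_1$ and $M_2$. Your suggestion of working ``on a combined structure over two disjoint copies of the ground set'' does not fix this: the two copies are coupled, and an augmenting-path exchange in the intersection auxiliary graph may touch many elements, not just one. Second, even granting a swap $e\mapsto e'$ that preserves common independence, your claim that the weight cap keeps $\omega(I_i\sd I_j)\ge d$ is unjustified: if $e\in I_i\setminus I_j$ and the replacement $e'$ happens to lie in $I_j$, then $I_i'\sd I_j=(I_i\sd I_j)\setminus\{e,e'\}$, so the weight can drop by as much as $2d$. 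Iterating such swaps gives no control over the diversity constraints.

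The paper avoids both difficulties by \emph{not} shrinking the ground set at all. After bounding the size $s$ of any common independent set (via the observation that a common independent set of size $\ge k\lceil d/2\rceil$ already yields a yes-instance), it builds, for each target weight $w$, a family $\mathcal{F}_w$ of candidate sets by a recursive branching procedure: starting from a partial set $X$, greedily pack up to $ks$ pairwise disjoint common independent sets $Y_1,\dots,Y_\ell$ of weight $\ge w-\omega(X)$ in $(M_1/X, M_2/X)$ using weighted matroid intersection. If $\ell=ks$, pigeonhole guarantees one $Y_h$ disjoint from all the other $I_j$'s, so $X\cup Y_h$ can replace $I_i$; otherwise $R=\bigcup_j Y_j$ is small, $I_i$ must meet $R$, and one branches on the intersection. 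The recursion has depth $\le s$ and branching factor $\le (ks^2)^s$, yielding $|\mathcal{F}|=2^{\Oh(s^2\log(ks))}\cdot d$. Brute-forcing $k$-tuples from $\mathcal{F}$ then gives the stated bound. The point is that the greedy-packing step produces \emph{whole} replacement sets at once, so the diversity constraint is handled directly rather than through elementwise swaps.
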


We now turn to the problem of finding diverse perfect matchings. \probPMD is
known to be \classNP-hard already when \(k=2\) and \(G\) is a \(3\)-regular
graph~\cite{holyer1981np,DBLP:journals/corr/abs-2009-04567}. Since all perfect
matchings of a graph have the same size the symmetric difference of two distinct
perfect matchings is at least \(2\). Setting \(d=1\) in \probPMD is thus
equivalent to asking whether \(G\) has at least \(k\) distinct perfect
matchings. Since a bipartite graph on \(n\) vertices has at most
\(\frac{n}{2}!\) perfect matchings and since \(\log(\frac{n}{2}!) = \Oh(n \log
n)\) we get---using binary search---that there is a polynomial-time Turing
reduction from the problem of \emph{counting} the number of perfect matchings in
a bipartite graph to \probPMD instances with \(d=1\). Since the former problem
is \textsf{\#P}-complete~\cite{valiant1979complexity} we get

\begin{theorem}
  \label{thm:DPM_NP_hard}
  \probPMD with \(d=1\) cannot be solved in time polynomial in \(n=|V(G)|\) even
  when graph \(G\) is bipartite, unless \(\classP = \classNP\).
\end{theorem}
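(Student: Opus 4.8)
The plan is to give a polynomial-time Turing reduction from the problem of counting the perfect matchings of a bipartite graph---which is $\textsf{\#P}$-complete by Valiant~\cite{valiant1979complexity}---to \probPMD with $d=1$, and conclude that a polynomial-time algorithm for the latter would collapse \classP and \classNP.

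First I would record the observation already made above: any two distinct perfect matchings of a graph differ in at least two edges, so the instance $(G,k,d{=}1)$ of \probPMD is a \yesinstance if and only if $G$ has at least $k$ distinct perfect matchings. Next I would bound the number of perfect matchings: a bipartite graph $G$ on $n$ vertices with parts $(A,B)$ has a perfect matching only when $|A|=|B|=n/2$, in which case distinct perfect matchings induce distinct bijections $A\to B$; hence the number $p(G)$ of perfect matchings of $G$ lies in $\{0,1,\dotsc,(n/2)!\}$, and since $\log((n/2)!)=\Oh(n\log n)$ it has an $\Oh(n\log n)$-bit binary encoding.

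Now I would argue by contradiction. Suppose \probPMD with $d=1$ is solvable in time polynomial in $n$. For a bipartite graph $G$ on $n$ vertices the predicate ``$p(G)\geq k$'' is monotone non-increasing in $k$ and, by the observation, is decided by one call to the assumed algorithm on the instance $(G,k,d{=}1)$; note that only the integer $k$---whose bit-length is $\Oh(n\log n)$, hence polynomial in $n$---changes between calls. A binary search over $\{0,1,\dotsc,(n/2)!\}$ therefore determines $p(G)$ exactly using $\Oh(\log((n/2)!))=\Oh(n\log n)$ such calls, for a total running time polynomial in $n$. This places the counting of perfect matchings in bipartite graphs in \classP; since that problem is $\textsf{\#P}$-complete we get $\textsf{\#P}\subseteq\classP$, and as the number of satisfying assignments of a Boolean formula is a $\textsf{\#P}$ quantity this in particular puts satisfiability in \classP, i.e.\ $\classP=\classNP$.

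The reduction is entirely routine; the only points that deserve an explicit sentence are the monotonicity of ``$p(G)\geq k$'' in $k$ (which validates the binary search) and the bound $p(G)\leq(n/2)!$ together with $\log((n/2)!)=\Oh(n\log n)$ (which keeps both the number of queries and the size of the queried parameter $k$ polynomial in $n$). I do not expect any genuine obstacle beyond assembling these observations.
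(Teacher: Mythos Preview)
Your proposal is correct and follows exactly the same approach as the paper: observe that $(G,k,1)$ is a \yesinstance iff $G$ has at least $k$ perfect matchings, bound the number of perfect matchings of a bipartite $n$-vertex graph by $(n/2)!$, and binary-search using $\Oh(n\log n)$ oracle calls to reduce from the $\textsf{\#P}$-complete counting problem. You actually spell out a couple of points the paper leaves implicit (monotonicity in $k$, the polynomial bit-length of $k$, and the passage from $\textsf{\#P}\subseteq\textsf{FP}$ to $\classP=\classNP$), but the argument is the same.
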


Thus we get that \probPMD is unlikely to have a polynomial-time algorithm even
if \emph{one} of the two numbers \(k,d\) is a small constant. We show that the
problem \emph{does} have a (randomized) polynomial-time algorithm when
\emph{both} these parameters are bounded; \probPMD is (randomized) \classFPT
with \(k\) and \(d\) as parameters:
\begin{restatable}{theorem}{divMatchingFPT}
\label{thm:matching-fpt}
There is an algorithm that given an instance of \probPMD, runs in time
$2^{2^{\Oh(kd)}}n^{\Oh(1)}$ and outputs the following: If the input is a
\noinstance then the algorithm outputs \no. Otherwise the algorithm outputs \yes
with probability at least $1-\frac{1}{e}$.
\end{restatable}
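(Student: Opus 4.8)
The plan is to reduce the problem to deciding whether $G$ contains a bounded-complexity ``core'' structure realizing the diversity, and then to locate that core inside $G$ using color coding together with randomized perfect-matching detection.

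\emph{A structural reduction.} Suppose $M_1,\dots,M_k$ is a solution, and let $W$ be the set of vertices incident to $\bigcup_{i<j}(M_i\sd M_j)$. A short argument shows that every vertex outside $W$ is matched by the same edge in all of $M_1,\dots,M_k$; hence there is a perfect matching $M_0$ of $G-W$ with $M_i\cap E(G-W)=M_0$ for every $i$, while $M_i\cap E(G[W])$ is a perfect matching of $G[W]$ and $|(M_i\cap E(G[W]))\sd(M_j\cap E(G[W]))|=|M_i\sd M_j|\ge d$. Conversely, any set $Z\subseteq V(G)$ for which $G-Z$ has a perfect matching and $G[Z]$ has $k$ perfect matchings that are pairwise at Hamming distance at least $d$ gives rise to a solution. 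So the task is to decide whether such a $Z$ exists. For $k=2$ this already takes a clean shape: it is equivalent to asking whether $G$ contains vertex-disjoint even cycles of total length at least $d$ whose deletion leaves a graph with a perfect matching.

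\emph{Bounding the core.} The main step is to prove that on a yes-instance the solution can be chosen so that the isomorphism type of $(G[Z],M_1,\dots,M_k)$ -- call it the \emph{pattern} -- is described by only $2^{\Oh(kd)}$ bits. Fixing $M_1$ inside $G[Z]$, the remaining matchings are encoded by vertex-disjoint unions of $M_1$-alternating cycles $D_2,\dots,D_k$ with $|D_i|\ge d$ and $|D_i\sd D_j|\ge d$, and the pattern records which distinct alternating cycles are used and which of $D_2,\dots,D_k$ uses each one. By an exchange and trimming argument one retains only $\Oh(k^2d)$ ``short'' cycles -- enough to meet all $\binom{k}{2}$ diversity inequalities, by a basic-feasible-solution argument for the underlying covering constraints -- while a bounded number of ``long'' cycles may survive but are \emph{rigid}: essentially, a long even cycle that is the only alternating structure supported on its vertices, as in a large cycle graph $C_n$. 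This is the delicate and most technical part, and it is exactly what forces the doubly exponential running time: a long rigid cycle has unbounded length, so a pattern cannot spell it out vertex by vertex but only describe it abstractly (``an even cycle of length at least $d$ whose removal still leaves a perfect matching''); together with the assignment of index sets to cycles this yields $2^{2^{\Oh(kd)}}$ candidate patterns.

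\emph{Realizing a pattern.} For each candidate pattern we test whether $G$ realizes it. The bounded-size, short-cycle part is found by color coding: colour $E(G)$ at random with $2^{\Oh(kd)}$ colours so that with probability $2^{-2^{\Oh(kd)}}$ the short cycles receive a prescribed colour pattern that lets us read them off, and repeat $2^{2^{\Oh(kd)}}$ times to amplify. With the short cycles fixed, what remains is to find in the rest of $G$ vertex-disjoint copies of the prescribed ``long rigid'' pieces together with a perfect matching of everything left over. Each of these is a perfect-matching-type question, which I would answer in randomized polynomial time by evaluating the Tutte matrix at random values over a large field (in the spirit of Lov\'asz, and of Mulmuley--Vazirani--Vazirani for the parity/colour bookkeeping); this disposes of even cycles of unbounded length and of the residual perfect matching uniformly, without enumeration. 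If no pattern is realized we output \no, which is then correct; if the instance is a yes-instance the correct pattern is realized and the corresponding test succeeds with good probability.

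\emph{Running time and correctness.} Iterating over the $2^{2^{\Oh(kd)}}$ patterns and colourings, with each step running in randomized polynomial time, gives total running time $2^{2^{\Oh(kd)}}\cdot n^{\Oh(1)}$, and a standard amplification boosts the overall success probability to at least $1-\frac{1}{e}$, with one-sided error (we never answer \yes on a no-instance). I expect the structural step -- pinning down the right notion of pattern, proving the exchange lemma, and especially taming the long rigid cycles -- to be the crux of the proof; the remaining algorithmic part is then a fairly routine combination of color coding and algebraic perfect-matching detection.
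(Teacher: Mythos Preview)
Your proposal takes a fundamentally different route from the paper, and its central structural step has a genuine gap.

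The paper never attempts to bound the ``pattern'' of a solution. Instead it runs a greedy procedure that iteratively finds perfect matchings $M_1,\dots,M_q$ with pairwise symmetric difference at least $2^{k-q+1}d$, using colour coding on edges outside the already-found matchings together with Tutte-matrix evaluation to locate a matching far from all previous ones. If $q=k$ we are done; otherwise every perfect matching lies within symmetric-difference distance $2^{k-q}d$ of a \emph{unique} $M_j$ (by the triangle inequality and the geometric spacing), so the $k$ hypothetical solution matchings partition into clusters around $M_1,\dots,M_q$. Guessing the cluster sizes and, for each $M_j$, searching the ball of radius $2^{k-q}d\le 2^{k-1}d$ for the required number of pairwise $d$-diverse matchings (colour coding plus a DP over $M_j$-alternating cycles) finishes the argument. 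The doubly-exponential running time comes solely from the ball radius $2^{k-1}d$; no structural bound on a solution is ever asserted.

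Your approach, by contrast, rests on the claim that a yes-instance has a solution whose pattern is describable in $2^{\Oh(kd)}$ bits, via ``an exchange and trimming argument'' leaving $\Oh(k^2d)$ short cycles and a bounded number of ``rigid'' long cycles. This step is not substantiated and I do not see how to carry it out. The symmetric difference $M_1\sd M_i$ may be a single long alternating cycle, which cannot be trimmed (flipping only part of it does not yield a perfect matching), and in a general graph there is no reason such a cycle is ``rigid'': $G$ may support many overlapping long $M_1$-alternating cycles on the same vertex set, and the interaction of $D_i$ with $D_j$ (needed for the constraint $|D_i\sd D_j|\ge d$) may depend on exactly which one is chosen. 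The ``basic-feasible-solution'' heuristic does not apply here, since the constraints are set-theoretic symmetric-difference lower bounds rather than a linear program over cycle indicators. Even granting a pattern bound, the realizability test you describe is not a single perfect-matching question: deciding whether $G$ contains several vertex-disjoint even cycles, each of prescribed minimum length and with prescribed overlap relations to the short-cycle part, whose removal leaves a graph with a perfect matching, is not known to reduce to Tutte-matrix evaluation, and Mulmuley--Vazirani--Vazirani isolation does not encode per-cycle length lower bounds on several disjoint cycles simultaneously. So both the structural lemma and the algorithmic realizability step are missing the key ideas, and the paper's greedy geometric-clustering approach is what actually makes the proof go through.
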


Note that \autoref{thm:matching-fpt} implies, in particular, that \probPMD can
be solved in (randomized) polynomial time when \(kd \leq c_{1} + \frac{\log\log
  n}{c_{2}}\) holds for some constants \(c_{1}, c_{2}\) which depend on the
constant hidden by the \(\Oh()\) notation.

\paragraph*{Our methods.} We prove the \classNP-hardness results
(\autoref{thm:np-hard}) by reduction from the \textsc{$3$-Partition} problem. To
show that \probWDB is \classFPT (\autoref{thm:divBasesFPT}) we observe first
that if the input matroid \(M\) contains a set of size \(\Omega(kd)\) which is
\emph{both} independent \emph{and} co-independent in \(M\) then the input is a
\yes instance of \probWDB (\autoref{lem:ind-coind}). We can check for the
existence of such a set in time polynomial in \(|E(M)|\), so we assume without
loss of generality that no such set exists. We then show that starting with an
arbitrary basis of \(M\) and repeatedly applying the greedy algorithm
(\autoref{prop:greedy}) \(poly(k,d)\)-many times we can find, in time polynomial
in \((|E(M)| + k + d)\), (i) a subset \(S^{*} \subseteq E(M)\) of size
\(poly(k,d)\) and (ii) a matroid \(\widetilde{M}\) on the ground set \(S^{*}\)
such that $(\widetilde{M},\omega,k,d)$ is \emph{equivalent} to the input
instance $(M,\omega,k,d)$ (\autoref{lem:main}). We also show how to compute a
useful partition of \(E(\widetilde{M}) = S^{*}\) which speeds up the subsequent
\classFPT-time search for a diverse set of bases in \(\widetilde{M}\). The
kernelization result for \probWDB (\autoref{thm:kern}) follows directly from
\autoref{lem:main}. This ``compression lemma'' is thus the main technical
component of our algorithms for \probWDB.

To show that \probWDCIS is \classFPT (\autoref{thm:FPT-WDCIS}) we observe first
that if the two input matroids \(M_{1},M_{2}\) have a \emph{common} independent
set of size \(\Omega(kd)\) then the input is a \yes instance of \probWDCIS
(\autoref{lem:big-is}). So we assume that this is not the case, and then show
(\autoref{lem:branching}) that we can construct, in \(f(k,d)\) time, a
collection $\mathcal{F}$ of common independent sets of $M_1$ and $M_2$ of size
\(g(k,d)\) such that if the input is a \yesinstance then it has a solution
$I_1,\ldots,I_k$ with $I_i\in\mathcal{F}$ for $i\in\{1,\ldots,k\}$. The
\classFPT algorithm for \probWDCIS follows by a simple search in the collection
\(\mathcal{F}\).

Our algorithm for \probPMD\ is based on two procedures. 
\begin{description}
\item[P1] Given an undirected graph $G$ on \(n\) vertices, perfect matchings
  $M_1,\ldots, M_r$ of \(G\), and a non-negative integer $s$ as input, this
  procedure (\autoref{lem:PMDstep1}) runs in time $2^{\Oh(rs)} n^{\Oh(1)}$ and
  outputs a perfect matching $M$ of \(G\) such that $\vert M \sd M_i\vert \geq
  2s$ holds for all $i\in \{1,\ldots,r\}$ (if such a matching exists), with
  probability at least $\frac{2}{3}e^{-rs}$.
\item[P2] Given an undirected graph $G$ on \(n\) vertices, a perfect matching
  $M$ of \(G\), and non-negative integers $r,d,s$, this
  procedure(\autoref{lem:PMDstep2}) runs in time $2^{\Oh(r^2s)} n^{\Oh(1)}$, and
  outputs $r$ perfect matchings $M_1^{\star},\ldots,M_r^{\star}$ of \(G\) such
  that $\vert M \sd M_i^{\star}\vert \leq s$ holds for all $i\in \{1,\ldots,r\}$
  and $\vert M_i^{\star} \sd M_j^{\star} \vert\geq d$ holds for all distinct
  $i,j\in [r]$ (if such matchings exist), with probability at least $e^{-rs}$.
  If no such perfect matchings exist, then the algorithm outputs \no.
\end{description}

Let $(G,k,d)$ be the input instance of \probPMD. We use procedure \dl{P1} to
greedily compute a collection of matchings which are ``far apart'': We start
with an arbitrary perfect matching $M_1$. In step $i$, we have a collection of
perfect matchings $M_1,\ldots,M_{i-1}$ such that $\vert M_{j}\sd M_{j'}\vert
\geq 2^{k-i}d$ holds for any two distinct $j,j'\in \{1,\ldots,i-1\}$. We now run
procedure \dl{P1} with $r=i-1$ and $s=2^{k-i}d$ to find---if it exists---a
matching $M_{i}$ such that $\vert M_{i}\sd M_j\vert \geq 2^{k-i+1}d$ holds for
all $j\in \{1,\ldots,i\}$. By exhaustively applying \dl{P1} we get a collection
of perfect matchings $M_1,\ldots,M_q$ such that

\begin{alphaenumerate}
\item for any two distinct integers $i,j\in \{1,\ldots,q\}$, $|M_i\sd M_j|\geq
  2^{k-q+1}d$, and
\item for any other perfect matching $M\notin \{M_1,\ldots,M_q\}$, $\vert M\sd
  M_j\vert \leq 2^{k-q}d$.
\end{alphaenumerate}

Thus, if $k\leq q$, then clearly $\{M_1,\ldots,M_k\}$ is a solution. Otherwise,
let ${\cal M}=\{M_1^{\star},\ldots,M_{k}^{\star}\}$ be a hypothetical solution.
Then for each $M_{i}^{\star}$ there is a \emph{unique} matching $M_j$ in
$\{M_{1},\ldots,M_q\}$ such that $\vert M_j\sd M_{i}^{\star}\vert < 2^{(k-q)}d$
holds (\autoref{clm:uniquemat}). For each $i\in \{1,\ldots,q\}$ we guess the
number $r_i$ of perfect matchings from ${\cal M}$ that are \emph{close} to
$M_i$, and use procedure \dl{P2} to compute a set of $r_i$ diverse perfect
matchings that are close to $M_i$. The union of all the matchings computed for
all $i\in \{1,\ldots,q\}$ form a solution.

We use algebraic methods and color coding to design procedure \dl{P1}. The Tutte
matrix \bfA of an undirected graph $G$ over the field ${\mathbb F}_2[X]$ is
defined as follows, where ${\mathbb F}_2$ is the Galois field on $\{0,1\}$ and
$X=\{x_e ~\colon~ e\in E(G)\}$. The rows and columns of \bfA are labeled with
$V(G)$ and for each $e=\{u,v\}\in E(G)$, $\bfA[u,v]=\bfA[v,u]=x_e$. All other
entries in \bfA are zeros. There is a bijective correspondence between the set
of monomials of $det(\bfA)$ and the set of perfect matchings of $G$. Procedure
\dl{P1} extracts the required matching from $det(\bfA)$ using color coding.
Procedure \dl{P2} is realized using color coding and dynamic programming.
\paragraph*{Related work.} 
Recall that all bases of a matroid have the same size, and that the number of
bases of a matroid on ground set \(E\) is at most \(2^{|E|}\). So using the same
argument as for \autoref{thm:DPM_NP_hard} we get that \probWDB generalizes---via
Turing reductions---the problem of \emph{counting the number of bases} of a
matroid. Each of these reduced \probWDB instances will have \(d=1\), and a
weight function which assigns the weight \(1\) to each element in the ground
set. Counting the number of bases of a matroid is known to be
\textsf{\#P}-complete even for restricted classes of matroids such as
transversal~\cite{Colbourn95}, bircircular~\cite{Gimenez06}, and binary
matroids~\cite{Vertigan98}. Hence we have the following
alternative\footnote{Compare with \autoref{thm:np-hard}.} hardness result for
\probWDB

\begin{theorem}
  \label{thm:WDB_NP_hard}
  \probWDB cannot be solved in time polynomial in \(|E(M)|\) unless \(\classP =
  \classNP\), even when \(d=1\) and every element of the ground set \(E(M)\) has
  weight \(1\).
\end{theorem}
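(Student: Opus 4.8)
The plan is to establish this hardness by a polynomial-time Turing reduction from the problem of \emph{counting the bases} of a matroid, exactly parallel to the argument behind \autoref{thm:DPM_NP_hard} for perfect matchings. The first step is to understand what \probWDB asks for in the special case $d=1$, $\omega\equiv 1$: since every basis of $M$ has the same size $r(M)$, any two distinct bases $B_i\neq B_j$ satisfy $|B_i\sd B_j|\geq 2$, whereas $B_i=B_j$ gives $B_i\sd B_j=\emptyset$; hence, with unit weights, the requirement ``$\omega(B_i\sd B_j)\geq 1$ for all distinct $i,j$'' is equivalent to ``$B_1,\dots,B_k$ are pairwise distinct''. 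Consequently the instance $(M,\mathbf 1,k,1)$ is a \yesinstance of \probWDB if and only if $M$ has at least $k$ distinct bases.

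The second step is a binary search to recover the exact number $N$ of bases of $M$. Every matroid has at least one basis and at most $2^{|E(M)|}$ of them, so $N\in\{1,\dots,2^{|E(M)|}\}$, an interval whose endpoints have bit-length linear in $|E(M)|$; moreover the predicate ``$(M,\mathbf 1,k,1)$ is a \yesinstance'' is monotone non-increasing in $k$, being true precisely for $k\leq N$. Therefore $\Oh(|E(M)|)$ queries to an algorithm for \probWDB, all on instances with $d=1$ and all-ones weights, suffice to pin down the largest such $k$, which equals $N$. In particular, if \probWDB could be solved in time polynomial in $|E(M)|$, this would yield a polynomial-time algorithm that computes $N$.

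The third step is to invoke known counting-hardness. Computing the number of bases is \textsf{\#P}-complete already for binary matroids given by a matrix over $\mathbb{F}_2$~\cite{Vertigan98} — and likewise for transversal~\cite{Colbourn95} and bicircular~\cite{Gimenez06} matroids presented in their natural compact encodings — and from any such encoding one can answer independence-oracle queries in polynomial time (Gaussian elimination, bipartite matching, or a per-component edge count, respectively). So a polynomial-time algorithm for \probWDB, even restricted to $d=1$ and unit weights, would put a \textsf{\#P}-complete function into \textsf{FP}; since \textsf{\#SAT} Turing-reduces to any \textsf{\#P}-complete problem and a zero/nonzero test on the \textsf{\#SAT} value decides \textsf{SAT}, this gives \(\classP=\classNP\).

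I do not expect a genuine obstacle here, as the statement is essentially a corollary of the $d=1$, unit-weight reformulation; the only points that need a little care are the ones already present for \autoref{thm:DPM_NP_hard}: that the search interval $\{1,\dots,2^{|E(M)|}\}$ has logarithm linear in $|E(M)|$, so that binary search uses only $\Oh(|E(M)|)$ oracle calls, and that the matroid class chosen for the \textsf{\#P}-hardness admits a polynomial-time independence oracle, so that the lower bound genuinely applies in the oracle model assumed throughout this paper.
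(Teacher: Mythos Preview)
Your proposal is correct and follows essentially the same argument as the paper: the paper observes that with $d=1$ and unit weights the problem asks whether $M$ has at least $k$ distinct bases, then uses binary search over $\{1,\dots,2^{|E(M)|}\}$ exactly as in \autoref{thm:DPM_NP_hard} to Turing-reduce basis counting to \probWDB, and finally invokes the \textsf{\#P}-completeness of basis counting for transversal, bicircular, and binary matroids. You have in fact been slightly more careful than the paper in noting explicitly that these matroid classes admit polynomial-time independence oracles from their natural encodings, which is needed for the reduction to land in the oracle model assumed here.
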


The study of the parameterized complexity of finding diverse sets of solutions
is a very recent development, and only a handful of results are currently known.
In the work which introduced this notion Baste et al.~\cite{BasteEtAl2019}
showed that diverse variants of a large class of graph problems which are
\classFPT when parameterized by the \emph{treewidth} of the input graph, are
also \classFPT when parameterized by the treewidth and the number of solutions
in the collection. In a second article~\cite{BasteEtAl2019b} the authors show
that for each fixed positive integer \(d\), two diverse variants---one with the
\emph{minimum} Hamming distance of any pair of solutions, and the other with the
\emph{sum} of all pairwise Hamming distances of solutions---of the
\(d\)-\textsc{Hitting Set} problem are \classFPT when parameterized by the size
of the hitting set and the number of solutions. In a recent manuscript on
diverse \classFPT algorithms~\cite{DBLP:journals/corr/abs-2009-04567} the
authors show that the problem of finding \emph{two} maximum-sized matchings in
an undirected graph such that their symmetric difference is at least \(d\), is
\classFPT when parameterized by \(d\). Note that our result on \probPMD
generalizes this to \(k \geq 2\) matchings, \emph{provided} the input graph has
a perfect matching.

In a very recent manuscript Hanaka et al.~\cite{hanaka2020finding} propose a
number of results about finding diverse solutions. We briefly summarize their
results which are germane to our work. For a collection of sets \(X_{1}, \dotsc,
X_{k}\) let \(d_{sum}(X_{1}, \dotsc, X_{k})\) denote the sum of all pairwise
Hamming distances of these sets and let \(d_{min}(X_{1}, \dotsc, X_{k})\) denote
the smallest Hamming distance of any pair of sets in the collection. Hanaka et
al. show that there is an algorithm which takes an independence oracle for a
matroid \(M\) and an integer \(k\) as input, runs in time polynomial in
\((|E(M)| + k)\), and finds a collection \(B_{1}, B_{2}, \dotsc, B_{k}\) of
\(k\) bases of \(M\) which maximizes \(d_{sum}(B_{1}, B_{2}, \dotsc, B_{k})\).
This result differs from our work on \probWDB in two key aspects. They deal with
the unweighted (counting) case, and their diversity measure is the \emph{sum} of
the pairwise symmetric differences, whereas we look at the \emph{minimum}
(weight of the) symmetric difference. These two measures are, in general, not
comparable.

Hanaka et al. also look at the complexity of finding \(k\) matchings \(M_{1},
\dotsc, M_{k}\) in a graph \(G\) where each \(M_{i}\) is of size \(t\). They
show that such collections of matchings maximizing \(d_{min}(M_{1}, \dotsc,
M_{k})\) and \(d_{sum}(M_{1}, \dotsc, M_{k})\) can be found in time
\(2^{\Oh(kt\log(kt))}\cdot |V(G)|^{\Oh(1)}\). The key difference with our work
is that their algorithm looks for matchings of a specified size \(t\) whereas
ours looks for perfect matchings, of size \(t = \frac{|V(G)|}{2}\); note that
this \(t\) does not appear in the exponential part of the running time of our
algorithm (\autoref{thm:matching-fpt}). The manuscript~\cite{hanaka2020finding}
has a variety of other interesting results on diverse \classFPT algorithms as
well.

\paragraph*{Organization of the rest of the paper.} In the next section we
collect together some definitions and preliminary results. In
\autoref{sec:hardness} we prove that \probWDB and \probWDCIS are strongly
\classNP-hard. In \autoref{sec:divBases} we derive our \classFPT and
kernelization algorithms for \probWDB, and in \autoref{sec:divComIndSets} we
show that \probWDCIS is \classFPT. We derive our results for \probPMD in
\autoref{sec:matchings}. We conclude in \autoref{sec:conclusion}.

\section{Preliminaries}\label{sec:prelim}  
We use \(X \sd Y\) to denote the \emph{symmetric difference} \((X \setminus
Y)\cup(Y \setminus X)\) of sets \(X\) and \(Y\). We use \naturals to denote the
set of positive integers.

\paragraph*{Parameterized complexity.}  A parameterized 
problem $\Pi$ is a subset of $\Sigma^*\times {\mathbb N}$, 
where $\Sigma$ is a finite alphabet. We say that a parameterized problem $\Pi$
is {\em fixed parameter tractable (\classFPT)}, if there is an algorithm that given an instance $(x,k)$  of $\Pi$ as input, solves in time $f(k)\vert x\vert^{\Oh(1)}$,  where $f$ is an arbitrary function and $\vert x\vert$ is the length of $x$.  A kernelization algorithm for a parameterized problem $\Pi$ is a polynomial time algorithm
(computable function) ${\cal A}~:~ \Sigma^*\times {\mathbb N}\rightarrow 
\Sigma^*\times {\mathbb N}$ such that $(x,k)\in \Pi$ if and only if 
$(x',k')={\cal A}((x,k))\in \Pi$ and $\vert x'\vert +k'\leq g(k)$ for some computable function $g$. 
When $g$ is a polynomial function, we say that $\Pi$ admits a polynomial kernel. 
For a detailed overview about parameterized complexity 
we refer to the monographs~\cite{DowneyF13,PCBookCygenetal,fomin_lokshtanov_saurabh_zehavi_2019}

\paragraph*{Matroids.} We give a brief description of the matroid-related
notions that we need. See the book of Oxley~\cite{Oxley92} for a detailed
introduction to matroids. A pair $M=(E,\mathcal{I})$, where $E$ is a finite
\emph{ground} set and $\mathcal{I}$ is a family of subsets of the ground set,
called \emph{independent sets} of $E$, is a \emph{matroid} if it satisfies the
following conditions, called \emph{independence axioms}:
\begin{description}
\item[(I1)]  $\emptyset \in \mathcal{I}$. 
\item[(I2)]  If $A\subseteq B\subseteq E(M) $ and $B\in \mathcal{I}$ then $A\in\mathcal{I}$. 
\item[(I3)] If $A, B  \in \mathcal{I}$  and $ |A| < |B| $, then there is $ e \in  B \setminus A$  such that $A\cup\{e\} \in \mathcal{I}$.
\end{description}
We use $E(M)$ and $\mathcal{I}(M)$ to denote the ground set and the set of
independent sets, respectively. As is standard for matroid problems, we assume
that each matroid \(M\) that appears in the
input 
is given by an \emph{independence oracle}, that is, an oracle that in constant
(or polynomial) time replies whether a given $A\subseteq E(M)$ is independent in
\(M\) or not. An inclusion-wise maximal independent set $B$ is called a
\emph{basis} of $M$. We use $\mathcal{B}(M)$ to denote the set of bases of $M$.
The bases satisfy the following properties, called \emph{basis axioms}:
\begin{description}
\item[(B1)]  $\mathcal{B}(M)\neq\emptyset$. 
\item[(B2)]  If $B_1,B_2\in \mathcal{B}(M)$, then for every $x\in B_1\setminus B_2$, there is $y\in B_2\setminus B_1$ such that $(B_1\setminus\{x\})\cup\{y\}\in \mathcal{B}(M)$. 
\end{description}
All the bases of $M$ have the same size that is called the \emph{rank} of $M$,
denoted \(\rank(M)\). The \emph{rank} of a subset $A\subseteq E(M)$, denoted
$\rank(A)$, is the maximum size of an independent set $X\subseteq A$; the
function $\rank\colon 2^{E(M)}\rightarrow \mathbb{Z}$ is the \emph{rank}
function of $M$. A set $A\subseteq E(M)$ \emph{spans} an element $x\in E(M)$ if
$\rank(A\cup\{x\})=\rank(A)$. The \emph{closure} (or \emph{span}) of $A$ is the
set $\cl(A)=\{x\in E(M)\mid A\text{ spans }x\}$. Closures satisfy the following
properties, called \emph{closure axioms}:
\begin{description}
\item[(CL1)]  For every $A\subseteq E(M)$, $A\subseteq \cl(A)$. 
\item[(CL2)]  If $A\subseteq B\subseteq E(M)$, then $\cl(A)\subseteq \cl(B)$.
\item[(CL3)]  For every $A\subseteq E(M)$, $\cl(A)=\cl(\cl(A))$.
\item[(CL4)]  For every $A\subseteq E(M)$ and every $x\in E(M)$ and $y\in\cl(A\cup\{x\})\setminus \cl(A)$, $x\in \cl(A\cup\{y\})$. 
\end{description}

The {\em dual} of a matroid $M=(E,\mathcal{I})$, denoted \MStar, is the matroid
whose ground set is \(E\) and whose set of bases is
$\mathcal{B}^*=\{\overline{B}\mid B\in \mathcal{B}(M)\}$. That is, the bases of
$M^*$ are exactly the complements of the bases of $M$. A basis (independent set,
rank, respectively) of $M^*$ is a \emph{cobasis} (\emph{coindependent set},
\emph{corank}, respectively) of $M$. We use $\mathcal{I}^*(M)$ to denote the set
of coindependent sets of $M$. Also, $\corank(M)$ denotes the corank of $M$ and
$\corank(A)$ denotes the corank of a set $A\subseteq E(M)$; $\corank(A)$ is the
rank of set \(A\) in the dual matrix \(M^{*}\), and \(\corank(A) = |A| -
\rank(M) + \rank(E \setminus A)\). Given an independence oracle for \(M\) we can
construct---using the augmentation property \dl{(I3)} and with an overhead which
is polynomial in \(|E|\)---a \emph{rank} oracle for \(M\), and thence
\emph{corank} and \emph{coindependence} oracles for \(M\).

For $e\in E(M)$, the matroid $M'=M-e$ is obtained by \emph{deleting} $e$ if
$E(M')=E(M)\setminus\{e\}$ and $\mathcal{I}(M')=\{X\in \mathcal{I}(M)\mid
e\notin X\}$. It is said that $M'=M/e$ is obtained by \emph{contracting} $e$ if
$M'=(\MStar-e)^*$. In particular, if $e$ is not a \emph{loop} (i.e., if
\(\{e\}\) is independent) in \(M\), then $\mathcal{I}(M')=\{X\setminus\{e\}\mid
e\in X\in \mathcal{I}(M)\}$. Notice that deleting an element in $M$ is
equivalent to contracting it in $M^*$ and vice versa. Let $X\subseteq E(M)$.
Then $M-X$ denotes the matroid obtained from $M$ by the deletion of the elements
of $X$ and $M/X$ is the matroid obtained by consecutive contractions of the
elements of $X$. Note that an independence oracle for \(M\) can itself act as an
independence oracle for \(M-X\) if we restrict our queries to subsets of \(E(M)
\setminus X\). Let \(\rank_{M/X}\) denote the rank function of the matroid
\(M/X\). Then for any \(Y\subseteq (E(M) \setminus X)\) we have that
\(\rank_{M/X}(Y) = \rank(X \cup Y) - \rank(X)\)~\cite[3.1.7]{Oxley92}. Given an
independence oracle for \(M\) we can thus easily construct an independence
oracle for \(M/X\).

Let $M$ be a matroid and let $\mathbb{F}$ be a field. An $n\times m$-matrix
$\bfA$ over $\mathbb{F}$ is a \emph{representation of $M$ over $\mathbb{F}$} if
there is one-to-one correspondence $f$ between $E(M)$ and the set of columns of
$\bfA$ such that for any $X\subseteq E(M)$, $X\in \mathcal{I}(M)$ if and only if
the columns $f(X)$ are linearly independent (as vectors of $\mathbb{F}^n$); if
$M$ has such a representation, then it is said that $M$ has a
\emph{representation over $\mathbb{F}$}. In other words, $\bfA$ is a
representation of $M$ if $M$ is isomorphic to the \emph{linear matroid} of
$\bfA$, i.e., the matroid whose ground set is the set of columns of $\bfA$ and a
set of columns is independent if and only if these columns are linearly
independent. Observe that, given a representation $\bfA$ of $M$, we can verify
whether a set is independent by checking the linear independence of the
corresponding columns of $\bfA$. Hence, we don't need an explicit independence
oracle in this case.

Let $1\leq r\leq n$ be integers. We use $U_n^r$ to denote the \emph{uniform} matroid, that is, the matroid with the ground set of size $n$ such that the bases are all $r$-element subsets of the ground set.

\medskip
We use the classical results of Edmonds~\cite{Edmonds70} and Frank~\cite{Frank81} about the \textsc{Weighed Matroid Intersection} problem. The task of this problem is, given two matroids $M_1$ and $M_2$ with the same ground set $E$ and a weight function $\omega\colon E\rightarrow \mathbb{N}$, find a set $X$ of maximum weight such that $X$ is independent in both matroids. Edmonds~\cite{Edmonds70} proved that the problem can be solved in polynomial time
for the unweighted case (that is, the task is to find a common independent set of maximum size; we refer to this variant as \textsc{Matroid Intersection}) and the result was generalized for the variant with the weights by Frank in~\cite{Frank81}. 

\begin{proposition}[\cite{Edmonds70,Frank81}]\label{prop:intersection}
\textsc{Weighted Matroid Intersection} can be solved in polynomial time.
\end{proposition}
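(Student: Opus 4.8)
The plan is to prove the unweighted case first (\probMI, due to Edmonds) and then bootstrap to the weighted case (Frank). For \probMI the standard device is the \emph{exchange graph}. Given a common independent set $X$ of $M_1$ and $M_2$, form a digraph $D_X$ on vertex set $E$ with an arc $y\to x$ whenever $y\in X$, $x\notin X$ and $(X\setminus\{y\})\cup\{x\}\in\mathcal{I}(M_1)$, and an arc $x\to y$ whenever $y\in X$, $x\notin X$ and $(X\setminus\{y\})\cup\{x\}\in\mathcal{I}(M_2)$. Put $S_1=\{x\notin X: X\cup\{x\}\in\mathcal{I}(M_1)\}$ and $S_2=\{x\notin X: X\cup\{x\}\in\mathcal{I}(M_2)\}$. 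First I would show that if $D_X$ has no directed $S_1$--$S_2$ path (counting an element of $S_1\cap S_2$ as a trivial such path), then $X$ has maximum size: letting $T$ be the set of vertices that can reach $S_2$ in $D_X$, one checks $r_1(T)+r_2(E\setminus T)=|X|$, which meets the bound $|X'|\le r_1(T)+r_2(E\setminus T)$ valid for every common independent set $X'$ and every $T\subseteq E$. Conversely, if $P$ is a \emph{shortest} $S_1$--$S_2$ path (fewest arcs, hence chordless in $D_X$), then $X\sd V(P)$ is a common independent set of size $|X|+1$: chordlessness forces the bipartite exchange relation between $X$ and $X\sd V(P)$ to have a \emph{unique} perfect matching on the symmetric-difference elements, which together with \dl{(I3)} and a rank count certifies independence in $M_1$, and symmetrically in $M_2$. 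Iterating this augmentation at most $r_1(E)$ times, each iteration costing $\Oh(|E|^2)$ oracle queries to build $D_X$ plus a breadth-first search, solves \probMI in polynomial time and yields the stated min-max formula.

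For the weighted problem I would run Frank's \emph{weighted augmenting path} scheme. Maintain, for $j=0,1,\dots$, a common independent set $X_j$ with $|X_j|=j$ and with the invariant that $X_j$ has maximum weight among all common independent sets of size $j$. To pass from $X_j$ to $X_{j+1}$, run the augmentation above in $D_{X_j}$ but select the path $P$ lexicographically: first maximise the weight gain $\omega(V(P)\setminus X_j)-\omega(V(P)\cap X_j)$, and then, among all weight-maximal paths, minimise the number of arcs. The key lemma to establish is that $X_{j+1}=X_j\sd V(P)$ with this choice again satisfies the invariant. I would prove it via a potential (dual) argument: equip each vertex with a potential given by a shortest-path distance in $D_{X_j}$ under arc lengths $-\omega(x)$ on arcs leaving $X_j$ and $+\omega(y)$ on arcs entering $X_j$; feasibility of this potential is equivalent to $X_j$ being weight-maximal of its size, and complementary slackness along $P$ then delivers weight-maximality of $X_{j+1}$. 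Independence of $X_{j+1}$ in $M_1$ and in $M_2$ comes, exactly as before, from minimality of $|E(P)|$ among weight-maximal paths, which restores chordlessness within the family of weight-optimal paths. The output of \probWMI is then $\max_{0\le j\le r_1(E)}\omega(X_j)$, so $\Oh(r_1(E))$ augmentations suffice, each polynomial in $|E|$.

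The step I expect to be the main obstacle is the correctness of the weighted augmentation, where one must simultaneously guarantee that $X_{j+1}$ is independent in \emph{both} matroids and that it is weight-maximal of its size; these goals tug the choice of $P$ in opposite directions (independence wants the shortest, i.e.\ chordless, path; weight-maximality wants a weight-extremal one). The reconciliation is the two-level lexicographic rule above, together with the observation that a negative-length directed cycle in the potential digraph would contradict weight-maximality of $X_j$ — so no such cycle exists, the potentials are well defined, and the exchange argument can be carried out relative to the set of weight-optimal paths. Everything else — constructing $D_X$ from the independence (equivalently rank) oracles, the breadth-first and shortest-path computations, and verifying the min-max certificate — is routine and manifestly polynomial.
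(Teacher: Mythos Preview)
The paper does not prove this proposition at all; it is stated as a classical result and simply cited to Edmonds~\cite{Edmonds70} and Frank~\cite{Frank81}, so there is no ``paper's own proof'' to compare against. Your sketch is the standard textbook route (exchange digraph, shortest augmenting paths for the unweighted case, then Frank's weight-splitting/potential argument for the weighted case) and is a perfectly reasonable outline of why the proposition holds. For the purposes of this paper, however, none of that is needed: the authors use \autoref{prop:intersection} purely as a black box in the proofs of \autoref{lem:main}, \autoref{lem:branching}, and \autoref{thm:FPT-WDCIS}, and you would be expected to do the same.
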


We also need another classical result of Edmonds~\cite{Edmonds71} that a basis of maximum weight can be found by the greedy algorithm. Recall that, given a matroid $M$ with a weight function $\omega\colon E(M)\rightarrow \mathbb{N}$, the greedy algorithm finds a basis $B$ of maximum weight as follows. Initially, $B:=\emptyset$. Then at each iteration, the algorithm finds an element of $x\in E(M)\setminus B$ of maximum weight such that $B\cup \{x\}$ is independent  and sets $B:=B\cup\{x\}$. The algorithms stops when there is no element that can be added to $B$.

\begin{proposition}[\cite{Edmonds71}]\label{prop:greedy}
The greedy algorithm finds a basis of maximum weight of a weighted matroid in polynomial time. 
\end{proposition}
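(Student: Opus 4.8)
The plan is to verify the two claims bundled into the statement separately: that the procedure runs in polynomial time, and that the independent set it returns is a basis of maximum weight. Polynomiality is immediate --- the algorithm adds at most $\rank(M)\le |E(M)|$ elements, and within each iteration it scans the at most $|E(M)|$ candidates, making one independence-oracle call per candidate to locate the heaviest $x$ with $B\cup\{x\}$ independent; sorting $E(M)$ by weight once at the outset lets each iteration run with $\Oh(|E(M)|)$ oracle calls. That the output $B$ is a basis follows because the procedure halts precisely when $B$ is an inclusion-wise maximal independent set, and by the exchange axiom \dl{(I3)} every maximal independent set has size $\rank(M)$ and is therefore a basis.

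The substance is optimality, which I would establish by a weight-domination comparison of the greedy output against an arbitrary competitor. Write $B=\{b_1,\dots,b_r\}$ with the elements indexed in the order the greedy algorithm picks them; a one-line induction using axiom \dl{(I2)} shows this order is nonincreasing in weight, so $\omega(b_1)\ge\omega(b_2)\ge\cdots\ge\omega(b_r)$. Let $B'=\{b_1',\dots,b_r'\}$ be any basis, indexed so that $\omega(b_1')\ge\cdots\ge\omega(b_r')$. I would prove the pointwise inequality $\omega(b_i)\ge\omega(b_i')$ for every $i\in\{1,\dots,r\}$; summing over $i$ then yields $\omega(B)\ge\omega(B')$, and since $B'$ is arbitrary this shows $B$ has maximum weight among all bases.

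For the pointwise inequality I would argue by contradiction. Suppose $i$ is the least index with $\omega(b_i)<\omega(b_i')$, and set $A=\{b_1,\dots,b_{i-1}\}$ and $C=\{b_1',\dots,b_i'\}$. Both $A$ and $C$ are independent (as subsets of bases) and $|A|=i-1<i=|C|$, so by \dl{(I3)} there is an element $e\in C\setminus A$ with $A\cup\{e\}\in\mathcal{I}(M)$. Since $e$ is among $b_1',\dots,b_i'$ and these are sorted in nonincreasing weight, $\omega(e)\ge\omega(b_i')>\omega(b_i)$. But at the iteration in which the greedy algorithm selected $b_i$ its current partial basis was exactly $A$, and $e\notin A$ with $A\cup\{e\}$ independent, so $e$ was an admissible candidate at that step; preferring $b_i$ over it forces $\omega(b_i)\ge\omega(e)$, contradicting $\omega(e)>\omega(b_i)$. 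Hence no such $i$ exists.

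The only place that needs a touch of care --- and the step I expect to be the main (if minor) obstacle --- is exactly that last cross-check: confirming that the element $e$ produced by the exchange axiom was genuinely available to the greedy algorithm at the moment it chose $b_i$, which requires identifying $A$ with the set of elements chosen strictly before $b_i$, noting $e\notin A$, and using $A\cup\{e\}\in\mathcal{I}(M)$. (Positivity of $\omega$, or simply the stated stopping rule, is what guarantees the greedy set actually grows to a full basis rather than halting early, so that the indices $1,\dots,r$ above really do match up.) Everything else is bookkeeping.
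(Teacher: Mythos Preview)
Your argument is correct and is the standard exchange-axiom proof of the greedy algorithm's optimality on matroids. Note, however, that the paper does not prove this proposition at all: it is stated with a citation to Edmonds~\cite{Edmonds71} and used as a black box, so there is no paper proof to compare against.
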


We need the following observation(See~\cite[Lemma~2.1.10]{Oxley92}).

\begin{observation}\label{obs:ind-coind}
  Let $X$ and $Y$ be disjoint sets such that $X$ is independent and $Y$ is
  coindependent in a matroid $M$. Then there is a basis $B$ of $M$ such that
  $X\subseteq B$ and $Y\cap B=\emptyset$.
\end{observation}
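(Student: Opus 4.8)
The statement to prove is that if $X$ and $Y$ are disjoint subsets of $E(M)$ with $X$ independent and $Y$ coindependent, then some basis $B$ of $M$ satisfies $X \subseteq B$ and $B \cap Y = \emptyset$.

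The plan is to build the desired basis in two stages, first by extending $X$ within the matroid obtained by deleting $Y$, and then by arguing that the resulting set is in fact a basis of $M$. First I would observe that $X \subseteq E(M) \setminus Y$ and that $X$, being independent in $M$, is independent in the restriction $M - Y$ (deletion preserves independence of subsets of the remaining ground set). By the independence axiom \dl{(I3)} (or equivalently by \dl{(B1)} plus repeated augmentation), extend $X$ to a basis $B$ of $M - Y$. By construction $X \subseteq B$ and $B \cap Y = \emptyset$, so it remains only to show that $B$ is a basis of $M$ itself, i.e. that $\rank(B) = \rank(M)$.

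For this I would use the dual characterization. Since $Y$ is coindependent in $M$, it is independent in $M^*$, and hence there is a cobasis $B^*$ of $M$ with $Y \subseteq B^*$. Its complement $E(M) \setminus B^*$ is a basis of $M$ that is disjoint from $Y$; in particular $\rank(M - Y) = \rank(M)$, because deleting a coindependent set cannot drop the rank of the matroid. Combining this with the previous paragraph, $B$ is an independent set of $M$ with $|B| = \rank(M - Y) = \rank(M)$, so $B$ is a basis of $M$, and it already has the two required properties.

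The only genuinely load-bearing fact here is that $\rank(M - Y) = \rank(M)$ when $Y$ is coindependent; everything else is a direct appeal to the independence/basis axioms. I would justify this either via the duality argument above (a basis of $M$ avoiding $Y$ exists because the complement of a cobasis containing $Y$ works, using that $Y$ extends to a cobasis by \dl{(I3)} applied in $M^*$), or equivalently from the corank formula $\corank(Y) = |Y| - \rank(M) + \rank(E \setminus Y)$ recorded in the preliminaries: $Y$ coindependent means $\corank(Y) = |Y|$, which forces $\rank(E \setminus Y) = \rank(M)$, i.e. $\rank(M - Y) = \rank(M)$. This is the single step I would be most careful about; the rest is routine.
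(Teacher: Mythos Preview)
Your argument is correct: extending $X$ to a basis of $M-Y$ and then using either the duality argument or the corank identity $\corank(Y)=|Y|-\rank(M)+\rank(E\setminus Y)$ to conclude $\rank(M-Y)=\rank(M)$ is exactly the right way to finish, and both justifications you give for that rank equality are valid.

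As for comparison with the paper: the paper does not actually prove this observation. It simply records it with a pointer to Oxley~\cite[Lemma~2.1.10]{Oxley92}. So your proposal is not so much a different route as the only route on offer; it is the standard textbook argument and would serve perfectly well if a self-contained proof were wanted here.
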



Observe that for any sets $X$ and $Y$ that are subsets of the same universe, $X\sd Y=\overline{X}\sd\overline{Y}$. This implies the following.

\begin{observation}\label{obs:dual}
  For every matroid $M$, every weight function $\omega\colon
  E(M)\rightarrow\mathbb{N}$, and all integers $k\geq 1$ and $d\geq 0$, the
  instances $(M,\omega,k,d)$ and $(M^*,\omega,k,d)$ of \probWDB are equivalent.
\end{observation}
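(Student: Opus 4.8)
The plan is to prove \autoref{obs:dual} directly from the definition of the symmetric difference together with the fact that bases of \(M^{*}\) are exactly the complements of bases of \(M\). The key observation, already flagged in the sentence preceding the statement, is that for any two subsets \(X, Y\) of a common universe \(E\) we have \(X \sd Y = \overline{X} \sd \overline{Y}\); indeed an element lies in exactly one of \(X, Y\) if and only if it lies in exactly one of \(\overline{X}, \overline{Y}\). I would first state and (briefly) justify this set-theoretic identity, then lift it to the matroid setting.

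Concretely, suppose \((M, \omega, k, d)\) is a \yesinstance of \probWDB, witnessed by bases \(B_{1}, \dotsc, B_{k}\) of \(M\) with \(\omega(B_{i} \sd B_{j}) \geq d\) for all distinct \(i, j\). For each \(i\) set \(B_{i}' = \overline{B_{i}} = E(M) \setminus B_{i}\). By the definition of the dual matroid, each \(B_{i}'\) is a basis of \(M^{*}\). By the identity above, \(B_{i}' \sd B_{j}' = \overline{B_{i}} \sd \overline{B_{j}} = B_{i} \sd B_{j}\), so \(\omega(B_{i}' \sd B_{j}') = \omega(B_{i} \sd B_{j}) \geq d\) for all distinct \(i, j\). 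Hence \(B_{1}', \dotsc, B_{k}'\) witness that \((M^{*}, \omega, k, d)\) is a \yesinstance. Since \((M^{*})^{*} = M\), the converse implication follows by the same argument applied to \(M^{*}\), so the two instances are equivalent.

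There is essentially no obstacle here: the only thing to be careful about is that the weight function \(\omega\) is the same function on the same ground set \(E(M) = E(M^{*})\) in both instances, so that \(\omega(B_{i} \sd B_{j})\) is literally the same number whether we view the sets inside \(M\) or inside \(M^{*}\); this is immediate since the symmetric differences coincide as sets. One could alternatively phrase the whole proof as a single line: the map \(B \mapsto \overline{B}\) is a bijection from \(\mathcal{B}(M)\) to \(\mathcal{B}(M^{*})\) that preserves pairwise symmetric differences, hence preserves the property of being a diverse collection of \(k\) bases of weight-diversity at least \(d\).
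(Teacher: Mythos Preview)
Your proof is correct and is exactly the argument the paper intends: it states the identity \(X \sd Y = \overline{X} \sd \overline{Y}\) in the sentence immediately preceding the observation and simply declares that the observation follows, so you have merely spelled out the (trivial) details the paper omits.
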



\section{Hardness of \probWDB and \probWDCIS}\label{sec:hardness}
We show that \probWDB and \probWDCIS are \classNP-complete in the strong sense even for uniform matroids. 

\nphardness*

\begin{proof}

  We prove the claim for \probWDB by a reduction from the \textsc{$3$-Partition}
  problem. The input to \textsc{$3$-Partition} consists of a positive integer
  \(b\) and a multiset $S=\{s_1,\ldots,s_{3n}\}$ of $3n$ positive integers such
  that (i) $\frac{b}{4} < s_{i}< \frac{b}{2}$ holds for each
  $i\in\{1,\ldots,3n\}$ and (ii) \(\sum_{i=1}^{3n}s_{i} = nb\). The task is to
  decide whether $S$ can be partitioned into $n$ multisets $S_1,\ldots,S_n$ such
  that \(\sum_{s \in S_{i}}s = b\) holds for each \(S_{i}\). Note that each
  multiset \(S_{i}\) in such a partition must contain exactly three elements
  from \(S\). This problem is known to be \classNP-complete in the strong sense,
  i.e., it is \classNP-complete even if the input integers are encoded in
  unary~\cite[SP15]{GareyJ79}.

  Let $(b, S=\{s_1,\ldots,s_{3n})\}$ be an instance of \textsc{$3$-Partition}
  with $n\geq 3$. We set \(M\) to be the uniform matroid $U_{3n}^3$ on the
  ground set $\{1,\ldots,3n\}$, and define the weight function to be
  $\omega(i)=s_i$ for $i\in\{1,\ldots,3n\}$. We set $d=2b$. We will now show
  that $(b, S)$ is a yes-instance of \textsc{$3$-Partition} if and only if
  $(M,\omega,n,d)$ is a yes-instance of \probWDB.

  In the forward direction, suppose that $S_1,\ldots,S_n$ is a partition of $S$
  into triples of integers such that the sum of elements of each $S_i$ is $b$.
  Let $B_1,\ldots,B_n$ be the corresponding partition of $\{1,\ldots,3n\}$, that
  is, $B_i=\{i_1,i_2,i_3\}$ if and only if $S_i=\{s_{i_1},s_{i_2},s_{i_3}\}$ for
  each $i\in\{1,\ldots,n\}$. Clearly, $B_1,\ldots,B_n$ are pairwise disjoint
  bases of $M$. Then for every distinct $i,j\in\{1,\ldots,n\}$, $\omega(B_i\sd
  B_j) = \omega(B_{i}) + \omega(B_{j}) = 2b$. Therefore, $(M,\omega,n,d)$ is a
  yes-instance of \probWDB.

  In the reverse direction, assume that $(M,\omega,n,d)$ is a yes-instance of
  \probWDB. Let $B_1,\ldots,B_n$ be bases of $M$ such that $\omega(B_i\sd
  B_j)\geq d=2b$ for distinct $i,j\in\{1,\ldots,n\}$.

  We claim that $B_1,\ldots,B_n$ are pairwise disjoint. For the sake of
  contradiction, assume that there are distinct $i,j\in\{1,\ldots,n\}$ such that
  $B_i\cap B_j\neq\emptyset$. Let $X_1=B_i\setminus B_j$ and $X_2=B_j\setminus
  B_i$. Note that $|X_1|\leq 2$ and $|X_2|\leq 2$. We have that
  $\omega(X_1)=\sum_{h\in X_1}\omega(h)=\sum_{h\in X_1}s_h<|X_1|b/2 < b$.
  Similarly, $\omega(X_2)<b$. Therefore, $\omega(B_i\sd
  B_j)=\omega(X_1)+\omega(X_2)<2b$; a contradiction. We conclude that the bases
  $B_1,\ldots,B_n$ are pairwise disjoint. This implies that $B_1,\ldots,B_n$ is
  a \emph{partition} of \(\{1, 2, \dotsc, 3n\}\).

  Next we show that $\omega(B_i)=b$ holds for every $i\in\{1,\ldots,n\}$.
  Suppose that there is an $h\in\{1,\ldots,n\}$ such that $\omega(B_h)>b$. Let
  $I=\{1,\ldots,3n\}\setminus B_h$ and $J=\{1,\ldots,n\}\setminus\{h\}$. We have
  that $\sum_{i\in I}\omega(i)<\sum_{i=1}^{3n}\omega(i)-b=b(n-1)$.
  Since $B_1,\ldots,B_{h-1},B_{h+1}\ldots,B_n$ form a partition of $I$, we get
  that $\sum_{i\in J}\omega(B_i)<b(n-1)$ holds as well. Recall that $n\geq 3$.
  Then
  \begin{equation*}
    \sum_{ \{i,j\}\text{ s.t. }i,j\in J,~i\neq j}(\omega(B_i)+\omega(B_j)) \leq (n-2)\sum_{i\in J}\omega(B_i)<b(n-1)(n-2).
  \end{equation*}
  The first inequality above comes from the fact that since \(|J| = n-1\), for
  each index \(i \in J\) the term \(\omega(B_{i})\) appears in \emph{at most}
  \(n-2\) terms of the form \((\omega(B_i)+\omega(B_j))\) in the summation on
  the left hand side. Now suppose \((\omega(B_i)+\omega(B_j)) \geq 2b\) holds
  for \emph{all} pairs \(i,j \in J,\;i\neq j\). Then the sum on the left hand
  side would be at least \(\binom{|J|}{2} \cdot 2b = (n-1)(n-2)b\), a
  contradiction. Therefore, there must exist distinct $i,j\in J$ such that
  $\omega(B_i)+\omega(B_j)<2b$ holds. And this contradicts our assumption that
  $\omega(B_i\sd B_j)\geq 2b$ holds for all such
  \(i,j\). 
  We conclude that $\omega(B_i)\leq b$ holds for every $i\in \{1,\ldots,n\}$.
  And since $\sum_{i=1}^n\omega(B_i)=bn$, we get that $\omega(B_i)=b$ holds for
  every $i\in\{1,\ldots,n\}$.

  Finally, we consider the partition $S_1,\ldots,S_n$ of $S$ corresponding to
  $B_1,\ldots,B_i$, that is, for each $B_i=\{i_1,i_2,i_3\}$, we define
  $S_i=\{s_{i_1},s_{i_2},s_{i_3}\}$. Clearly,
  $s_{i_1}+s_{i_2}+s_{i_3}=\omega(B_i)=b$. Thus we get that $(b, S)$ is a
  yes-instance of \textsc{$3$-Partition}. This concludes the proof for \probWDB.

  The reduction for \probWDCIS is also from \textsc{$3$-Partition}, and is
  nearly identical to the above reduction for \probWDB. Given an instance $(b,
  S=\{s_1,\ldots,s_{3n})\}$ of \textsc{$3$-Partition} with $n\geq 3$, we set
  each of \(M_{1}, M_{2}\) to be the uniform matroid $U_{3n}^3$ on the ground
  set $\{1,\ldots,3n\}$, and define the weight function to be $\omega(i)=s_i$
  for $i\in\{1,\ldots,3n\}$. We set $d=2b$. We will now show that $(b, S)$ is a
  yes-instance of \textsc{$3$-Partition} if and only if
  $(M_{1},M_{2},\omega,n,d)$ is a yes-instance of \probWDCIS.

  In the forward direction, suppose that $S_1,\ldots,S_n$ is a partition of $S$
  into triples of integers such that the sum of elements of each $S_i$ is $b$.
  Let $I_1,\ldots,I_n$ be the corresponding partition of $\{1,\ldots,3n\}$, that
  is, $I_i=\{i_1,i_2,i_3\}$ if and only if $S_i=\{s_{i_1},s_{i_2},s_{i_3}\}$ for
  each $i\in\{1,\ldots,n\}$. Clearly, $I_1,\ldots,I_n$ are pairwise disjoint
  common independent sets of $M_{1}$ and \(M_{2}\), and for every distinct
  $i,j\in\{1,\ldots,n\}$, $\omega(I_i\sd I_j) = \omega(I_{i}) + \omega(I_{j}) =
  2b$. Therefore, $(M_{1},M_{2},\omega,n,d)$ is a yes-instance of \probWDCIS.

  In the reverse direction, assume that $(M_{1},M_{2},\omega,n,d)$ is a
  yes-instance of \probWDCIS, and let $I_1,\ldots,I_n$ be common independent
  sets of $M_{1}$ and \(M_{2}\) such that $\omega(I_i\sd I_j)\geq d=2b$ for
  distinct $i,j\in\{1,\ldots,n\}$. Since every independent set in the matroids
  \(M_{1},M_{2}\) has at most three elements, and since $s_{i}< \frac{b}{2}$
  holds for each $i\in\{1,\ldots,3n\}$, we get that the sets $I_1,\ldots,I_n$
  are pairwise disjoint. If two of these sets, say \(I_{i},I_{j}\) have at most
  two elements each then \(\omega(I_i\sd I_j) < 4\cdot \frac{b}{2} = 2b\), a
  contradiction. So at most one of these sets has at most two elements; every
  other set in the collection has exactly three elements.

  If all the sets $I_1,\ldots,I_n$ have three elements each then they are a
  pairwise disjoint collection of \(n\) \emph{bases} of $M_{1}$, and the
  argument that we used for the reverse direction in the proof for \probWDB
  tells us that $(b, S)$ is a yes-instance of \textsc{$3$-Partition}. In the
  remaining case there is exactly one set of size two among $I_1,\ldots,I_n$;
  without loss of generality, let this smaller set be \(I_{1}\). Then
  \(|\bigcup_{i=1}^{n}I_{i}| = 3n-1\). Let \(x = \{1, 2, \dotsc, 3n\} \setminus
  \bigcup_{i=1}^{n}I_{i}\) be the unique element which is not in any of these
  independent sets. Then \((I_{1} \cup \{x\}),\ldots,I_n\) is a pairwise
  disjoint collection of \(n\) bases of $M_{1}$ such that the weight of the
  symmetric difference of any pair of these bases is at least \(d=2b\), and the
  argument that we used for the reverse direction in the proof for \probWDB
  tells us that $(b, S)$ is a yes-instance of \textsc{$3$-Partition}.
\end{proof}


\section{An FPT algorithm and kernelization for \probWDB}\label{sec:divBases}
In this section, we show that \probWDB is \classFPT when parameterized by $k$ and $d$. Moreover, if the input matroid is representable over a finite field and is given by such a representation, then \probWDB admits a polynomial kernel. 

We start with the observation that if the input matroid has a sufficiently big
set that is simultaneously independent and coindependent, then diverse bases
always exist.

\begin{lemma}\label{lem:ind-coind}
  Let $M$ be a matroid, and let $k\geq 1$ and $d\geq 0$ be integers. If there is
  $X\subseteq E(M)$ of size at least $k\lceil\frac{d}{2}\rceil$ such that $X$ is
  simultaneously independent and coindependent, then $(M,\omega,k,d)$ is a
  yes-instance of \probWDB for any weight function $\omega$.
\end{lemma}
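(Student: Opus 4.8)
The plan is to use the set $X$ — which is both independent and coindependent, with $|X| \geq k\lceil d/2 \rceil$ — as a ``reservoir'' of elements that we can freely include in or exclude from a basis (via \autoref{obs:ind-coind}), and to partition $X$ into $k$ roughly-equal blocks that will serve to mutually differentiate the $k$ bases we construct.

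First I would partition $X$ into $k$ pairwise disjoint subsets $X_1, \dots, X_k$, each of size at least $\lceil d/2 \rceil$; this is possible precisely because $|X| \geq k\lceil d/2 \rceil$ (any leftover elements can be dumped into one of the blocks, or just ignored). For each $i \in \{1,\dots,k\}$, set $Y_i = X \setminus X_i$. Then $X_i$ is independent in $M$ (as a subset of the independent set $X$) and $Y_i$ is coindependent in $M$ (as a subset of the coindependent set $X$), and $X_i \cap Y_i = \emptyset$. By \autoref{obs:ind-coind}, there is a basis $B_i$ of $M$ with $X_i \subseteq B_i$ and $Y_i \cap B_i = \emptyset$; in other words, $B_i \cap X = X_i$.

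Now I would check diversity. For distinct $i, j \in \{1,\dots,k\}$, consider $B_i \sd B_j$. Since $B_i \cap X = X_i$ and $B_j \cap X = X_j$ and the blocks are pairwise disjoint, we have $(B_i \sd B_j) \cap X = X_i \sd X_j = X_i \cup X_j$, which has size $|X_i| + |X_j| \geq 2\lceil d/2 \rceil \geq d$. Since every element of the ground set — in particular every element of $X_i \cup X_j$ — has weight at least $1$ (weights are in $\naturals$), we get $\omega(B_i \sd B_j) \geq \omega((B_i \sd B_j) \cap X) \geq |X_i \cup X_j| \geq d$. Hence $B_1, \dots, B_k$ witness that $(M, \omega, k, d)$ is a yes-instance.

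I do not expect any real obstacle here: the only thing to be a little careful about is the rounding in splitting $X$ into $k$ parts of size $\lceil d/2 \rceil$ (which is exactly why the bound in the statement is $k\lceil d/2\rceil$ rather than $kd/2$), and the fact that we only need the \emph{weighted} symmetric difference to be at least $d$, which follows for free from positivity of the weights once the \emph{unweighted} symmetric difference restricted to $X$ is at least $d$. The key tool is \autoref{obs:ind-coind}, applied $k$ times with a different block removed each time.
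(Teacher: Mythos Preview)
Your proposal is correct and follows essentially the same approach as the paper: partition $X$ into $k$ blocks of size at least $\lceil d/2\rceil$, apply \autoref{obs:ind-coind} to each pair $(X_i,\,X\setminus X_i)$ to obtain bases $B_i$ with $B_i\cap X=X_i$, and conclude that $X_i\cup X_j\subseteq B_i\sd B_j$ so that $\omega(B_i\sd B_j)\ge |X_i|+|X_j|\ge d$. The only cosmetic difference is that you phrase the key inclusion via $(B_i\sd B_j)\cap X$, whereas the paper states $X_i\cup X_j\subseteq B_i\sd B_j$ directly.
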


\begin{proof}
  Let $X\subseteq E(M)$ be a set of size at least $k\lceil\frac{d}{2}\rceil$
  such that $X$ is simultaneously independent and coindependent. Then there is a
  partition $X_1,\ldots,X_k$ of $X$ such that $|X_i|\geq
  \lceil\frac{d}{2}\rceil$ for every $i\in\{1,\ldots,k\}$. Let
  $i\in\{1,\ldots,k\}$. Since $X$ is independent, $X_i$ is independent, and
  since $X$ is coindependent, then $X\setminus X_i$ is coindependent. Then by
  \autoref{obs:ind-coind}, there is a basis $B_i$ of $M$ such that $X_i\subseteq
  B_i$ and $B_i\cap(X\setminus X_i)=\emptyset$. The latter property means that
  $B_i\cap X_j=\emptyset$ for every $j\in \{1,\ldots,k\}$ such that $j\neq i$.
  We consider the bases $B_i$ defined in this manner for all
  $i\in\{1,\ldots,k\}$. Then for every distinct $i,j\in\{1,\ldots,k\}$, $X_i\cup
  X_j\subseteq B_i\sd B_j$. Therefore, $\omega(B_i\sd B_j)\geq \omega(X_i\cup
  X_j)\geq |X_i\cup X_j|=|X_i|+|X_j|\geq 2\lceil\frac{d}{2}\rceil\geq d$ for any
  $\omega\colon E(M)\rightarrow\mathbb{N}$. Hence, $(M,\omega,k,d)$ is a
  yes-instance of \probWDB.
 \end{proof}

Our results are based on the following lemma.

\begin{lemma}\label{lem:main}
  There is an algorithm that, given an instance $(M,\omega,k,d)$ of \probWDB,
  runs in time polynomial in \((|E(M)| + k + d)\) and either correctly decides
  that $(M,\omega,k,d)$ is a yes-instance or outputs an equivalent instance
  $(\widetilde{M},\omega,k,d)$ of \probWDB such that $E(\widetilde{M})\subseteq
  E(M)$ and $|E(\widetilde{M})|\leq 2\lceil\frac{d}{2}\rceil^2k^3$. In the
  latter case, the algorithm also computes a partition $(L,L^*)$ of
  $E(\widetilde{M})$ with the property that for every basis $B$ of
  $\widetilde{M}$, $|B\cap L|\leq\lceil\frac{d}{2}\rceil k$ and $|L^*\setminus
  B|\leq \lceil\frac{d}{2}\rceil k$, and the algorithm outputs an independence
  oracle for $\widetilde{M}$ that answers queries for $\widetilde{M}$ in time
  polynomial in $|E(M)|$. Moreover, if $M$ is representable over a finite field
  $\mathbb{F}$ and is given by such a representation, then the algorithm outputs
  a representation of $\widetilde{M}$ over $\mathbb{F}$.
\end{lemma}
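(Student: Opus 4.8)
The plan is to start with a maximum-weight basis $B_0$ of $M$ (computed by the greedy algorithm, \autoref{prop:greedy}) and to iteratively ``purify'' it by a bounded number of greedy-type recomputations, so that in the end the ground set is split into a small ``core'' part and a large part that can be handled by \autoref{lem:ind-coind}. Concretely, the key observation I would exploit is that a basis $B_0$ together with an element $e \notin B_0$ induces a fundamental circuit, and symmetrically for a cobasis: so the ``interesting'' structure near $B_0$ that can participate in a diverse family of bases lives inside a set controlled by the spans $\cl(\cdot)$ and their duals. First I would set $L := B_0$ and $L^* := E(M)\setminus B_0$, and then repeatedly do the following: if there is an element of $E(M)\setminus B_0$ whose addition to $B_0$ can be ``swapped'' to create a heavier or structurally new basis, recompute a maximum-weight basis avoiding the already-committed core; each such recomputation either shrinks the part of $E(M)$ we still need to look at or adds a bounded number of elements to the core $S^\ast$. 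Since every basis of $M$ has the same size and weights are positive integers bounded by the instance, the number of rounds is polynomial in $(|E(M)|+k+d)$.

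The engine of the bound is the following dichotomy, applied after the purification stabilizes: either the part of $E(M)$ outside the current core contains a set of size at least $k\lceil\frac{d}{2}\rceil$ that is simultaneously independent and coindependent in $M$ --- in which case \autoref{lem:ind-coind} lets the algorithm immediately report \yes --- or no such set exists, and then a counting/exchange argument forces the core $S^\ast$ to have size at most $2\lceil\frac{d}{2}\rceil^2 k^3$. Intuitively, if $S^\ast$ were larger, then among the elements that are ``deletable'' (coindependent extensions of parts of $B_0$) and ``insertable'' (independent, spanned only far out) one could greedily extract a large independent-and-coindependent set, contradicting the case assumption. To build $\widetilde M$ I would take $\widetilde M := (M/X)-Y$ where $X$ is the set of core elements that lie in \emph{every} basis we care about and $Y$ the set of core elements that lie in \emph{no} such basis --- more precisely, $X$ is the part of $B_0$ outside a controlled window and $Y$ the complementary part of $E(M)\setminus B_0$ --- using the observations in the Preliminaries that contraction and deletion both preserve oracle access and finite-field representability. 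Equivalence of $(\widetilde M,\omega,k,d)$ and $(M,\omega,k,d)$ follows because every diverse family of bases of $M$ can be assumed (by an exchange argument, replacing out-of-window elements of equal or larger weight) to agree with $B_0$ on $X$ and to avoid $Y$, so it restricts to a diverse family of bases of $\widetilde M$, and conversely any diverse family in $\widetilde M$ lifts by adding back $X$.

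For the promised partition $(L,L^\ast)$ of $E(\widetilde M)=S^\ast$ I would take $L := B_0 \cap S^\ast$ and $L^\ast := S^\ast \setminus B_0$, i.e., the traces of the reference basis and its complement. The bound $|B\cap L| \le \lceil\frac{d}{2}\rceil k$ for every basis $B$ of $\widetilde M$, and symmetrically $|L^\ast\setminus B|\le\lceil\frac{d}{2}\rceil k$, is exactly what the purification process is engineered to guarantee: the only elements of $B_0$ retained in the core are those sitting in a window of size $\lceil\frac{d}{2}\rceil k$ around which basis exchanges can happen, and the dual statement gives the cobasis side via \autoref{obs:dual}. Finally, the oracle for $\widetilde M$ is obtained by composing the construction $M \mapsto M/X \mapsto (M/X)-Y$, each step of which, by the Preliminaries, turns an independence oracle for $M$ into one for the result with polynomial overhead; the same composition applied to a finite-field matrix (row operations to clear the contracted columns, then column deletion) yields a representation of $\widetilde M$ over $\mathbb F$.

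\textbf{Main obstacle.} I expect the crux to be the size bound $|S^\ast|\le 2\lceil\frac{d}{2}\rceil^2k^3$ in the ``no large independent-coindependent set'' case: one must argue that the purification terminates with a core this small, which requires carefully quantifying how many rounds can add new elements and how many elements each round contributes, and simultaneously arranging the window of size $\lceil\frac{d}{2}\rceil k$ so that the partition property for $(L,L^\ast)$ holds for \emph{all} bases of $\widetilde M$, not just the ones reachable by single swaps from $B_0$. Getting the exchange argument for the equivalence to play well with the weights --- in particular ensuring that replacing an out-of-window element by an in-window one never decreases $\omega$ of a symmetric difference below $d$ --- is the delicate point, and is presumably where the factor $\lceil\frac{d}{2}\rceil$ (rather than $d$) and the cubic-in-$k$ dependence come from.
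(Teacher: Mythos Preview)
Your proposal assembles the right ingredients---the greedy algorithm, the independent-and-coindependent win condition of \autoref{lem:ind-coind}, an exchange argument for equivalence, and a deletion/contraction to produce $\widetilde{M}$---but the logical flow is inverted in a way that breaks the size bound. In the paper the independent--coindependent test via \textsc{Matroid Intersection} comes \emph{first}: its failure immediately gives $\rank(\overline{B})\le k\lceil d/2\rceil-1$ and $\corank(B)\le k\lceil d/2\rceil-1$ for \emph{every} basis $B$, and this is what bounds the size of each iteration. You instead iterate ``until convergence'' and only afterwards try to extract a large independent-coindependent set from what remains; there is no convergence mechanism here, and the claim that ``if $S^\ast$ were larger one could greedily extract a large independent-coindependent set'' is not how the bound is obtained. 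The paper runs a \emph{fixed} number $\ell=\lceil d/2\rceil k^2$ of rounds, each round adding a max-weight basis $X_i$ of $M-S_{i-1}$; since $X_i\subseteq\overline{B}$ each has size at most $k\lceil d/2\rceil-1$, so $|L|=|S\setminus B|\le\ell(k\lceil d/2\rceil-1)\le\lceil d/2\rceil^2k^3$ directly.

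The specific choice $\ell=\lceil d/2\rceil k^2$ is not arbitrary: it is exactly what makes the equivalence argument go through. Given a hypothetical solution $B_1,\ldots,B_k$, one has $|\bigcup_i(B_i\setminus B)|\le k(k\lceil d/2\rceil-1)<\ell$, so by pigeonhole some $X_t$ is disjoint from \emph{all} the $B_i$. Then for any $B_h\not\subseteq S$ one replaces $Y=B_h\setminus S_{t-1}$ by a greedily chosen $Z\subseteq X_t$ with $\omega(Z)\ge\omega(Y)$; disjointness of $X_t$ from every $B_j$ is what guarantees $\omega(B_h'\sd B_j)\ge\omega(B_h\sd B_j)$. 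Your sketch of the exchange argument does not identify this pigeonhole step, which is the crux. Two further corrections: the construction is explicitly two-phase (first delete $E(M)\setminus S$ to get $\widehat{M}$, then rerun the entire argument on $\widehat{M}^*$ and contract $B\setminus S^*$), not a single simultaneous purification; and your partition is swapped---the paper takes $L\subseteq\overline{B}$ and $L^*\subseteq B$, which is what makes $|B'\cap L|\le\rank(\overline{B})$ and $|L^*\setminus B'|\le\corank(B)$ hold, whereas with your choice $L=B_0\cap S^*$ the bound $|B'\cap L|\le\lceil d/2\rceil k$ would fail for $B'$ close to $B_0$.
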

  
\begin{proof}
  Let $(M,\omega,k,d)$ be an instance of \probWDB. Recall that \(M\) is given as
  an independence oracle. We construct an independence oracle for the dual
  matroid $M^*$, and then solve \textsc{Matroid Intersection} for $M$ and $M^*$
  using \autoref{prop:intersection}. Let $X$ be the set computed by the
  \textsc{Matroid Intersection} algorithm. Then $X\subseteq E(M)$ is a set of
  maximum size that is both independent and coindependent in $M$. If $|X|\geq
  k\lceil\frac{d}{2}\rceil$, then $(M,\omega,k,d)$ is a yes-instance of \probWDB
  by \autoref{lem:ind-coind}; the problem is solved and we return the answer.

  Assume from now on that this is not the case, and that $|X|\leq
  k\lceil\frac{d}{2}\rceil-1$ holds. Let \(B\) be an arbitrary basis of \(M\),
  and let \(\overline{B} = (E(M) \setminus B)\). If \(\rank(\overline{B})\geq
  k\lceil\frac{d}{2}\rceil\) then there exists an independent set \(Y\subseteq
  \overline{B}\) of size at least \(k\lceil\frac{d}{2}\rceil\). But \(Y\) is
  also a \emph{coindependent} set of size at least \(k\lceil\frac{d}{2}\rceil\),
  which contradicts our assumption. Thus we get that $\rank(\overline{B})\leq
  k\lceil\frac{d}{2}\rceil-1$ and $\corank(B)\leq k\lceil\frac{d}{2}\rceil-1$
  hold for any basis $B$ of $M$.

  Let $\ell=\lceil\frac{d}{2}\rceil k^2$. Fix an arbitrary basis $B$ of $M$. We
  construct sets $S_0,\ldots,S_\ell$ iteratively. We set $S_0=B$. For $i\geq 1$
  we construct $S_i$ from \(S_{(i-1)}\) as follows. If $E(M)\setminus
  S_{i-1}=\emptyset$, we set $X_i=\emptyset$. Otherwise we set \(X_{i}\) to be a
  basis of maximum weight in the matroid $M-S_{i-1}$; we find \(X_{i}\) using
  the greedy algorithm (see \autoref{prop:greedy}). Finally, we set
  $S_i=S_{i-1}\cup X_i$.

  Let $S=S_\ell$ and $L=S\setminus B$. Since $\rank(\overline{B})\leq
  k\lceil\frac{d}{2}\rceil-1$, we get that \emph{every} independent set
  contained in the set \(\overline{B} = (E(M) \setminus B)\) has size at most
  \(k\lceil\frac{d}{2}\rceil-1\). And since \(L\) is a disjoint union of
  \(\ell\) such independent sets we get that $|L|=|S\setminus B|\leq \ell
  (k\lceil\frac{d}{2}\rceil-1)\leq \lceil\frac{d}{2}\rceil^2k^3$. We show the
  following crucial claim.

\begin{claim}\label{cl:kern}
If $(M,\omega,k,d)$ is a yes-instance of \probWDB, then there is a solution, that is, a family of bases $B_1,\ldots,B_k$ such that  $\omega(B_i \sd B_j)\geq d$ for all distinct $i,j\in\{1,\ldots,k\}$, with the property that $B_i\subseteq S$ for every $i\in\{1,\ldots,k\}$. 
\end{claim}

\begin{claimproof}
  Let $(M,\omega,k,d)$ be a yes-instance, and let the family of bases
  $B_1,\ldots,B_k$ be a solution which maximizes the size of the set
  \(((\bigcup_{i=1}^{k}B_{i}) \cap S)\) of vertices in the bases which are also
  in the set $S$. We show that $B_i\subseteq S$ holds for every
  $i\in\{1,\ldots,k\}$. The proof is by contradiction. Assume that there is an
  $h\in\{1,\ldots,k\}$ such that $B_h\setminus S\neq\emptyset$. Recall that
  $\rank(M-B)=\rank(\overline{B})\leq k\lceil\frac{d}{2}\rceil-1$. Therefore,
  $|B_i\setminus B|\leq k\lceil\frac{d}{2}\rceil-1$ holds for every
  $i\in\{1,\ldots,k\}$, and $|\cup_{i=1}^k(B_i\setminus B)|\leq
  k(k\lceil\frac{d}{2}\rceil-1)<\ell$. Let $X_1,\ldots,X_\ell$ be the
  independent sets used to construct the sets $S_1,\ldots,S_\ell$. Since
  \((X_{j} \cap B) =\emptyset\) holds for all $j\in\{1,\ldots, \ell\}$ we get
  that \((X_{j} \cap B_{i}) = (X_{j} \cap (B_{i} \setminus B))\) holds for all
  \(i\in\{1,\ldots, k\}, j\in\{1,\ldots, \ell\}\). So the number
  \(|\cup_{i=1}^k(B_i\setminus B)|\) of elements from the bases $B_1,\ldots,B_k$
  which could \emph{potentially} be part of any of the sets $X_1,\ldots,X_\ell$
  is strictly less than the number of these latter sets. Hence from the
  pigeonhole principle we get that there is a $t\in\{1,\ldots,\ell\}$ such that
  $X_t\cap B_i=\emptyset$ holds for all $i\in\{1,\ldots,k\}$. Let $A=B_h\cap
  S_{(t-1)}$ and 
  $Y=B_h\setminus S_{(t-1)}$. We show that there is $Z\subseteq X_t$ such that
  \begin{itemize}
  \item[(i)] $B_h'=A\cup Z$ is a basis, and
  \item[(ii)] $\omega(Z)\geq \omega(Y)$.
  \end{itemize}

  We construct $Z$ by greedily augmenting $A$ with elements of $X_t$. Let
  \(\sigma\) be the order in which the greedy algorithm picks elements from the
  set \(E(M) \setminus S_{(t-1)}\) to add them to the set \(X_{t}\). Initially
  we set $Z:=\emptyset$. Then we select the first $x\in X_t\setminus Z$ in
  $\sigma$ such that $A\cup Z\cup \{x\}$ is independent, and we set
  $Z:=Z\cup\{x\}$. We stop when there is no $x\in X_t\setminus Z$ such that
  $A\cup Z\cup \{x\}$ is independent. We prove that (i) and (ii) are fulfilled
  for $Z$.

  First we show that (i) holds. From the construction we get that \(X_{t}\) is a
  basis of the matroid \(M-S_{(t-1)}\). This implies that \((E(M) \setminus
  S_{(t-1)}) \subseteq \cl(X_{t})\) holds. Now since \(Y = (B_{h} \setminus
  S_{(t-1)})\) is a subset of \((E(M) \setminus S_{(t-1)})\) we get that
  $Y\subseteq \cl(X_t)$ holds. Since \((A \cup Z)\) is independent, and there is
  no $x\in X_t\setminus Z$ such that $A\cup Z\cup \{x\}$ is independent, we get
  that $X_t\subseteq \cl(A\cup Z)$ holds. Now by \dl{(CL2)} and \dl{(CL3)},
  $Y\subseteq \cl(\cl(A\cup Z))=\cl(A\cup Z)$. And by \dl{(CL1)}, $A\subseteq
  \cl(A\cup Z)$ and we conclude that $A\cup Y\subseteq\cl(A\cup Z)$ holds. But
  \((A \cup Y) = B_{h}\) is a \emph{basis} of \(M\), and
  so~\cite[Proposition~1.4.9]{Oxley92} \(\cl(A \cup Y) = E(M)\). Applying
  \dl{(CL2)} and \dl{(CL3)} we get that \(E(M) \subseteq \cl(A\cup Z)\) which
  implies that \(\cl(A\cup Z) = E(M)\). Now since \(A\cup Z\) is independent we
  get (See, e.g., \cite[Section~1.4, Exercise~2]{Oxley92}) that $B_h'=A\cup Z$
  is a basis.

  Now we show that (ii) holds. Let $Z=\{z_1,\ldots,z_s\}$, where the elements
  are indexed according to the order in which they are added to \(Z\) by the
  greedy augmentation described above. Note that $\omega(z_1)\geq\cdots\geq
  \omega(z_s)$. Since $B_h$ and $B_h'$ are bases, \(Y=(B_{h} \setminus
  S_{(t-1)})\), and \(Z=(B'_{h} \setminus S_{(t-1)})\), we get that $|Y|=|Z|$.
  Observe also that \(Y \cap Z = \emptyset\). We define (i) $Z_0=\emptyset$ and
  (ii) $Z_i=\{z_1,\ldots,z_i\}$ for $i\in\{1,\ldots,s\}$. We show that there is
  an ordering \(\langle y_1,\ldots,y_s \rangle\) of the elements of \(Y\) such
  that the set $A\cup Z_{i-1}\cup\{y_i\}$ is independent for every
  $i\in\{1,\ldots,s\}$. We define this order inductively, starting with
  \(y_{s}\) and proceeding in decreasing order of the subscript.

  We set \(y_{s}\) to be an element \(y\in (A\cup Y)\setminus (A\cup Z_{s-1}) =
  (Y \setminus Z_{s-1})\) such that $A\cup Z_{s-1}\cup \{y\}$ is independent.
  Since $|A\cup Z_{s-1}|<|A\cup Y|$ we know from \dl{(I3)} such an element must
  exist. For the inductive step, assume that for some fixed $i\in
  \{1,\ldots,s-1\}$ distinct elements $y_{i+1},\ldots,y_s\in Y$ have been
  defined such that $A\cup Z_i\cup\{y_{i+1},\ldots,y_s\}$ is independent. Note
  that \(|A\cup Z_i\cup\{y_{i+1},\ldots,y_s\}| = |A\cup Y|\). Then $R=A\cup
  Z_{i-1}\cup\{y_{i+1},\ldots,y_s\}$ is independent by \dl{(I2)}, and by
  \dl{(I3)}, there must exist an element $y\in (A\cup Y)\setminus R$ such that
  $R\cup\{y\}= A\cup Z_{i-1}\cup\{y, y_{(i+1)},\ldots,y_s\}$ is independent. We
  set \(y_{i}\) to be this element \(y\). Observe that due to \dl{(I2)}, $A\cup
  Z_{i-1}\cup\{y_i\}$ is indeed independent for every $i\in\{1,\ldots,s\}$.
  
  We claim that $\omega(y_i)\leq\omega(z_i)$ holds for every
  $i\in\{1,\ldots,s\}$. For the sake of contradiction, assume that this is not
  the case and let $i\in\{1,\ldots,s\}$ be the first index such that
  $\omega(y_i)>\omega(z_i)$ holds. Recall that $X_t$ is constructed by the
  greedy algorithm. Denote by $W\subset X_t$ the set of elements that are prior
  $z_i$ in the ordering $\sigma$. Suppose that $y_i\in\cl(W)$. By the
  construction of $Z$, $W\subseteq \cl(A\cup Z_{i-1})$, because $z_i$ is the
  first element in $\sigma$ such that $A\cup Z_{i-1}\cup\{z_i\}$ is independent.
  By \dl{(CL2)} and \dl{(CL3)}, we have that $y_i\in \cl(\cl(A\cup
  Z_{i-1}))=\cl(A\cup Z_{i-1})$. However, this contradicts the property that
  $A\cup Z_{i-1}\cup\{y_i\}$ is independent. Hence, $y_i\notin\cl(W)$. This
  implies that $W\cup \{y_i\}$ is independent. But this means that the greedy
  algorithm would have $y_i$ over $z_i$ in the construction of $X_t$, because
  $\omega(y_i)>\omega(z_i)$; a contradiction. This proves that
  $\omega(y_i)\leq\omega(z_i)$ holds for every $i\in\{1,\ldots,s\}$. Therefore,
  $\omega(Z)\geq \omega(Y)$ and (ii) is fulfilled. This completes the proof of
  the existence of a set $Z\subseteq X_t$ satisfying (i) and (ii).

  We replace the basis $B_h$ in the solution by $B_h'=A\cup Z=(B_h\setminus
  Y)\cup Z$. We show that the resulting family of bases is a solution to the
  instance $(M,\omega,k,d)$. Clearly, it is sufficient to show that for every
  $i\in\{1,\dots,k\}$ such that $i\neq h$, $\omega(B_h'\sd B_i)\geq d$, as the
  other pairs of bases are the same as before. By the choice of
  $t\in\{1,\ldots,\ell\}$ we have that $X_{t}\cap B_i=\emptyset$ holds for all
  $i\in\{1,\dots,k\}$. And since $Z\subseteq X_t$ we have that $Z\subseteq
  B_h'\sd B_i$ holds. Then, $\omega(B_h'\sd B_i)=\omega(((B_h\setminus Y)\cup
  Z)\sd B_i))\geq \omega(B_h\sd B_i)-\omega(Y)+\omega(Z)\geq\omega(B_h\sd
  B_i)\geq d$ as required. We have that the replacement of $B_h$ by $B_h'$ gives
  a solution. However $B_h'\subseteq S_t\subseteq S$ whereas \(B_{h}\setminus S
  \neq \emptyset\), and this contradicts the assumption that $B_1,\ldots,B_k$ is
  a solution such that the number of vertices of the bases in $S$ is the
  maximum. This concludes the proof of the claim.
\end{claimproof}

Let $\widehat{M}=M-(E(M)\setminus S)$. Then \(E(\widehat{M}) = S\) and the set
\(B\) is a basis of \(\widehat{M}\) as well. \autoref{cl:kern} immediately
implies the following property.

\begin{claim}\label{cl:first-step}
The instances $(M,\omega,k,d)$ and $(\widehat{M},\omega,k,d)$ of \probWDB are equivalent.
\end{claim}

We now repeat the argument that preceded \autoref{cl:kern}, this time with the
dual matroid \(\widehat{M}^{*}\) and starting with its basis \(\widehat{B} =
(E(\widehat{M})\setminus B) = L\). Recall that $\ell=\lceil\frac{d}{2}\rceil
k^2$. We construct sets \(S^{*}_{0},\ldots,S^{*}_{\ell}\) iteratively. We set
\(S^{*}_{0}=\widehat{B}\). For $i\geq 1$ we construct \(S^{*}_{i}\) from
\(S^{*}_{(i-1)}\) as follows. If $E(\widehat{M})\setminus
S^{*}_{i-1}=\emptyset$, we set $X^{*}_i=\emptyset$. Otherwise we set
\(X^{*}_{i}\) to be a basis of maximum weight in the matroid
$\widehat{M}^{*}-S^{*}_{i-1}$, which we find using the greedy algorithm.
Finally, we set $S^{*}_i=S^{*}_{i-1}\cup X^{*}_i$.

Let $S^{*}=S^{*}_\ell$ and \(L^{*} = S^{*}\setminus \widehat{B} = S^{*}\cap
B\). Since $\corank(B)\leq k\lceil\frac{d}{2}\rceil-1$, we get that \emph{every}
coindependent set contained in the set \(B\) has size at most
\(k\lceil\frac{d}{2}\rceil-1\). And since \(L^{*}\) is a disjoint union of
\(\ell\) such coindependent sets we get that \(|L^{*}|=|S^{*}\cap B|\leq \ell
(k\lceil\frac{d}{2}\rceil-1)\leq \lceil\frac{d}{2}\rceil^2k^3\). Restating
\autoref{cl:kern} for $\widehat{M}^*$, we get that if
$(\widehat{M}^*,\omega,k,d)$ is a yes-instance of \probWDB, then there is a
solution, that is, a family of bases $B_1^*,\ldots,B_k^*$ of $\widehat{M}^*$
such that $\omega(B_i^* \sd B_j^*)\geq d$ holds for all distinct
$i,j\in\{1,\ldots,k\}$, with the property that $B_i^*\subseteq S^*$ for every
$i\in\{1,\ldots,k\}$. In terms of $\widehat{M}$, the same property can be stated
as follows.

\begin{claim}\label{cl:cokern}
  If $(\widehat{M},\omega,k,d)$ is a yes-instance of \probWDB, then there is a
  solution, that is, a family of bases $B_1,\ldots,B_k$ such that $\omega(B_i
  \sd B_j)\geq d$ for all distinct $i,j\in\{1,\ldots,k\}$ with the property that
  $\overline{B_i}\subseteq S^*$ for every $i\in\{1,\ldots,k\}$, where
  \(\overline{B_{i}} = (E(\widehat{M}) \setminus B_{i})\).
\end{claim}

Since \(E(\widehat{M}) = (B \cup \widehat{B})\) and \(\widehat{B} \subseteq
S^{*} \subseteq E(\widehat{M})\) we have that \(E(\widehat{M}) = (B \cup
S^{*})\). Hence from \autoref{cl:cokern} we get that if
$(\widehat{M},\omega,k,d)$ is a yes-instance, then it has a solution
$B_1,\ldots,B_k$ such that $(B\setminus S^*)\subseteq B_i$ holds for every
$i\in\{1,\ldots,k\}$. That is, elements from the set \((B\setminus S^*)\) do not
contribute to the weight $\omega(B_i \sd B_j)$ for any distinct
$i,j\in\{1,\ldots,k\}$. So a transformation that removes the subset
\((B\setminus S^*)\) from the ground set of \(\widehat{M}\) is safe,
\emph{provided that} (i) \(B_{i} \setminus(B\setminus S^{*}) = (B_{i} \cap
S^{*})\) is a basis of the resulting matroid for all $i\in\{1,\ldots,k\}$, and
(ii) for \emph{any} basis \(B'\) of the resulting matroid, \(B' \cup (B\setminus
S^*)\) is a basis of \(\widehat{M}\).

  We now show that the operation of \emph{contracting} the set \((B\setminus
  S^{*})\) has both these properties. Let
  $\widetilde{M}=\widehat{M}/(B\setminus S^*)$. Then \(E(\widetilde{M}) = (B
  \cup S^{*}) \setminus (B\setminus S^*) = S^{*}\). Let \(\rank(\widehat{M}),
  \rank(\widetilde{M})\) be the ranks and \(\widehat{\rank}, \widetilde{\rank}\)
  be the rank functions of the two matroids \(\widehat{M}, \widetilde{M}\),
  respectively. Recall that \(\widetilde{\rank}(X) = \widehat{\rank}((B\setminus
  S^*) \cup X) - \widehat{\rank}(B\setminus S^*)\) holds for all \(X \subseteq
  E(\widetilde{M}) = S^{*}\). Now \(\rank(\widetilde{M}) =
  \widetilde{\rank}(S^{*}) = \widehat{\rank}(B \cup S^{*}) - \widehat{\rank}(B
  \setminus S^{*}) = \rank(\widehat{M}) - |B \setminus S^{*}|\), where the last
  equation holds because \(B\) is a basis of \(\widehat{M}\). And for any
  $i\in\{1,\ldots,k\}$, \(\widetilde{\rank}(B_{i} \cap S^{*}) =
  \widehat{\rank}((B \setminus S^{*}) \cup (B_{i} \cap S^{*})) -
  \widehat{\rank}(B \setminus S^{*}) = \rank(\widehat{M}) - |B \setminus S^{*}|
  = \rank(\widetilde{M})\), where the second equation holds because \(B, B_{i}\)
  are bases of \(\widehat{M}\) and \(B_{i} \subseteq ((B \setminus S^{*}) \cup
  (B_{i} \cap S^{*}))\). Thus \((B_{i} \cap S^{*})\) is a basis of
  \(\widetilde{M}\). Finally, let \(B'\) be an arbitrary basis of
  \(\widetilde{M}\). Then \(\widetilde{\rank}(B') = \rank(\widetilde{M}) =
  \rank(\widehat{M}) - |B \setminus S^{*}|\). Rearranging the expression for
  \(\widetilde{\rank}(B')\) in terms of \(\widehat{\rank}\) we get:
  \(\widehat{\rank}((B \setminus S^{*}) \cup B') = \widetilde{\rank}(B') +
  \widehat{\rank}(B \setminus S^{*}) = \rank(\widehat{M}) - |B \setminus S^{*}|
  + |B \setminus S^{*}| = \rank(\widehat{M})\) where the second equation holds
  because \(B\) is a basis of \(\widehat{M}\). Thus \((B \setminus S^{*}) \cup
  B'\) is a basis of \(\widehat{M}\), and we have

 \begin{claim}\label{cl:last-step}
The instances $(M,\omega,k,d)$ and $(\widetilde{M},\omega,k,d)$ of \probWDB are equivalent.
\end{claim}

Recall the sets \(L = \widehat{B} \subseteq S^{*}\) and \(L^{*} = (S^{*}
\setminus L)\) from the construction. $(L,L^*)$ is thus a partition of
$E(\widetilde{M}) = S^{*}$. From the construction we get \(L = \widehat{B}
\subseteq \overline{B}\) and \(L^{*} \subseteq B\). Now since
$\rank(\overline{B})\leq \lceil\frac{d}{2}\rceil k$ and
$\corank(B)\leq\lceil\frac{d}{2}\rceil k$ in $M$, we have that for every basis
$B'$ of $\widetilde{M}$, $|B'\cap L|\leq \lceil\frac{d}{2}\rceil k$ and $|L^*
\setminus B'|\leq \lceil\frac{d}{2}\rceil k$ hold.

This completes the description of the algorithm that returns the instance
$(\widetilde{M},\omega,k,d)$ and the partition $(L,L^*)$ of $E(\widetilde{M})$.
Since \(|L| \leq \lceil\frac{d}{2}\rceil^2k^3\) and \(|L^{*}| \leq
\lceil\frac{d}{2}\rceil^2k^3\), we have that $|E(\widetilde{M})|\leq
2\lceil\frac{d}{2}\rceil^2k^3$. It is straightforward to verify that given an
independence oracle for $M$ we can construct the following in polynomial time:
(i) the set $E(\widetilde{M})$, (ii) an independence oracle for $\widetilde{M}$
that in time polynomial in $|E(M)|$ answers queries for $\widetilde{M}$, and
(iii) the sets $L$ and
$L^*$. 
To see this, note that $A\subseteq E(\widetilde{M})$ is independent in
$\widetilde{M}$ if and only if $A'=A\cup(B\setminus S^*)$ is independent in $M$.
 
To show the second claim of the lemma, assume that we are given representation $\bfA$ of $M$ over a finite field $\mathbb{F}$. It is well-known that $M^*$ also is representable over $\mathbb{F}$ and, given $\bfA$, the representation of $M^*$ over $\mathcal{F}$ can be computed in polynomial time by linear algebra tools (see, e.g.,~\cite{Oxley92}). Taking into account that contraction of a set is equivalent to the deletion of the  same set in the dual matroid and vice versa, we obtain that the representation $\tilde{\bfA}$ of $\widetilde{M}$ can be constructed in polynomial time from $A$. This concludes the proof of the lemma.
\end{proof}

Using \autoref{lem:main} we can prove that \probWDB is \classFPT when
parameterized by $k$ and $d$.

\divBasesFPT*

\begin{proof}
  Let $(M,\omega,k,d)$ be an instance of \probWDB. We run the algorithm from
  \autoref{lem:main}. If the algorithm solves the problem, then we are done.
  Otherwise, the algorithm outputs an equivalent instance
  $(\widetilde{M},\omega,k,d)$ of \probWDB such that $E(\widetilde{M})\subseteq
  E(M)$ and $|E(\widetilde{M})|\leq 2\lceil\frac{d}{2}\rceil^2k^3$. Moreover,
  the algorithm computes the partition $(L,L^*)$ of $E(\widetilde{M})$ with the
  property that for every basis $B$ of $\widetilde{M}$, $|B\cap
  L|\leq\lceil\frac{d}{2}\rceil k$ and $|L^*\setminus B|\leq
  \lceil\frac{d}{2}\rceil k$. Then we check all possible $k$-tuples of bases by
  brute force and verify whether there are $k$ bases forming a solution. By the
  properties of $L$ and $L^*$, $\widetilde{M}$ has $(d^2k^3)^{\Oh(dk)}$ distinct
  bases. Therefore, we check at most $(d^2k^3)^{\Oh(dk^2)}$ $k$-tuples of bases.
  We conclude that this checking can be done in $2^{\Oh(dk^2(\log k+\log
    d))}\cdot |E(M)|^{\Oh(1)}$ time, and the claim follows.
\end{proof}

If the input matroid is given by a representation over a finite field, then  \probWDB admits a polynomial kernel when parameterized by $k$ and $d$.

\kernel*

\begin{proof}
  Let $(M,\omega,k,d)$ be an instance of \probWDB. Let also $\bfA$ be its
  representation over $\gfq$. We run the algorithm from \autoref{lem:main}. If
  the algorithm solves the problem and reports that $(M,\omega,k,d)$ is a
  yes-instance, we return a trivial yes-instance of the problem. Otherwise, the
  algorithm outputs an equivalent instance $(\widetilde{M},\omega,k,d)$ of
  \probWDB such that $E(\widetilde{M})\subseteq E(M)$ and
  $|E(\widetilde{M})|\leq 2\lceil\frac{d}{2}\rceil^2k^3$. Moreover, the
  algorithm computes a representation $\tilde{\bfA}$ of $\widetilde{M}$ over
  $\gfq$. Clearly, it can be assumed that the number of rows of the matrix
  $\tilde{\bfA}$ equals $\rank(\widetilde{M})$. Since $\rank(\widetilde{M})\leq
  |E(\widetilde{M})|$, the matrix $\tilde{\bfA}$ has $\Oh(k^6d^4)$ elements.
  Because $\tilde{\bfA}$ is a matrix over $\gfq$, it can be encoded by
  $\Oh(k^6d^4\log q)$ bits. Finally, note that the weights of the elements can
  be truncated by $d$, that is, we can set $\omega(e):=\min\{\omega(e),d\}$ for
  every $e\in E(\widetilde{M})$. Then the weights can be encoded using
  $\Oh(d^2k^3\log d)$ bits. This concludes the construction of our kernel.
\end{proof}


\section{An \classFPT algorithm for Weighted Diverse Common Independent
  Sets}\label{sec:divComIndSets}
In this section we show that \probWDCIS is \classFPT when parameterized by $k$
and $d$.

We use a similar win-win approach as for \probWDB and observe that if the two
matroids from an instance of \probWDCIS have a sufficiently big common
independent set, then we have a yes-instance of \probWDCIS.

\begin{lemma}\label{lem:big-is}
  Let $M_1$ and $M_2$ be matroids with a common ground set $E$, and let $k\geq
  1$ and $d\geq 0$ be integers. If there is an $X\subseteq E$ of size at least
  $k\lceil\frac{d}{2}\rceil$ such that $X$ is a common independent set of $M_1$
  and $M_2$, then $(M_1,M_2,\omega,k,d)$ is a yes-instance of \probWDCIS for any
  weight function $\omega:E\to \naturals$.
\end{lemma}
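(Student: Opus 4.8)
The plan is to mimic the proof of \autoref{lem:ind-coind}, the analogous statement for \probWDB. The key structural fact that made that proof work was \autoref{obs:ind-coind}: if $X$ is independent and $Y$ is coindependent and disjoint from $X$, then some basis contains $X$ and avoids $Y$. Here we need the analogous statement for common independent sets: if $X$ is a common independent set of $M_1$ and $M_2$, then every subset of $X$ is also a common independent set of $M_1$ and $M_2$ — this is immediate from the hereditary axiom \dl{(I2)} applied to each matroid. So the situation is in fact simpler than for bases, since we do not need to ``complete'' anything to a basis; we just take the subsets themselves as the solution sets.

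The argument then runs as follows. Suppose $X \subseteq E$ is a common independent set of $M_1$ and $M_2$ with $|X| \geq k\lceil \frac{d}{2}\rceil$. Partition $X$ into $k$ parts $X_1, \ldots, X_k$, each of size at least $\lceil \frac{d}{2}\rceil$; this is possible because $|X| \geq k\lceil\frac{d}{2}\rceil$. For each $i \in \{1,\ldots,k\}$, set $I_i = X_i$. By \dl{(I2)}, each $I_i$ is independent in both $M_1$ and $M_2$, so each $I_i$ is a common independent set. For distinct $i,j$, since the parts are pairwise disjoint, $I_i \sd I_j = X_i \cup X_j$, so $\omega(I_i \sd I_j) = \omega(X_i) + \omega(X_j) \geq |X_i| + |X_j| \geq 2\lceil\frac{d}{2}\rceil \geq d$, where the first inequality uses that $\omega$ takes positive integral values (so each element contributes at least $1$). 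Hence $(M_1,M_2,\omega,k,d)$ is a yes-instance of \probWDCIS.

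There is essentially no obstacle here; the only point requiring a moment's care is the existence of the balanced partition into $k$ parts of size at least $\lceil\frac{d}{2}\rceil$ each, but this follows simply by greedily peeling off blocks of size $\lceil\frac{d}{2}\rceil$ (and dumping any leftover elements into, say, the last block). Below is the proof.

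\begin{proof}
  Let $X\subseteq E$ be a common independent set of $M_1$ and $M_2$ with
  $|X|\geq k\lceil\frac{d}{2}\rceil$. Then there is a partition
  $X_1,\ldots,X_k$ of $X$ such that $|X_i|\geq \lceil\frac{d}{2}\rceil$ for
  every $i\in\{1,\ldots,k\}$. For each $i\in\{1,\ldots,k\}$, set $I_i=X_i$.
  Since $X$ is independent in $M_1$ and $X_i\subseteq X$, the set $I_i$ is
  independent in $M_1$ by \dl{(I2)}; likewise $I_i$ is independent in $M_2$. So
  each $I_i$ is a common independent set of $M_1$ and $M_2$. For distinct
  $i,j\in\{1,\ldots,k\}$, the sets $X_i$ and $X_j$ are disjoint, so $I_i\sd
  I_j=X_i\cup X_j$ and therefore $\omega(I_i\sd I_j)=\omega(X_i)+\omega(X_j)\geq
  |X_i|+|X_j|\geq 2\lceil\frac{d}{2}\rceil\geq d$, where we used that
  $\omega(e)\geq 1$ for every $e\in E$. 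Hence $(M_1,M_2,\omega,k,d)$ is a
  yes-instance of \probWDCIS.
\end{proof}
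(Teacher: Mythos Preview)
Your proof is correct and follows essentially the same approach as the paper's own proof: partition $X$ into $k$ blocks of size at least $\lceil d/2\rceil$, observe that each block is a common independent set by heredity, and bound the weight of each pairwise symmetric difference below by $d$ using that weights are positive integers. The only cosmetic difference is that you explicitly cite \dl{(I2)} and the bound $\omega(e)\geq 1$, whereas the paper leaves these implicit.
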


\begin{proof}
  Let $X\subseteq E$ be a set of size at least $k\lceil\frac{d}{2}\rceil$ such
  that $X$ is a common independent set of $M_1$ and $M_2$. Then there is a
  partition $I_1,\ldots,I_k$ of $X$ such that $|I_i|\geq
  \lceil\frac{d}{2}\rceil$ for every $i\in\{1,\ldots,k\}$. Clearly,
  $I_1,\ldots,I_k$ are common independent sets of $M_1$ and $M_2$. Also we have
  that $\omega(I_i\sd I_j)=\omega(I_i)+\omega(I_j)\geq d$ for all distinct
  $i,j\in\{1,\ldots,k \}$ and every weight function $\omega$ which assigns
  positive integral weights. This means that $I_1,\ldots,I_k$ is a solution for
  $(M_1,M_2,\omega,k,d)$; that is, $(M_1,M_2,\omega,k,d)$ is a yes-instance.
 \end{proof}

 \autoref{lem:big-is} implies that we can assume that the maximum size of a
 common independent set of the input matroids is bounded. We prove the following
 crucial lemma.

\begin{lemma}\label{lem:branching}
  Let $(M_1,M_2,\omega,k,d)$ be an instance of \probWDCIS such that the maximum
  size of a common independent set of $M_1$ and $M_2$ is at most $s$. Then there
  is a set $\mathcal{F}$ of common independent sets of $M_1$ and $M_2$, of size
  $|\mathcal{F}| = 2^{\Oh(s^2\log(ks))}\cdot d$, such that if
  $(M_1,M_2,\omega,k,d)$ is a yes-instance of \probWDCIS then the instance has a
  solution $I_1,\ldots,I_k$ with $I_i\in\mathcal{F}$ for $i\in\{1,\ldots,k\}$.
  Moreover, $\mathcal{F}$ can be constructed in $2^{\Oh(s^2\log(ks))}\cdot
  d\cdot |E|^{\Oh(1)}$ time where \(E\) is the (common) ground set of \(M_{1}\)
  and \(M_{2}\).
\end{lemma}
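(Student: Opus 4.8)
The plan is to build $\mathcal{F}$ by a bounded-depth branching procedure that, starting from the empty set, repeatedly tries to \emph{extend} a partially-built common independent set by one element, branching over all "reasonable" choices of the next element. Since every common independent set of $M_1$ and $M_2$ has size at most $s$, each branch terminates after at most $s$ extension steps; the issue is to bound the branching factor at each step so that the total number of leaves is $2^{\Oh(s^2\log(ks))}\cdot d$. First I would fix an arbitrary target solution $I_1,\dots,I_k$ (when the instance is a yes-instance) and argue that we may assume it is a \emph{canonical} solution chosen to optimize some tie-breaking measure — this is the standard device that lets us prove a given $I_i$ is reachable in the search tree. The key structural observation I expect to need is that, because each $|I_i|\le s$ and there are $k$ sets, the "relevant" part of the ground set — the elements that actually distinguish the sets, i.e.\ contribute to some $\omega(I_i\sd I_j)$ — can be captured by a set of at most $\Oh(ks)$ elements, or more precisely that each $I_i$ lies inside a small "profile class". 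I would then use \autoref{prop:intersection} (Weighted Matroid Intersection) as the workhorse inside the branching: at a node where we have committed to a partial set $A$ (elements forced \emph{in}) and a set $D$ of elements forced \emph{out}, we can in polynomial time find a maximum-weight common independent set of $M_1/A - (D\cup A)$ and $M_2/A - (D\cup A)$, i.e.\ the best completion consistent with our commitments, and this tells us whether the current branch can still reach weight $d$ against the already-built sets.

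The branching itself I would organize as follows. We build the sets $I_1,\dots,I_k$ one at a time (or rather we build $\mathcal{F}$ so that \emph{each} $I_i$ individually is in $\mathcal{F}$; we do not need to build tuples). To construct the candidates for a single set $I_i$: branch on its size $t_i\in\{0,1,\dots,s\}$ ($s+1$ choices), and then perform $t_i\le s$ rounds of element selection. In each round, given the elements chosen so far, we must add one more element; a priori there are $|E|$ choices, which is too many. The fix is a \emph{color-coding}-flavored or \emph{representative-set}-flavored argument: we do not need the actual element, only its "interaction pattern" with the other sets, and there are only $2^{\Oh(\cdot)}$ such patterns. More concretely, I would maintain that at every node the committed set $A$ together with the "out" set $D$ is small (each of size $\Oh(s)$), guess which of the already-built sets in $\mathcal{F}$ the current set will be "far from" versus "close to", and use matroid intersection to test feasibility rather than enumerating elements. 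The depth being $\Oh(s)$ and the branching factor at each node being $2^{\Oh(s\log(ks))}$ (coming from: choosing among $\Oh(ks)$-ish interaction classes, times choosing a weight contribution bucket, where the extra factor $d$ accounts for the $d$ possible values of a truncated weight — note we may assume $\omega(e)\le d$ for all $e$ as in the proof of \autoref{thm:kern}) gives $(2^{\Oh(s\log(ks))})^{\Oh(s)}\cdot d = 2^{\Oh(s^2\log(ks))}\cdot d$ leaves, each producing one member of $\mathcal{F}$, and the running time bound follows since each node does $|E|^{\Oh(1)}$ work via \autoref{prop:intersection}.

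The correctness direction — every $I_i$ of the canonical solution is produced by some leaf — is where the tie-breaking choice of the solution does the work: I would show that if at some node our partial set $A\subseteq I_i$ but no branch leads to all of $I_i$, then we can \emph{exchange} an element of $I_i\setminus A$ for one of the branched-on "pattern representatives" without decreasing $\omega(I_i\sd I_j)$ for any $j$ and without increasing the tie-breaking measure, contradicting canonicality. This exchange argument is the analogue of the $Z$-versus-$Y$ swap in the proof of \autoref{lem:main}, and it will again rely on the greedy/exchange properties of matroids (here, of the matroid intersection polytope, or more elementarily on \dl{(I3)} applied in both $M_1$ and $M_2$ — which is subtler than a single matroid, since augmenting in $M_1$ and $M_2$ simultaneously requires an alternating-path argument).

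The main obstacle I anticipate is precisely this last point: exchange arguments for a \emph{single} matroid are clean (augmentation axiom \dl{(I3)}), but for \emph{common} independent sets of two matroids one does not have a one-element exchange property in general — augmenting a common independent set can require rerouting along an augmenting path in the exchange graph of the matroid intersection. So the "swap one pattern-representative for one element of $I_i$" step cannot be done literally; I would instead argue at the level of \emph{weights and sizes} only: commit to $|A|$ and to the multiset of truncated weights $\{\min(\omega(a),d):a\in A\}$ and to $A$'s interaction with the $D$-side, use matroid intersection to certify that \emph{some} completion with the committed profile exists, and show that any such completion is as good as $I_i$ for the diversity constraints. In other words, the branching should be over \emph{profiles of common independent sets} rather than over elements, and the feasibility of each profile is decided in polynomial time by Proposition~\ref{prop:intersection}; bounding the number of profiles by $2^{\Oh(s^2\log(ks))}\cdot d$ is then a counting exercise (at most $s$ elements, each with one of $\Oh(ks)$ "roles" and one of $d$ weight values), and this is the cleanest route around the two-matroid exchange difficulty.
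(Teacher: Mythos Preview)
Your proposal has a genuine gap at the ``profiles'' pivot. You correctly diagnose that single-element extension with branching factor $|E|$ is hopeless, and that a two-matroid exchange argument (swap one element of $I_i$ for one representative) is not available. But the fallback you offer---commit only to a profile of $I_i$ (size, multiset of truncated weights, interaction pattern) and certify feasibility by \autoref{prop:intersection}---does not control the diversity constraints. The requirement is $\omega(I_i\sd I_j)\ge d$ for every other $I_j$, and this depends on \emph{which} elements $I_i$ shares with each $I_j$, not on any intrinsic profile of $I_i$. Two common independent sets with identical size and weight multiset can have completely different symmetric differences with a fixed $I_j$. So ``any completion with the committed profile is as good as $I_i$'' is simply false, and nothing in the proposal recovers from that. (Your counting is also inconsistent: you say the per-node branching factor is $2^{\Oh(s\log(ks))}$ but derive it from ``$\Oh(ks)$-ish interaction classes'', which would only give $(ks)^{\Oh(s)}$ in total.)

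The paper's argument sidesteps both obstacles with a greedy-packing trick, and this is the idea you are missing. Fix a target weight $w$ and build $\mathcal{F}_w$ by the following recursion on a partial set $X$ (a common independent set, initially empty) and matroids $M_i'=(M_i-W)/X$. If $\omega(X)\ge w$, output $X$. Otherwise, greedily peel off up to $ks$ \emph{pairwise disjoint} common independent sets $Y_1,\dots,Y_\ell$ of $M_1',M_2'$, each of weight at least $w-\omega(X)$, using \autoref{prop:intersection}. If $\ell=ks$, output all $X\cup Y_h$: since $\big|\bigcup_{j\ne i}I_j\big|\le (k-1)s<ks$, pigeonhole gives some $Y_h$ disjoint from every $I_j$ with $j\ne i$, and then $X\cup Y_h$ can replace $I_i$ in the solution because $Y_h$ contributes its full weight to every symmetric difference and $\omega(Y_h)\ge\omega(I_i\setminus X)$. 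If $0<\ell<ks$, set $R=\bigcup_h Y_h$; then $|R|\le ks^2$, and after deleting $R$ there is no common independent set of weight $\ge w-\omega(X)$ left, so $(I_i\setminus X)\cap R\ne\emptyset$. Branch on every nonempty common independent subset $Z\subseteq R$ (at most $(ks^2)^s$ choices), recurse with $X\cup Z$ and $W:=R\setminus Z$. One branch is literally $Z=(I_i\setminus X)\cap R$, so no exchange is ever needed: in the ``many $Y_h$'' case you replace $I_i$ wholesale, in the ``few $Y_h$'' case you follow $I_i$ exactly. Depth is at most $s$, giving $|\mathcal{F}_w|=2^{\Oh(s^2\log(ks))}$; taking $\mathcal{F}=\bigcup_{w=0}^{ds}\mathcal{F}_w$ picks up the factor $d$.
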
 
 
\begin{proof}
  Consider $(M_1,M_2,\omega,k,d)$. Let $E=E(M_1)= E(M_2)$. It is convenient to
  assume that the weights of the elements are bounded by $d$. For this, we set
  $\omega(e):=\min\{d,\omega(e)\}$ for every $e\in E$. It is straightforward to
  see that by this operation we obtain an equivalent instance of \probWDCIS.
  Notice that for every common independent set $I$ of $M_1,M_2$, we now have
  $\omega(I)\leq ds$.

  For every $w\in\{0,\ldots,ds\}$, we use a recursive branching algorithm to
  construct a family $\mathcal{F}_w$ of size $2^{\Oh(s^2\log(ks))}$ of common
  independent sets of $M_1$ and $M_2$ with the following properties: (i) each
  set in $\mathcal{F}_w$ has weight \emph{at least} \(w\), and (ii) if
  $S=\{I_1,\ldots,I_k\}$ is a solution to the instance $(M_1,M_2,\omega,k,d)$
  such that $\omega(I_i)=w$ for some $i\in\{1,\ldots,k\}$, then there is an
  $I_i'\in \mathcal{F}_w$ such that \((S \setminus I_{i}) \cup I_i'\) is also a
  solution to $(M_1,M_2,\omega,k,d)$.

  The algorithm, denoted by $\mathcal{A}$, takes as its input a common
  independent set $X$ of $M_1$ and $M_2$, and two matroids $M_1'$ and $M_2'$
  such that $M_i'=(M_i-W)/X$ for $i=1,2$ for some subset $W\subseteq E\setminus
  X$. For the very first call to \(\mathcal{A}\) we set $X:=\emptyset$ and
  $M_i'=M_i$ for $i=1,2$ (thus we implicitly set \(W:=\emptyset\)). Algorithm
  $\mathcal{A}$ outputs at most \(ks\) common independent sets of $M_1$ and
  $M_2$ of the form $X \cup Y$, where $Y\subseteq E'=E(M_1')=E(M_2')$ is a
  common independent set of $M_1'$ and $M_2'$. Note that \(E'=E\setminus (X \cup
  W)\). Algorithm $\mathcal{A}$ performs the following steps.

\begin{description}
\item[Step~1.] If $\omega(X)\geq w$, then output $X$ and return.

\item[Step~2.] Greedily compute at most $ks$ disjoint common independent sets
  $Y_1,\ldots,Y_\ell$ of $M_1'$ and $M_2'$, each of weight at least
  $w'=w-\omega(X)$, as follows.
  \begin{itemize}
  \item[(a)] Set \(i = 1, \mathit{Ys} = \{\}\).
  \item[(b)] If \(|\mathit{Ys}| = (i - 1) = ks\) then set \(\ell = (i - 1)\) and
    go to \textbf{Step~3}.
  \item[(c)] Set \(M_h''=M_h'- (\displaystyle\bigcup_{Y_{j} \in \mathit{Ys}}Y_{j})\), for $h=1,2$.
  \item[(d)] Find a common independent set $Z$ of $M_1''$ and $M_2''$ of the
    maximum weight.
  \item[(e)] If $\omega(Z)<w'$, then set \(\ell = (i - 1)\) and go to
    \textbf{Step~3}. Otherwise, set \(Y_{i} = Z, \mathit{Ys} = (\mathit{Ys} \cup
    \{Y_{i}\})\) and \(i = i + 1\), and go to \textbf{Step~2(b)}.
  \end{itemize}

\item[Step~3.] At this point we have \(\mathit{Ys} = \{Y_1,\ldots,Y_\ell\}\).
\begin{itemize}
\item If $\ell=0$ then return.
\item If $\ell=ks$ then output the sets $X\cup Y_1,\ldots,X\cup Y_\ell$ and return. 
\item If neither of the above holds then:
  \begin{itemize}
  \item Set \(R=\displaystyle\bigcup_{Y_{j} \in \mathit{Ys}}Y_{j}\).
  \item For each nonempty common independent set $Z\subseteq R$ of $M_1'$ and
    $M_2'$, set $W=R\setminus Z$ and recursively invoke $\mathcal{A}(X\cup
    Z,(M_1'-W)/Z,(M_2'-W)/Z)$.
\end{itemize}
\end{itemize}
\end{description}

This completes the description of $\mathcal{A}$. To construct $\mathcal{F}_w$,
we call $\mathcal{A}(\emptyset,M_1,M_2)$. Then the set $\mathcal{F}_w$ includes
all the sets output by $\mathcal{A}$. Note that in every recursive step we call
$\mathcal{A}(X\cup Z,(M_1'-W)/Z,(M_2'-W)/Z)$ only if $Z\neq\emptyset$. So the
size of the first argument $(X\cup Z)$ to a recursive call of \(\mathcal{A}\) is
strictly larger than the size of the first argument \(X\) of the parent call to
\(\mathcal{A}\). Moreover, since $Z$ is a common independent set of $M_1'$ and
$M_2'$, we have that $X\cup Z$ is a common independent set of $M_1$ and $M_2$.
Because the maximum size of the common independent set of $M_1$ and $M_2$ is at
most $s$, we obtain that the depth of the recursion is bounded by $s$, that is,
the algorithm is finite. We show the crucial property of $\mathcal{F}_w$
mentioned above.

\begin{claim}\label{cl:Fw}
  If $S=\{I_1,\ldots,I_k\}$ is a solution to the instance $(M_1,M_2,\omega,k,d)$
  such that $\omega(I_i)=w$ for some $i\in\{1,\ldots,k\}$, then there is an
  $I_i'\in \mathcal{F}_w$ such that \((S \setminus I_{i}) \cup I_i'\) is also a
  solution to $(M_1,M_2,\omega,k,d)$.
\end{claim}

\begin{claimproof}
  Fix a set \(I_{i} \in S\;;\;\omega(I_{i}) = w\). Recall that an arbitrary
  invocation of \(\mathcal{A}\) has the form \(\mathcal{A}(X, (M_{1} - W)/X,
  (M_{2} - W)/X)\) where \(X\) is a common independent set of \(M_{1}, M_{2}\)
  and \(W \subseteq (E \setminus X)\). For the very first invocation of
  \(\mathcal{A}\) these sets are \(X = \emptyset, W = \emptyset\), and these
  sets trivially satisfy the \textbf{viability condition} \((X \cup W) \cap
  I_{i} = X\); that is: \emph{\(I_{i}\) contains all of \(X\), and none of
    \(W\)}. We show that any invocation of \(\mathcal{A}\) whose arguments
  satisfy the viability condition \emph{either} outputs a set $I_i'$ that can be
  used to replace $I_i$ in $S$, \emph{or} makes at least one recursive call to
  \(\mathcal{A}\) such that the arguments to this recursive call satisfy the
  viability condition. Since the size of the first argument $(X\cup Z)$ to a
  recursive call of \(\mathcal{A}\) is strictly larger than the size of the
  first argument \(X\) of the parent call to \(\mathcal{A}\), we get that some
  call to \(\mathcal{A}\) will output a set \(I_{i}'\) with the desired
  property.
  
  Assume inductively that \(\mathcal{A}(X, M_{1}' = (M_{1} - W)/X, M_{2}' =
  (M_{2} - W)/X)\) is an invocation of \(\mathcal{A}\) whose arguments satisfy
  the viability condition. If $\omega(X)\geq w$, then $X=I_i$, because
  $\omega(I_i)=w$. In this case, the algorithm outputs $X=I_i$ in
  \textbf{Step~1}. Clearly, we can set $I_i'=I_i$, and we are done. So let us
  assume that this is not the case, and that $\omega(X)<w$. In this case \(X
  \subsetneq I_{i}\) holds and the algorithm goes to \textbf{Step~2}.

  Let \(E' = E \setminus (X \cup W)\) be the common ground set of \(M_{1}'\) and
  \(M_{2}'\). Let \(Y = (I_{i} \setminus X)\) and \(w'=\omega(Y)=w-\omega(X)\).
  Then since \(Y\) is a common independent set of \(M_{1}'\) and \(M_{2}'\), the
  greedy computation of \textbf{Step~2} produces a \emph{nonempty} family
  $Y_1,\ldots,Y_\ell$ of disjoint common independent sets of $M_1'$ and $M_2'$
  of weight at least $w'$ each. Note that $X\cup Y_1,\ldots,X\cup Y_\ell$ are
  common independent sets of $M_1$ and $M_2$. We consider two cases depending on
  the value of $\ell$ in \textbf{Step~3}. Note that by the above reasoning the
  case \(\ell = 0\) does not arise here.
  
  \begin{description}
  \item[Case~1.] $\ell=ks$. In this case the algorithm outputs the sets $X\cup
    Y_1,\ldots,X\cup Y_\ell$ where \((X \cap Y_{i}) = \emptyset\) holds for all
    \(i\in \{1,\dotsc,\ell\}\). Recall that every common independent set of
    $M_1$ and $M_2$ has size at most $s$. Therefore, the set
    $J=\bigcup_{j\in\{1,\ldots,k\},~j\neq i}I_j$ has size at most $(k-1)s<\ell$.
    Hence, by the pigeonhole principle, there is an $h\in\{1,\ldots,\ell\}$ such
    that $Y_h\cap I_j=\emptyset$ holds for all $j\in\{1,\ldots,k\}\;;\;j\neq i$.

    Let $I_i':=X\cup Y_h$. Then $I_i'$ is a common independent set of $M_1$ and
    $M_2$ and \(\omega(I_i') = \omega(X)+\omega(Y_h) \geq \omega(X)+w' =
    \omega(X)+\omega(I_{i} \setminus X) = \omega(I_{i})\), where the last
    equation follows from the fact that \(X\subseteq I_{i}\) holds. From this
    chain of relations we also get that \( \omega(I_{i} \setminus X) \leq
    \omega(Y_h)\) holds. Consider an arbitrary index $j\in
    \{1,\ldots,k\}\;;\;j\neq i$. Then \(\omega(I_{i} \sd I_{j}) = \omega(I_{j}
    \setminus I_{i}) + \omega(I_{i} \setminus I_{j}) \leq \omega(I_{j} \setminus
    X) + \omega(I_{i} \setminus I_{j}) = \omega(I_{j} \setminus X) + \omega(X
    \setminus I_{j}) + \omega((I_{i} \setminus X) \setminus I_{j}) \leq
    \omega(I_{j} \sd X) + \omega(I_{i} \setminus X) \leq \omega(I_{j} \sd X) +
    \omega(Y_{h})\). But since $I_i'=X\cup Y_h$, \((X \cap Y_{h}) = \emptyset\)
    and \((I_{j} \cap Y_{h}) = \emptyset\) we get that \(\omega(I_{j} \sd
    I_{i'}) = \omega(I_{j} \sd X) + \omega(Y_{h})\) holds. Thus \(\omega(I_{j}
    \sd I_{i'}) \geq \omega(I_{i} \sd I_{j})\), and so replacing $I_i$ by $I_i'$
    in the solution $S$ indeed gives us a solution to the instance
    $(M_1,M_2,\omega,k,d)$.
  \item[Case~2.]$0 < \ell < ks$. In this case we set
    \(R:=\bigcup_{i=1}^{\ell}Y_i\). From the construction we get that the
    matroids $M_1'-R$ and $M_2'-R$ have no common independent set of weight at
    least $w'$. Since \((I_{1} \setminus X) \setminus R\) is such a common
    independent set we have that \(\omega((I_{1} \setminus X) \setminus R) <
    w'\), and since \(\omega(I_{i}\setminus X)=w'\) we get that $Z=(I_i\setminus
    X)\cap R\neq\emptyset$. 
    Clearly, $Z$ is a common independent set of $M_1'$ and $M_2'$. Our algorithm
    considers all such sets. Hence there is a recursive call
    $\mathcal{A}(X',(M_1'-W)/Z,(M_2'-W)/Z)$ where \(Z = ((I_i\setminus X)\cap
    R), X' = X \cup Z, W=R\setminus Z\).
    By the choice of \(Z\) and \(W\) we get that \((X' \cup W) \cap I_{i} =
    X'\), so that this recursive call satisfies the viability
    condition. 
    Moreover, we have that $|X'|>|X|$. This completes the second case and the
    proof of the claim.\qedhere
  \end{description} 
\end{claimproof}

We already observed that the algorithm $\mathcal{A}$ is finite. Now we evaluate its running time and the size of $\mathcal{F}_w$.

\begin{claim}\label{cl:size-Fw}
The set $\mathcal{F}_w$ has size $2^{\Oh(s^2\log(ks))}$ and can be constructed in $2^{\Oh(s^2\log(ks))}\cdot |E|^{\Oh(1)}$ time. 
\end{claim}

\begin{claimproof}
  To give an upper bound on the size of $\mathcal{F}_w$, observe that in each
  recursive call, the algorithm $\mathcal{A}$ either outputs some sets, or
  performs some recursive calls, or simply returns without outputting anything.
  Notice that in Step~1, $\mathcal{A}$ can output at most one set, and
  $\mathcal{A}$ may output $ks$ sets in Step~3. The number of recursive calls is
  upper bounded by the number of nonempty common independent sets $Z\subseteq R$
  of $M_1'$ and $M_2'$. Since $\ell<ks$ and $|Y_i|\leq s$ for
  $i\in\{1,\ldots,\ell\}$, $|R|\leq ks^2$. Because for each $Z$, $|Z|\leq s$,
  the branching factor is at most $(ks^2)^s=2^{\Oh(s\log(ks))}$. Since the depth
  of the recursion is at most $s$, the search tree has $2^{\Oh(s^2\log(ks))}$
  leaves. This implies that the size of $\mathcal{F}_w$ is
  $2^{\Oh(s^2\log(ks))}$.

  To evaluate the running time, note that in Step~2, the algorithm greedily
  constructs the sets $Y_1,\ldots,Y_\ell$ that are common independent sets of
  $M_1'$ and $M_2'$. By \autoref{prop:intersection}, this can be done in
  polynomial time, because in each iteration we find a common independent set of
  maximum weight. Because the search tree has $2^{\Oh(s^2\log(ks))}$ leaves, the
  total running time is $2^{\Oh(s^2\log(ks))}\cdot |E|^{\Oh(1)}$.
\end{claimproof}

We construct $\mathcal{F}=\bigcup_{w=0}^{ds}\mathcal{F}_w$. By
\autoref{cl:size-Fw}, $|\mathcal{F}|\leq
(ds+1)\max_{w\in\{0,\ldots,ds\}}|\mathcal{F}_{w}|=2^{\Oh(s^2\log(ks))}\cdot d$
and $\mathcal{F}$ can be constructed in total $2^{\Oh(s^2\log(ks))}\cdot d\cdot
|E|^{\Oh(1)}$ time. \autoref{cl:Fw} implies that if $(M_1,M_2,\omega,k,d)$ is a
yes-instance of \probWDCIS, then the instance has a solution $I_1,\ldots,I_k$
with $I_i\in\mathcal{F}$ for $i\in\{1,\ldots,n\}$.
 \end{proof} 
 
 Combining \autoref{lem:big-is} and \autoref{lem:branching}, we obtain the main
 result of the section.
 
\divComIndSetsFPT*
  
\begin{proof} 
  Let $(M_1,M_2,\omega,k,d)$ be an instance of \probWDCIS. First, we use
  \autoref{prop:intersection} to solve \textsc{Matroid Intersection} for $M_1$
  and $M_2$ and find a common independent set $X$ of maximum size. If $|X|\geq
  k\lceil\frac{d}{2}\rceil$, then by \autoref{lem:big-is}, we conclude that
  $(M_1,M_2,\omega,k,d)$ is a yes-instance. Assume that this is not the case.
  Then the maximum size of a common independent set of $M_1$ and $M_2$ is $s<
  k\lceil\frac{d}{2}\rceil$. We apply \autoref{lem:branching} and construct the
  set $\mathcal{F}$ of size $2^{\Oh((kd)^2\log(kd))}$ in
  $2^{\Oh((kd)^2\log(kd))}\cdot |E|^{\Oh(1)} $ time. By this lemma, if
  $(M_1,M_2,\omega,k,d)$ is a yes-instance, it has a solution $I_1,\ldots,I_k$
  such that $I_i\in\mathcal{F}$ for $i\in \{1,\ldots,k\}$. Hence, to solve the
  problem we go over all $k$-tuples of the elements of $\mathcal{F}$, and for
  each $k$-tuple, we verify whether these common independent sets of $M_1$ and
  $M_2$ give a solution. Clearly, we have to consider $2^{\Oh(k^3d^2\log(kd))}$
  tuples. Hence, the total running time is $2^{\Oh(k^3d^2\log(kd))}\cdot
  |E|^{\Oh(1)}$.
\end{proof}  



\section{Perfect Matchings}\label{sec:matchings}

In this section we prove that \probPMD\ is fixed parameter tractable when
parameterized by \(k\) and $d$. We need the following simple observations later
in this section.

\begin{observation}
\label{obs:te}
The cardinality of symmetric differences of perfect matchings in a graph obeys
the triangle inequality. That is, for a graph $G$ and perfect matchings
$M_1,M_2,M_3$ in $G$, $\vert M_1\sd M_2\vert + \vert M_2\sd M_3\vert \geq \vert
M_1\sd M_3\vert$.
\end{observation}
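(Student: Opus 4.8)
The plan is to prove the triangle inequality for the symmetric difference of \emph{arbitrary} finite sets; the fact that \(M_1,M_2,M_3\) are perfect matchings plays no role beyond guaranteeing that all three are subsets of a common universe, namely \(E(G)\). The key step is to establish the set inclusion
\[
A \sd C \subseteq (A \sd B)\cup(B \sd C)
\]
for any sets \(A,B,C\). The whole argument reduces to this inclusion, so the rest is purely a matter of taking cardinalities.

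To prove the inclusion I would take an arbitrary \(x\in A\sd C\) and argue by cases on whether \(x\in B\). By definition \(x\) lies in exactly one of \(A\) and \(C\); by symmetry assume \(x\in A\setminus C\). If \(x\in B\), then \(x\in B\setminus C\subseteq B\sd C\); if \(x\notin B\), then \(x\in A\setminus B\subseteq A\sd B\). In either case \(x\in (A\sd B)\cup(B\sd C)\), which gives the desired inclusion. Then, using subadditivity of cardinality on unions, \(|A\sd C|\le |A\sd B|+|B\sd C|\). Substituting \(A=M_1\), \(B=M_2\), \(C=M_3\), all viewed as subsets of \(E(G)\), yields \(|M_1\sd M_3|\le |M_1\sd M_2|+|M_2\sd M_3|\), which is exactly the claimed inequality.

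There is no genuine obstacle here: this is a routine verification of a well-known property of Hamming distance, and it is stated separately only because it is invoked repeatedly in the greedy construction of ``far apart'' matchings used later in the \classFPT algorithm for \probPMD (in particular when halving the distance threshold across the successive steps).
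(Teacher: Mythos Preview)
Your argument is correct. The paper itself does not give a proof at all: it simply remarks that Hamming distance is a metric and hence obeys the triangle inequality. Your elementary verification of the inclusion \(A\sd C\subseteq (A\sd B)\cup(B\sd C)\) is precisely the standard justification of that fact, so your approach is the same as the paper's, only spelled out in full.
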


Observation~\ref{obs:te} follows from the fact that Hamming distance is a metric and hence obeys triangular inequality.

\begin{observation}
\label{obs:setminussym}
Let $G$ be a graph and $M_1$ and $M_2$ be two perfect matchings in $G$. Then
$\vert M_1\sd M_2\vert=2\cdot \vert M_1\setminus M_2\vert=2\cdot \vert
M_2\setminus M_1\vert$.
\end{observation}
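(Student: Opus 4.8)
The plan is to prove both equalities by a short counting argument whose only non-trivial input is the fact that all perfect matchings of $G$ have the same cardinality. First I would record the purely set-theoretic observation that $M_1 \setminus M_2$ and $M_2 \setminus M_1$ are disjoint and that their union is $M_1 \sd M_2$, so that $\vert M_1 \sd M_2\vert = \vert M_1 \setminus M_2\vert + \vert M_2 \setminus M_1\vert$. This already reduces the statement to showing that the two ``private'' parts $M_1 \setminus M_2$ and $M_2 \setminus M_1$ have equal size.

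Next I would use that $M_1$ and $M_2$ are perfect matchings of the \emph{same} graph $G$, and hence $\vert M_1\vert = \vert M_2\vert = \vert V(G)\vert / 2$. Writing $\vert M_i\vert = \vert M_i \setminus M_{3-i}\vert + \vert M_1 \cap M_2\vert$ for $i \in \{1,2\}$ and subtracting these two identities, the common term $\vert M_1 \cap M_2\vert$ cancels and one gets $\vert M_1 \setminus M_2\vert = \vert M_2 \setminus M_1\vert$. Substituting this back into the displayed expression for $\vert M_1 \sd M_2\vert$ gives $\vert M_1 \sd M_2\vert = 2\vert M_1 \setminus M_2\vert = 2\vert M_2 \setminus M_1\vert$, which is exactly the claim.

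There is no genuine obstacle here: the whole argument is elementary counting of finite sets. The only point needing a little care is that it really does rely on $M_1$ and $M_2$ being matchings of equal cardinality (perfectness is what guarantees this), so the hypothesis cannot be relaxed to arbitrary edge subsets. I expect this observation to be used later, in combination with Observation~\ref{obs:te}, to freely pass between the ``$\vert M_i \sd M_j\vert \geq d$'' formulation of the diversity constraint in \probPMD and the more convenient ``$\vert M_i \setminus M_j\vert \geq \lceil d/2 \rceil$'' formulation inside the algorithmic arguments.
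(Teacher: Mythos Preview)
Your proof is correct. The paper does not supply a proof of this observation at all; it is stated as self-evident and used later exactly as you anticipate (to pass between the $\vert M_i\setminus M_j\vert$ bounds produced by the greedy phase and the $\vert M_i\sd M_j\vert$ bounds required for the diversity constraint). Your counting argument is the natural one and nothing more is needed.
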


For an undirected graph $G$,  the Tutte matrix $\bfA$ of $G$ over the field ${\mathbb F}_2[X]$ is defined as follows,  where ${\mathbb F}_2$ is the Galois field on $\{0,1\}$ and $X=\{x_e ~\colon~ e\in E(G)\}$. The rows and columns of  $\bfA$ are labeled with $V(G)$ and for each $e=\{u,v\}\in E(G)$, $\bfA[u,v]=\bfA[v,u]=x_e$. All other entries in the matrix are zeros. That is, for any pair of vertices $u,v\in V(G)$, if there is no edge between $u$ and $v$, $\bfA[u,v]=0$. It is well known that $det(A)\neq 0$ if and only if $G$ has a perfect matching. As the characterstic 
of ${\mathbb F}_2$ is a $2$, the determinant of $\bfA$ coincides with the permanent of $\bfA$.  That is, 
\begin{equation}
\label{eqn:detper}
det(\bfA)=perm(\bfA)=\sum_{\sigma \in S_{V(G)}} \Pi_{v\in V(G)} \bfA[v,\sigma(v)]. 
\end{equation} 
Here,  $S_{V(G)}$ is the set of all permutations of $V(G)$. Let ${\sf PM}(G)$ be the set of perfect matchings on $G$. 
Then, one can show that    $det(\bfA)=\sum_{M\in {\sf PM}(G)} \Pi_{e\in M} x_e^2$.

Let $Y$ be a set of variables disjoint from $X$. For each edge $e$, let $L(e)\subseteq Y$ be a subset of variables. Let $\bfA'$ be the matrix obtained from $\bfA$ by replacing each entry of the form $x_e$ with $x_e\cdot \Pi_{y\in L(e)}y$. Then, 

\begin{eqnarray}
det(\bfA')=perm(\bfA')&=&\sum_{\sigma \in S_{V(G)}} \Pi_{v\in V(G)} \bfA'[v,\sigma(v)] \nonumber \\
&=&\sum_{\sigma \in S_{V(G)}} \Pi_{v\in V(G)} \left(\bfA[v,\sigma(v)] \cdot \Pi_{y\in L(\{v,\sigma(v)\})}y\right),  \label{eqn:detperlabel}
\end{eqnarray} 
where $\Pi_{y\in L(\{v,\sigma(v)\})}y=1$ if $\{v,\sigma(v)\}\notin E(G)$ or $L(e)=\emptyset$.  

\begin{lemma}
\label{lem:mon}
Let $G$ be an undirected graph and let $X=\{x_e ~\colon~e\in E(G)\}$ and $Y=\{y_1,\ldots,y_{\ell}\}$ be two  sets of variables such that $X\cap Y=\emptyset$.  For each edge $e\in E(G)$, we are also given a subset $L(e)\subseteq Y$. 
Let $\bfA'$ be the matrix defined as above.  
For any perfect matching $M$, $\Pi_{e\in M} x_e^2 \Pi_{y\in L(e)}y^2$ is a monomial in $det(\bfA')$. 
Moreover, for any monomial $m$ in $det(\bfA')$, $M'=\{e ~\colon~ x_e \mbox{ is a variable in }m\}$ is perfect a matching in $G$ and for each $e\in M'$, $L(e)$ is a subset of variables in the monomial $m$. 
\end{lemma}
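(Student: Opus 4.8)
The plan is to work directly from the expansion of $\det(\bfA')$ given in Equation~\eqref{eqn:detperlabel}, which writes $\det(\bfA') = \sum_{\sigma \in S_{V(G)}} \prod_{v \in V(G)} \left( \bfA[v,\sigma(v)] \cdot \prod_{y \in L(\{v,\sigma(v)\})} y \right)$, and to use the well-known correspondence (already recalled in the excerpt, namely $\det(\bfA) = \sum_{M \in \mathsf{PM}(G)} \prod_{e \in M} x_e^2$) between the surviving terms of this sum and the perfect matchings of $G$. First I would establish the forward direction: fix a perfect matching $M$. Since $\mathbb{F}_2$ has characteristic $2$, the permutation $\sigma_M$ that swaps the two endpoints of each edge of $M$ (and fixes nothing, as $M$ is perfect) contributes the term $\prod_{e = \{u,v\} \in M} x_e^2 \cdot \prod_{y \in L(e)} y^2$ to the sum, because each edge $e$ is traversed in both orders $(u,v)$ and $(v,u)$, each contributing a factor $x_e \cdot \prod_{y \in L(e)} y$. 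This is exactly the claimed monomial $\prod_{e \in M} x_e^2 \prod_{y \in L(e)} y^2$; I then need to argue it does not cancel, which follows because over $\mathbb{F}_2[X \cup Y]$ any other permutation $\sigma$ producing a nonzero term must, for every $v$, have $\{v, \sigma(v)\} \in E(G)$, so $\sigma$ is a union of transpositions (its square is identity — here I use that over $\mathbb{F}_2$ the sign is irrelevant and that a fixed point $v$ would force the zero diagonal entry $\bfA[v,v]=0$), hence $\sigma$ corresponds to a perfect matching, and two distinct perfect matchings yield monomials with distinct $X$-parts; so no cancellation with $\sigma_M$ occurs.

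For the moreover part, I would take an arbitrary monomial $m$ appearing (with nonzero coefficient) in $\det(\bfA')$ and trace it back to a permutation $\sigma$ whose term is nonzero. As just noted, nonvanishing of $\prod_v \bfA[v,\sigma(v)]$ forces $\sigma$ to have no fixed points and to satisfy $\{v,\sigma(v)\} \in E(G)$ for all $v$; moreover $\sigma$ must be an involution, for otherwise (a $\geq 3$-cycle) the monomial $\prod_v \bfA[v,\sigma(v)]$ — which is $\prod_v x_{\{v,\sigma(v)\}} \cdot \prod_v \prod_{y \in L(\{v,\sigma(v)\})} y$ — fails to match a monomial of the $M$-type; more carefully, the standard argument is that terms from non-involutive $\sigma$ pair up and cancel over $\mathbb{F}_2$, leaving only involutions, i.e.\ perfect matchings. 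Hence $\sigma$ defines a perfect matching $M' = \{ \{v,\sigma(v)\} : v \in V(G)\}$, and the $X$-variables occurring in $m$ are exactly $\{x_e : e \in M'\}$, so the description $M' = \{ e : x_e \text{ occurs in } m\}$ is correct. Finally, for each $e = \{u,v\} \in M'$ the corresponding two factors contributed $\prod_{y \in L(e)} y$ twice, so every variable of $L(e)$ divides $m$; thus $L(e)$ is a subset of the variables appearing in $m$, as claimed.

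The main obstacle I anticipate is the cancellation bookkeeping: strictly speaking one must justify that, after collecting terms over $\mathbb{F}_2[X \cup Y]$, the only surviving monomials are those coming from perfect matchings, i.e.\ that contributions of non-involutive permutations (and of involutions other than $\sigma_M$ that would produce the same monomial) cancel in pairs. The cleanest route is to note that the $Y$-variables only multiply the $x_e$'s and never interfere with the cancellation structure, so this reduces verbatim to the classical identity $\det(\bfA) = \sum_{M \in \mathsf{PM}(G)} \prod_{e \in M} x_e^2$ already cited above: apply that identity with the substitution $x_e \mapsto x_e \cdot \prod_{y \in L(e)} y$, which immediately yields $\det(\bfA') = \sum_{M \in \mathsf{PM}(G)} \prod_{e \in M} x_e^2 \prod_{y \in L(e)} y^2$, and read both parts of the lemma off this closed form. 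Everything else is a routine unwinding of that formula.
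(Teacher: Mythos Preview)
Your final substitution argument is correct and is actually cleaner than what the paper does. Since the excerpt already records the identity $\det(\bfA)=\sum_{M\in\mathsf{PM}(G)}\prod_{e\in M}x_e^2$ over $\mathbb{F}_2[X]$, and $\bfA'$ is obtained from $\bfA$ by the ring homomorphism $x_e\mapsto x_e\cdot\prod_{y\in L(e)}y$, applying that homomorphism to both sides immediately yields $\det(\bfA')=\sum_{M\in\mathsf{PM}(G)}\prod_{e\in M}x_e^2\prod_{y\in L(e)}y^2$. Distinct perfect matchings give monomials with distinct $X$-supports, so nothing cancels, and both assertions of the lemma can be read off directly. This is a genuinely different route from the paper, which instead re-derives the cancellation from scratch: it introduces \emph{cycle--matching covers} (spanning subgraphs whose components are edges or cycles), notes that each nonzero permanent term corresponds to one such cover, and then argues that a cover containing $r\geq 1$ cycles of length $\geq 3$ is hit by exactly $2^r$ permutations (two orientations per cycle), so those terms vanish in characteristic $2$. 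Your shortcut buys brevity by leaning on the already-stated unlabelled identity; the paper's argument is more self-contained and makes the pairing mechanism explicit.

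One caveat on your earlier sketch: in the first paragraph the step ``$\{v,\sigma(v)\}\in E(G)$ for all $v$, so $\sigma$ is a union of transpositions'' is a non sequitur---a fixed-point-free permutation whose orbits trace edges need not be an involution (e.g.\ a $3$-cycle on a triangle). You effectively acknowledge this in the next paragraph and fall back on the pairing/cancellation argument, which is the right fix; just be aware that the first-paragraph version, taken on its own, does not establish non-cancellation. The substitution argument at the end avoids this pitfall entirely and is the version you should keep.
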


\begin{proof}
A {\em cycle-matching cover} of $G$ is a subset of edges 
$F\subseteq E(G)$ such that $V(F)=V(G)$ and each connected component of $G[F]$ is either a cycle or an edge. 
Each non-zero term in the summation of \eqref{eqn:detperlabel}, there is a cycle-matching cover defined as follows. 
Let $\sigma\in S_{V(G)}$ such that $\Pi_{v\in V(G)} \bfA[v,\sigma(v)] \cdot \Pi_{y\in L(\{v,\sigma(v)\})}y$ is non-zero. 
Then,  $\Pi_{v\in V(G)} \bfA[v,\sigma(v)] $ is non-zero. As $G$ is a simple graph, $A[v,v]=0$. Therefore, since $\Pi_{v\in V(G)} \bfA[v,\sigma(v)] \neq 0$, there is no $1$-cycle in $\sigma$. Moreover any $\ell$-cycle in $\sigma$ corresponds to a cycle in $G$ and any $2$-cycle in $\sigma$ corresponds to an edge in $G$,  where the vertices covered in the cycle are the vertices present the cycle of the permutation. That is, for each cycle $(u_1,u_2,\ldots,u_{\ell}$ in $\sigma$, $u_1,u_2,\ldots,u_{\ell},u_1$ is a cycle in $G$ if $\ell>1$ and $u_1u_2$ is a matching edge 
if $\ell=2$.  Therefore, there is a cycle-matching cover corresponding to the non-zero term   $\Pi_{v\in V(G)} \bfA[v,\sigma(v)] \cdot \Pi_{y\in L(\{v,\sigma(v)\})}y$. 

Let $F$ be a cycle-matching cover. 
Let $\{C_1,\ldots, C_r\}$ be the set of cycles in $G[F]$ and $\{e_1,\ldots,e_{s}\}$ be the set of the edges in $F\setminus (\bigcup_{i} E(C_i))$. Let $F'=\bigcup_{i\in [r]} E(C_i)$. For each cycle $C=u_1,u_2,\ldots,u_{\ell},u_1$ in $G[F]$, where $\ell>2$ one can define two permutations $\sigma_1$ and $\sigma_2$ on $V(C)$ as follows: $(u_1,u_2,\ldots,u_{\ell})$ and $(u_1,u_{\ell},u_{\ell-1},\ldots,u_{2})$. That is,  
\[\Pi_{v\in V(C)} \bfA[v,\sigma_1(v)]=\Pi_{v\in V(C)} \bfA[v,\sigma_2(v)]=\Pi_{e\in E(C)}\left(x_e\cdot \Pi_{y\in L(e)}y\right).\]

This implies that there are $2^{r}$ terms in \eqref{eqn:detperlabel} which are equal to $\sum_{e\in F\setminus F'}\left(x_e\cdot \Pi_{y\in L(e)}y\right)^2 + \sum_{e\in F'}\Pi_{e\in F'}\left(x_e\cdot \Pi_{y\in L(e)}y\right)$ (which we call the terms corresponding to $F$).
In other words if $F$ is a perfect matching then 
$\Pi_{e\in F}\left(x_e\cdot \Pi_{y\in L(e)}y\right)^2$ is a unique term in  \eqref{eqn:detperlabel}, and if $F$ is has 
cycle  then the terms corresponding to $F$ will cancel each other, because the characteristic of the ${\mathbb F}_2[X\cup Y]$ is~$2$. Therefore, for any perfect matching $M$, $\Pi_{e\in M} x_e^2 \Pi_{y\in L(e)}y^2$ is a monomial in $det(\bfA')$.

As any non-zero term in \eqref{eqn:detperlabel} corresponds to a cycle-matching cover, and for any cycle matching cover that contains at least one cycle all the terms corresponding to it cancels each other we have the following. For any monomial $m$ in $det(\bfA')$, $M'=\{e ~\colon~ x_e \mbox{ is a variable in }m\}$ is perfect a matching in $G$.  Also, from the construction of $\bfA'$, it follows that for each $e\in M'$, $L(e)$ is subset of variables in $m$.  
\end{proof}

We use the following two known results. 

\begin{proposition}[Schwartz-Zippel Lemma~\cite{Schwartz,Zippel}]
\label{prop:SZ}
Let $P(x_1, \ldots, x_n)$ be a multivariate polynomial of total degree at most $d$ over a field ${\mathbb F}$, and $P$ is not identically zero. Let $r_1,\ldots,r_n$ be the elements in ${\mathbb F}$ choses uniformly at random with repetition. Then $\prob(P(r_1, . . . , r_n) = 0) \leq \frac{d}{|{\mathbb F}|}$.
\end{proposition}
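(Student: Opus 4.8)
The plan is to establish this by induction on the number $n$ of variables; this is the classical proof of the Schwartz--Zippel Lemma, which I would reproduce for completeness (the same argument works verbatim when the $r_i$ are instead drawn uniformly from a finite subset $S\subseteq{\mathbb F}$, yielding the bound $d/|S|$). For the base case $n=1$ the polynomial $P$ is a nonzero univariate polynomial of degree at most $d$ over a field, and such a polynomial has at most $d$ roots, so $\prob(P(r_1)=0)\le d/|{\mathbb F}|$.

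For the inductive step I would assume the statement for polynomials in at most $n-1$ variables and group the powers of $x_1$, writing
\[
P(x_1,\ldots,x_n)=\sum_{i=0}^{k}x_1^{i}\,Q_i(x_2,\ldots,x_n),
\]
where $k$ is the largest exponent of $x_1$ that occurs in $P$. Then $Q_k\not\equiv 0$, and since every monomial contributing to $Q_k$ already carries the factor $x_1^{k}$ we have $\deg Q_k\le d-k$. Now draw $r_2,\ldots,r_n$ uniformly at random, let $A$ be the event that $Q_k(r_2,\ldots,r_n)=0$, and let $B$ be the event that $P(r_1,\ldots,r_n)=0$. The key estimate is
\[
\prob(B)=\prob(B\cap A)+\prob(B\cap\overline{A})\le\prob(A)+\prob(B\mid\overline{A}).
\]
The induction hypothesis applied to $Q_k$ gives $\prob(A)\le(d-k)/|{\mathbb F}|$. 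Conditioned on $\overline{A}$, the univariate polynomial $p(x_1):=P(x_1,r_2,\ldots,r_n)=\sum_{i=0}^{k}x_1^{i}Q_i(r_2,\ldots,r_n)$ has leading coefficient $Q_k(r_2,\ldots,r_n)\neq 0$, hence is nonzero of degree exactly $k$; applying the base case to $p$ over the random choice of $r_1$ gives $\prob(B\mid\overline{A})\le k/|{\mathbb F}|$. Adding the two bounds yields $\prob(B)\le(d-k)/|{\mathbb F}|+k/|{\mathbb F}|=d/|{\mathbb F}|$, which completes the induction.

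This proof is entirely routine, so there is no genuine obstacle; the only points that need care are the degree bookkeeping---checking that $Q_k\not\equiv 0$ and $\deg Q_k\le d-k$ so that the induction hypothesis is applied with the correct degree parameter---and the observation that conditioning on the event $\overline{A}$ is precisely what makes the residual univariate polynomial $p(x_1)$ nonzero, which is exactly what licenses the use of the base case in the second branch.
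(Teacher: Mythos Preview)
Your proof is correct and is the standard induction-on-$n$ argument for the Schwartz--Zippel Lemma. The paper itself does not prove this proposition; it is quoted as a known result with citations to Schwartz and Zippel and used as a black box, so there is no ``paper's own proof'' to compare against. Your write-up is fine and the only caveat---which you already flag in passing---is that the statement as written implicitly assumes $|{\mathbb F}|<\infty$ (otherwise sampling uniformly from ${\mathbb F}$ is undefined), and the more general version samples from a finite subset $S\subseteq{\mathbb F}$; this is exactly how the paper applies it later, working in a field extension of ${\mathbb F}_2$ of controlled finite size.
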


For a multivariate polynomial $P$ and a monomial $m$, we let $P(m)$ denote the coefficient of $m$ in $P$.

\begin{proposition}[\cite{Wahlstrom13}]
\label{prop:magnus}
Let $P(x_1, \ldots , x_n)$ be a polynomial over a field of characteristic two, and $T \subseteq  [n]$ 
be a set of target indices. For a set $I \subseteq  [n]$, define $P_{-I} (x_1, \ldots , x_n) = P(y_1, \ldots , y_n)$ where $y_i = 0$  for $i \in I$ and $y_i = x_i$ otherwise. Define
$$Q(x_1,\ldots, x_n) = \sum_{I\subseteq T} P_{-I} (x_1, \ldots , x_n).$$
Then, for any monomial m such that $t := \Pi_{i\in T}x_i$ divides $m$ we have $Q(m) = P(m)$, and for
every other monomial we have $Q(m) = 0$.
\end{proposition}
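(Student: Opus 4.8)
The plan is to compute the coefficient of an arbitrary monomial $m$ in $Q$ directly, by tracking exactly which of the summands $P_{-I}$ contribute to it. First I would fix a monomial $m$ and let $S \subseteq [n]$ denote its \emph{support}, that is, the set of indices $j$ such that the variable $x_j$ occurs in $m$ with positive exponent. The basic observation is that substituting $x_i := 0$ in a polynomial deletes precisely those monomials that are divisible by $x_i$ and leaves every other monomial, together with its coefficient, unchanged; in particular it acts monomial-by-monomial and does not merge distinct surviving monomials. Consequently, for a set $I \subseteq [n]$, the coefficient of $m$ in $P_{-I}$ equals $P(m)$ if $I \cap S = \emptyset$, and equals $0$ otherwise.

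Summing this identity over all $I \subseteq T$ gives
\[
Q(m) \;=\; \sum_{I \subseteq T} P_{-I}(m) \;=\; P(m)\cdot \bigl|\{\, I \subseteq T : I \cap S = \emptyset \,\}\bigr| \;=\; P(m)\cdot 2^{\,|T \setminus S|},
\]
since the subsets $I \subseteq T$ that are disjoint from $S$ are exactly the subsets of $T \setminus S$, of which there are $2^{|T \setminus S|}$.

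Now I would invoke the characteristic-two hypothesis: the integer $2^{|T\setminus S|}$, regarded as an element of the field, equals $1$ when $|T \setminus S| = 0$ and equals $0$ when $|T \setminus S| \geq 1$. Finally, $t = \prod_{i \in T} x_i$ is a squarefree monomial, so $t$ divides $m$ precisely when each variable $x_i$ with $i \in T$ occurs in $m$, i.e.\ precisely when $T \subseteq S$, which is the same condition as $|T \setminus S| = 0$. Combining the two cases, $Q(m) = P(m)$ when $t \mid m$ and $Q(m) = 0$ for every other monomial, which is exactly the claim.

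I do not anticipate any real obstacle here: the whole argument is a one-line inclusion–exclusion that collapses because we are in characteristic two. The only point that warrants a moment's care is the first observation — that evaluating a subset of the variables at $0$ acts on $P$ monomial-by-monomial, without introducing cancellations among the monomials that survive — and this is immediate once one writes $P$ as an explicit $\mathbb{F}$-linear combination of distinct monomials and notes that distinct surviving monomials remain distinct after the substitution.
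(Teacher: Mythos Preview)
Your argument is correct: the key observation that $P_{-I}(m) = P(m)$ when $I$ avoids the support $S$ of $m$ and vanishes otherwise, followed by the count $2^{|T\setminus S|}$ and its collapse in characteristic two, is exactly the right computation, and your identification of $t \mid m$ with $T \subseteq S$ is clean.

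There is nothing to compare against in the paper itself: the proposition is quoted from~\cite{Wahlstrom13} and stated without proof. Your one-line inclusion--exclusion is the standard way to see it and is complete as written.
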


\begin{lemma}
\label{lem:PMDstep1}
There is an algorithm that given an undirected graph $G$, perfect matchings $M_1,\ldots, M_r$, and a non-negative integer $s$, runs in time $2^{\Oh(rs)} n^{\Oh(1)}$, and outputs a perfect matching $M$ such that $\vert M \setminus M_i\vert \geq s$ for all $i\in \{1,\ldots,r\}$ (if such a matching exists) with probability at least $\frac{2}{3}e^{-rs}$.   
\end{lemma}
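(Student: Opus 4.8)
The plan is to combine the Tutte matrix encoding (Equation~\eqref{eqn:detperlabel} and Lemma~\ref{lem:mon}) with the color-coding technique and the polynomial-sieving result Proposition~\ref{prop:magnus}. The idea is that a perfect matching $M$ satisfying $\vert M\setminus M_i\vert\geq s$ for all $i$ is one that contains at least $s$ edges \emph{outside} each $M_i$. I would introduce, for each $i\in\{1,\dots,r\}$, a fresh set of $s$ ``slot'' colors, so $rs$ colors in total; call the color set $C$ with $\vert C\vert = rs$. A random coloring assigns to each edge $e\notin M_i$ a uniformly random color among the $s$ colors reserved for index $i$ (independently for each $i$ that $e$ avoids), and assigns nothing for the indices $i$ with $e\in M_i$. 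We want to detect a perfect matching $M$ that is \emph{colorful} in the sense that for every $i$, among the edges of $M$ lying outside $M_i$, all $s$ colors reserved for $i$ appear. If such an $M$ exists in the graph with $\vert M\setminus M_i\vert\geq s$, then a random coloring makes it colorful with probability at least $\prod_i \frac{s!}{s^s}\geq e^{-rs}$ (using $s!/s^s\geq e^{-s}$); this is where the $e^{-rs}$ success probability comes from.

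\textbf{Algebraic detection of a colorful matching.} I would take the variable set $Y$ of Lemma~\ref{lem:mon} to be indexed by the $rs$ colors, $Y=\{y_{(i,j)} : i\in[r], j\in[s]\}$, and for each edge $e$ set $L(e)$ to be the singleton set containing the $Y$-variable of the color that the random coloring assigned to $e$ for index $i$, ranging over all $i$ with $e\notin M_i$ — so $\vert L(e)\vert$ equals the number of matchings $M_i$ that $e$ avoids. Form the matrix $\bfA'$ as in the excerpt and its polynomial $\det(\bfA')$. By Lemma~\ref{lem:mon}, the monomials of $\det(\bfA')$ are exactly $\prod_{e\in M}x_e^2\prod_{y\in L(e)}y^2$ over perfect matchings $M$, and a matching $M$ is colorful for our coloring precisely when the target product $t:=\prod_{(i,j)}y_{(i,j)}^{\,2}$ divides the corresponding monomial. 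Now apply Proposition~\ref{prop:magnus} with the target index set $T$ being (two copies of) the $rs$ color variables — more precisely, work over $\mathbb{F}_2[X\cup Y]$ where $y_{(i,j)}^2$ behaves like a single variable since the ground field has characteristic two, so $T$ has size $rs$. The sieved polynomial $Q=\sum_{I\subseteq T}\det(\bfA')_{-I}$ keeps exactly the monomials divisible by $t$ and kills the rest; hence $Q\not\equiv 0$ iff $G$ has a perfect matching colorful for the current coloring, which in turn (for at least one coloring, with the stated probability) certifies a matching with $\vert M\setminus M_i\vert\geq s$ for all $i$.

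\textbf{Testing $Q\not\equiv 0$ and extracting the matching.} Since $\det(\bfA')$ has total degree $\Oh(n)$ and $Q$ is a sum of $2^{rs}$ such determinants, $Q$ has total degree $\Oh(n)$; I would evaluate $Q$ at random field elements from an extension field $\mathbb{F}_{2^t}$ of size $\gg n$ and apply the Schwartz–Zippel Lemma (Proposition~\ref{prop:SZ}) to get a one-sided test with small error. Computing one evaluation of $Q$ costs $2^{\Oh(rs)}$ determinant computations, each $n^{\Oh(1)}$, so the whole test is $2^{\Oh(rs)}n^{\Oh(1)}$. To actually \emph{output} the matching (not merely decide existence), I would do the standard self-reduction: iterate over the edges, delete an edge, re-run the test, and keep the edge iff deletion destroys the property; after $\Oh(n)$ rounds the surviving edge set is a perfect matching of the desired kind. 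Repeating the random coloring $\Oh(1)$ times and boosting the Schwartz–Zippel success probability, the overall success probability is at least $\tfrac{2}{3}e^{-rs}$ as claimed, with running time $2^{\Oh(rs)}n^{\Oh(1)}$.

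\textbf{The main obstacle} I anticipate is bookkeeping the interaction between the ``color'' variables and the characteristic-two squaring: in $\det(\bfA')$ every $y$-variable appears squared, so to apply Proposition~\ref{prop:magnus} cleanly one must either substitute $z_{(i,j)}=y_{(i,j)}^2$ and argue that divisibility by $\prod z_{(i,j)}$ is the right colorfulness condition, or verify directly that the sieve behaves correctly on squared variables. A secondary subtlety is the precise probability computation — ensuring the random coloring succeeds with probability at least $e^{-rs}$ (via $s!/s^s\ge e^{-s}$) and that combining with the Schwartz–Zippel and self-reduction steps still yields the clean constant $\tfrac{2}{3}$; this is routine but must be done carefully to get the stated bound. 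Everything else — the Tutte matrix setup, Lemma~\ref{lem:mon}, and the running-time accounting — goes through directly from the tools already established in the excerpt.
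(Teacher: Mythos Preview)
Your proposal is correct and follows essentially the same approach as the paper: random color-coding with $s$ colors per index $i$ (so $rs$ colors total) applied to edges outside each $M_i$, the labeled Tutte matrix $\bfA'$ of Lemma~\ref{lem:mon}, the inclusion--exclusion sieve of Proposition~\ref{prop:magnus} over the $rs$ color variables to isolate monomials in which every color appears, a Schwartz--Zippel test over a suitable extension of $\mathbb{F}_2$, and a self-reduction (zeroing out edge variables one by one) to extract the matching. The squared-variable worry you flag is a non-issue in the paper's treatment: the sieve is applied to the $y_{(i,j)}$ themselves, and since setting $y_{(i,j)}=0$ also kills $y_{(i,j)}^2$, Proposition~\ref{prop:magnus} applies verbatim with $T=Y$; likewise, the paper uses a single random coloring (no $\Oh(1)$ repetitions) and achieves the $\tfrac{2}{3}$ factor purely by choosing the field large enough that the union bound over the $\Oh(m)$ Schwartz--Zippel tests in the self-reduction gives total failure probability at most $\tfrac{1}{3}$.
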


\begin{proof}
For each $i\in \{1,\ldots,r\}$, we color each edge in $E(G)\setminus M_i$ uniformly at random using colors $\{c_{i,1},\ldots,c_{i,s}\}$. Now we label each edge with a subset of the variable set $Y=\{y_{i,j}~\colon~ i\in [r], j\in [s]\}$.   For each edge $e$, we label it with $$L(e)=\{y_{i,j}~\colon~ i\in [r]  \mbox{ and $e$ is colored with $c_{i,j}$ in the random coloring for $i$}\}.$$ Let $\bfA$ be the 
Tutte matrix of $G$ over the field ${\mathbb F}_2[X]$, where $X=\{x_e \colon e\in E(G)\}$. 
Let $\bfA'$ be the matrix obtained from $\bfA$ by replacing each entry of the form $x_e$ with $x_e\cdot \Pi_{y\in L(e)}y$. 

Suppose there is a matching $M$ such that $\vert M \setminus M_i\vert \geq s$ for all $i\in \{1,\ldots,r\}$. Then, for each  $i\in [r]$, let $\{e_{i,1},\ldots,e_{i,s}\}\subseteq (M \setminus M_i)$ be an arbitrary subset.  We say that 
$\{e_{i,1},\ldots,e_{i,s}\}$ is colorful if the edges in $\{e_{i,1},\ldots,e_{i,s}\}$ gets distinct colors from $\{c_{i,1},\ldots,c_{i,s}\}$ in the random coloring for $i$. 
Then for each $i\in [r]$, 
the probability that $\{e_{i,1},\ldots,e_{i,s}\}$
is colorful is $\frac{s!}{s^s}\geq e^{-s}$. 
For each $q\in [r]$, let $E_q$ be the even that $\{e_{q,1},\ldots,e_{q,s}\}$ is colorful. 
As the random coloring for $i\in  [r]$ is different from the random coloring for $j\in [r]\setminus \{i\}$, the events 
$E_i$ and $E_j$ are independent. That is, $E_1,\ldots,E_r$ are independent events and hence $\prob[\bigcap_{i=1}^r E_i]\geq e^{-rs}$. Therefore,  there is a monomial $m$ in $det(\bfA')$ with probability at least 
$e^{-rs}$ such that $M=\{e\in E(G) ~\colon~ x_e \mbox{ is a variable in }m\}$ and $Y$ is a subset of variables in $m$. 

Now, suppose there is a monomial $m$ in $det(\bfA')$ such that $Y$ is a subset of variables in $m$. 
Therefore, since  for each $i\in [r]$ only the edges in $E(G)\setminus M_i$ are colored and $\{y_{i,j}~\colon~j\in [s]\}\subseteq Y$, we have that $\vert M\setminus M_i\vert \geq s$. Moreover, by Lemma~\ref{lem:mon}, $\{e \in E(G)~\colon~ x_e \mbox{ is a variable in }m\}$ is a perfect matching in $G$. 

Thus, it is enough to check whether there exists a monomial $m$ in $det(\bfA')$ such that all the variables in $Y$ are present in $m$.

\begin{claim}
\label{claim:matchingoutput}
There is an algorithm, that runs in time $2^{\Oh(|Y|)} n^{\Oh(1)}$ and it outputs the following. 
If there is no monomial in $det(\bfA')$ that contains $Y$, then the algorithm outputs No. If there is a monomial in $det(\bfA')$ that contains $Y$, then the algorithm outputs $Z\subseteq X$ with probability at least $2/3$ such that there is a monomial $m$ in $det(\bfA')$ with variables in $m$ is exactly equal to $Z\cup Y$. 
\end{claim}

\begin{claimproof}
Let $\{e_1,\ldots,e_{m}\}=E(G)$.  Let $P(x_{e_1},\ldots,x_{e_m},y_{1,1},\ldots,y_{r,s})=det(\bfA')$. For each  $I\subseteq Y$, we define $P_{-I}(x_{e_1},\ldots,x_{e_m},y_{1,1},\ldots,y_{r,s})=P(x_{e_1},\ldots,x_{e_m},z_{1,1},\ldots,z_{r,s})$, where  for all $i\in [r]$ and $j\in [s]$, $z_{i,j}=0$ if $y_{i,j}\in I$ and $z_{i,j}=y_{i,j}$ otherwise. 
Let 
$$Q=Q(x_{e_1},\ldots,x_{e_m},y_{1,1},\ldots,y_{r,s})=\sum_{I\subseteq Y} P_{-I}(x_{e_1},\ldots,x_{e_m},y_{1,1},\ldots,y_{r,s}).$$

By Proposition~\ref{prop:magnus}, the set of monomials in $Q$ are the set of monomial is $det(\bfA')$ that contains $Y$. Our objective is to find out the variables in such a monomial if it exists. Towards that we consider $Q$ be a polynomial in a field extension ${\mathbb F}'$ of ${\mathbb F_2}$ such that the number of elements in   the field ${\mathbb F}'$ is at least $t$, which we fix later.  From the construction of $det(\bfA')$, we know that the degree 
of $det(\bfA')$ and $Q$ is at most $d=n+2rs$. By Proposition~\ref{prop:SZ}, the existence of a required monomial can be tested in polynomial time with failure probability at most $\frac{d}{t}$. But, recall that we need to find out the variables in such a monomial. Towards that we do the following.  Notice that if $Q$ is a polynomial identically zero, then our answer is No.  This can be checked using Proposition~\ref{prop:SZ}. Assume that $Q$ is not identically zero. Let $Q_1=Q(0,x_{e_2},\ldots,x_{e_m},y_{1,1},\ldots,y_{r,s})$. 
If $Q_1$ is not identically equal to zero, then there is a monomial satisfying the property mentioned in the statement of the claim is present in $Q_1$. Otherwise we know that every monomial with the required property contains the variable $x_{e_1}$. Again this can be checked using Proposition~\ref{prop:SZ}. So if $Q_1$ is not identically zero, then set $Q=Q_1$. Next, we let $Q_2=Q(x_{e_1},0,x_{e_2},\ldots,x_{e_m},y_{1,1},\ldots,y_{r,s})$. 
Again, $Q_2$ is not identically equal to zero, then there is a monomial  satisfying the property mentioned in the statement of the claim, is present in $Q_2$. Otherwise we know that every monomial with the required property contains the variable $x_{e_2}$. By repeating this process at most $m$ times, we will be able to obtain all the variables present in the required monomial. Our algorithm will succeed if  all the $m+1$ application of Proposition~\ref{prop:SZ} do not fail. Thus, by union bound the failure probability of  is at most $(m+1)\frac{d}{t}$. We set $(m+1)\frac{d}{t}=\frac{1}{3}$ and this implies that $t=(m+1)(n+rs)$. Hence the success probability of our algorithm is at least $\frac{2}{3}$.  

Towards the running time analysis, notice that the construction of the polynomial $Q$ takes time $2^{\vert Y\vert} n^{\Oh(1)}$ and each application of Proposition~\ref{prop:SZ} takes time polynomial in the size of $Q$. This implies that the total running time is bounded by $2^{\Oh(|Y|)} n^{\Oh(1)}$. 
\end{claimproof}

Now we run the algorithm in Claim~\ref{claim:matchingoutput}, and get a subset $Z\subseteq \{x_e~\colon~ e\in E(G)\}$ (if it exists) such that there is a monomial  $m$ in $det(\bfA')$ and the variables in $m$ is exactly equal to 
$Z\cup Y$ with probability at least $2/3$. As the initial random coloring of edges succeeds with probability at least 
$e^{-rs}$, the success probability of out algorithm is at least $\frac{2}{3}e^{-rs}$. If no such monomial exists then the algorithm outputs No. By Lemma~\ref{lem:mon}, $M=\{e~\colon~ x_e\in Z\}$ is a perfect matching and it is the required output. The running time of the algorithm follows from Claim~\ref{claim:matchingoutput}. 
This completes the proof of the lemma. 
\end{proof}

\begin{lemma}
\label{lem:PMDstep2}
There is an algorithm that given an undirected graph $G$, a perfect matching $M$, and  non-negative integers $r,d,s$, runs in time $2^{\Oh(r^2s)} n^{\Oh(1)}$, and outputs $r$ 
perfect matchings $M_1^{\star},\ldots,M_r^{\star}$ such that $\vert M \sd M_i^{\star}\vert \leq s$ for all $i\in \{1,\ldots,r\}$ 
and $\vert M_i^{\star} \sd M_j^{\star} \vert\geq d$ for all distinct $i,j\in [r]$ (if such  matchings exist) with probability at least $e^{-rs}$.   If no such perfect matchings exist, then the algorithm outputs No 
\end{lemma}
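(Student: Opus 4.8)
The plan is to recast the lemma in terms of the alternating subgraphs $D_i := M \sd M_i^\star$ and then to solve the recast problem by a combination of color coding and dynamic programming.

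\emph{Structural reduction.} First I would use the standard fact that for every perfect matching $M'$ of $G$ the set $M \sd M'$ is a vertex-disjoint union of $M$-alternating even cycles, spanning exactly $|M \sd M'|$ vertices, and that conversely, swapping $M$ along any vertex-disjoint family of $M$-alternating even cycles yields a perfect matching. Since $M_i^\star \sd M_j^\star = (M \sd M_i^\star) \sd (M \sd M_j^\star) = D_i \sd D_j$, the lemma is equivalent to finding families $D_1,\dots,D_r$, each a vertex-disjoint union of $M$-alternating even cycles of $G$ with $|D_i| \le s$, such that $|D_i \sd D_j| \ge d$ for all distinct $i,j$; one then outputs $M_i^\star := M \sd D_i$. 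Writing $U_i := V(D_i)$, each $U_i$ is a union of $M$-edges with $|U_i| = |D_i| \le s$; the set $D_i$ is the disjoint union of $M[U_i]$ with a perfect matching $N_i$ of $G[U_i]$ that avoids $M$ (every vertex of $U_i$ is moved); and $|D_i \sd D_j| = |U_i| + |U_j| - |U_i \cap U_j| - 2|N_i \cap N_j|$. Thus all activity happens inside $W := U_1 \cup \dots \cup U_r$, which is a union of at most $rs/2$ $M$-edges.

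\emph{Color coding.} I would color the $M$-edges of $G$ uniformly at random with $rs$ colors. If a solution exists, fix one; the set $W$ consists of at most $rs/2$ of the $M$-edges, so they receive pairwise distinct colors with probability at least $\tfrac{(rs)!}{(rs)^{rs}} \ge e^{-rs}$. Condition on this event, which I call ``$W$ is colorful''.

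\emph{Dynamic programming.} Given the coloring, I would search for a solution whose $W$ is colorful by processing the color classes $1,\dots,rs$ in order. At class $t$ we choose at most one $M$-edge $e=\{u,v\}$ of color $t$ to place into $W$ and, if we do, a nonempty $T \subseteq [r]$ recording the indices $i$ with $e \subseteq U_i$; for each such $i$ we then decide the $N_i$-partner of $u$ and of $v$ — either an ``open'' endpoint of an earlier $M$-edge of $U_i$, joined to it by a non-$M$ edge, or the vertex stays open to be matched at a later class. The DP state after class $t$ records, for each $i\in[r]$, the set $O_i$ of currently open endpoints of $U_i$ (so $|O_i| \le |U_i| \le s$), together with the $\Oh(r^2)$ running counters $|U_i|$, $|U_i\cap U_j|$, $|N_i\cap N_j|$, each bounded by $s$; a configuration is extendable while every $|U_i|\le s$, and accepting at the end if every $O_i$ is empty and $|U_i|+|U_j|-|U_i\cap U_j|-2|N_i\cap N_j| \ge d$ for all distinct $i,j$ (a non-$M$ edge may serve in several $N_i$ simultaneously, and the counters $|N_i\cap N_j|$ are updated from the per-$i$ choices). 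The number of states is at most $(2^{rs})^{r}\cdot(s+1)^{\Oh(r^2)} = 2^{\Oh(r^2 s)}$, since each $O_i$ is a subset of at most $rs$ endpoint slots; there are $rs$ classes, and each transition branches in $n^{\Oh(1)}\cdot 2^{\Oh(rs)}$ ways, so the DP runs in $2^{\Oh(r^2 s)} n^{\Oh(1)}$ time. If an accepting configuration is reached, backtracking recovers $D_1,\dots,D_r$, and we output $M_i^\star = M \sd D_i$; since the DP is exact for the fixed coloring, these genuinely satisfy all requirements. If no solution exists at all, no coloring produces an accepting configuration, so the algorithm reports \no; and if solutions exist, the DP succeeds whenever $W$ is colorful, hence with probability at least $e^{-rs}$.

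The hard part is the dynamic program: we are building $r$ matchings $N_1,\dots,N_r$ simultaneously inside the overlapping sets $U_1,\dots,U_r$ while steering all $\binom{r}{2}$ pairwise symmetric differences above $d$, so the state must carry both the per-class open-endpoint sets and the pairwise overlap counters. The real work is verifying that the transitions enumerate exactly the pairs $(U_i,N_i)$ we need — every $M$-avoiding perfect matching of $G[U_i]$, with shared $N_i$-edges counted once per pair — and that the state count stays $2^{\Oh(r^2 s)}$.
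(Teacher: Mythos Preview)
Your structural reduction and the color-coding setup are fine, but the running-time analysis of the dynamic program has a real gap. You claim the state count is $(2^{rs})^r \cdot (s+1)^{\Oh(r^2)}$ on the grounds that ``each $O_i$ is a subset of at most $rs$ endpoint slots.'' The trouble is that these ``slots'' are the endpoints of the $M$-edges chosen along the current DP branch, and \emph{which} $M$-edges those are varies from branch to branch; there is no fixed $rs$-element vertex set that all the $O_i$'s live inside. For memoization to be sound the state must record each $O_i$ as an actual subset of $V(G)$, because every transition that matches a newly added vertex $u$ to an open endpoint $w\in O_i$ must test whether $\{u,w\}$ is a non-$M$ edge of $G$. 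Since $|O_i|\le s$, the number of possible $O_i$'s is $n^{\Oh(s)}$, giving $n^{\Oh(rs)}$ states in total; your algorithm as written runs in $n^{\Oh(rs)}$ time, not $2^{\Oh(r^2s)}n^{\Oh(1)}$.

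The paper sidesteps this by coloring \emph{all} edges of $G$ (not only the $M$-edges), so that the color set $C_i$ of $S_i=M\sd M_i$ encodes both the $M$-half and the non-$M$-half of the alternating cycles. The key observation is that once one guesses $C_1,\ldots,C_r$ (at cost $(2^{rs})^r=2^{r^2s}$), the $r$ tasks decouple completely: \emph{any} collection $Q_i$ of alternating cycles using each color of $C_i$ exactly once already forces $|Q_i\sd Q_j|\ge |C_i\sd C_j|\ge d$, so there is no need to track $|N_i\cap N_j|$ or to build the $Q_i$'s jointly. Each $Q_i$ can then be found by an independent colorful DP whose state is a color subset of $C_i$ together with the two endpoints of a single open alternating path, i.e.\ $2^{|C_i|}\cdot n^{\Oh(1)}$ states. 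It is precisely this decoupling---enabled by coloring all edges---that keeps the vertex dependence polynomial; by coloring only the $M$-edges you discard the information that would let the subproblems separate, and are then forced to carry the vertex-level open sets $O_i$ through the DP.
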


\begin{proof}
Suppose there exist perfect matchings $M_1,\ldots,M_r$ such that $\vert M \sd M_i\vert \leq s$ for all $i\in \{1,\ldots,r\}$ and $\vert M_i \sd M_j \vert\geq d$ for all distinct $i,j\in [r]$. Then we know that $\sum_{i=1}^r\vert M \sd M_i\vert \leq rs$. Let $S_i=M \sd M_i$ for all $i\in [r]$. Notice that, as $M$ and $M_i$ are perfect matchings $S_i$ forms a collection of alternating cycles (i.e., edges in the cycles alternate between $M$ and $M_i$). 
Let $S=\bigcup_{i=1}^r S_i$. 

We do a random coloring on the edges of $G$ using $rs$ colors. That is, we color each edge of $G$ uniformly at random with a color from $\{1,\ldots,rs\}$. We say that the random coloring is {\em good} if all the edges 
in   $S$ gets distinct colors. The probability that the random coloring is good is $e^{-rs}$. 

 Now on assume that the random coloring is good. 
For each $i\in [r]$, let $C_i$ be the set of colors on the edges $S_i$. Notice that $\vert C_i\vert =\vert S_i \vert$. 
\begin{claim}
For any two distinct integers $i,j\in [r]$, $\vert C_i\sd C_j \vert \geq d$. 
\end{claim}
\begin{claimproof}
We know that $\vert M_i \sd M_j \vert \geq d$. Let $E_i=M_i\setminus M_j$ and $E_j=M_j\setminus M_i$. 
Notice that $\vert E_i\vert + \vert E_j\vert \geq d$ and $E_i\cap E_j=\emptyset$. 

Let $E_{i,1}=E_i\cap M$, $E_{i,2}=E_i\setminus E_{j,1}$, $E_{j,1}=E_j\cap M$, and $E_{j,2}=E_j\setminus E_{j,1}$. 
As $E_{i,1} \subseteq M\setminus M_j$ and $E_{i,1}\subseteq M\cap M_i$, we have that $E_{i,1}\subseteq S_{j}\setminus S_i$. Similarly $E_{j,1}\subseteq S_i\setminus S_j$.  As $E_{i,2}\subseteq M_i \setminus (M_j\cup M)$, we have that $E_{i,2}\subseteq S_i\setminus S_j$. Similarly, we have $E_{j,2}\subseteq S_j\setminus S_i$. That is, 
we prove that $E_{i,1}\cup E_{j,2} \subseteq S_{j}\setminus S_i$ and $E_{j,1}\cup E_{i,2} \subseteq S_{i}\setminus S_j$. Also, since all the edges in $S$ gets distinct colors and the colors on the edges in $S_i$ and $S_j$ are $C_i$ and $C_j$, respectively,  we have that  $\vert C_i\sd C_j \vert \geq \vert E_i\cup E_j\vert \geq d$.  
This completes the proof of the claim. 
\end{claimproof}

Next we prove the reverse direction of the above claim. 

\begin{claim}
  Let $Q_1$ and $Q_2$ be two collections of alternating cycles in $G$ (i.e., the
  edges in $Q_i$ are alternating between $M$ and $E(G)\setminus M$ for each
  $i\in \{1,2\}$) such that following hold: $\vert E(Q_1)\vert =\vert C_1\vert$,
  $\vert E(Q_2)\vert =\vert C_2\vert$, the edges in $E(Q_1)$ uses distinct
  colors from $C_1$, and the edges in $E(Q_2)$ uses distinct colors from $C_2$.
  Let $P_1= E(Q_1)\sd M$ and $P_2= E(Q_2)\sd M$. Then, $P_1$ and $P_2$ are
  perfect matchings and $\vert P_1\sd P_2 \vert \geq d$.
\end{claim}

\begin{claimproof}
  As $M$ is a perfect matching and $Q_1$ is a collection of alternating cycles,
  we have that $P_1=E(Q_1)\sd M$ is a perfect matching. By similar arguments, we
  have that $P_2$ is a perfect matching. Now we prove that $E(Q_1)\sd
  E(Q_2)\subseteq P_1 \sd P_2$. Consider an edge $e\in E(Q_1)\setminus E(Q_2)$.
  We have two cases based on whether $e\in M$ or not. In the first case, assume
  that $e\in M$. Since $e\in E(Q_1)$, $e\in M$, and $P_1=E(Q_1)\sd M$, we have
  that $e\notin P_1$. Also since $e\notin E(Q_2)$, $e\in M$, and $P_2=E(Q_2)\sd
  M$, we have that $e\in P_2$. Therefore, $e\in P_1 \sd P_2$.

  For the second case, we have that $e\notin M$. Since $e\in E(Q_1)$, $e\notin
  M$, and $P_1=E(Q_1)\sd M$, we have that $e\in P_1$. Also, since $e\notin
  E(Q_2)$ and $e\notin M$, we have that $e\notin P_2$. Therefore, $e\in P_1\sd
  P_2$.

  By arguments, similar to above, one can prove that an edge $e'\in
  E(Q_2)\setminus E(Q_1)$ also belongs to $P_1\sd P_2$. Thus, we proved that
  $E(Q_1)\sd E(Q_2)\subseteq P_1 \sd P_2$. Since $\vert E(Q_1)\vert =\vert
  C_1\vert$, $\vert E(Q_2)\vert =\vert C_2\vert$, the edges in $E(Q_1)$ uses
  distinct colors from $C_1$, and the edges in $E(Q_2)$ uses distinct colors
  from $C_2$, we have that $\vert E(Q_1)\sd E(Q_2)\vert \geq \vert C_1\sd
  C_2\vert\geq d$. Therefore, $\vert P_1 \sd P_2\vert \geq \vert E(Q_1)\sd
  E(Q_2)\vert \geq d$.
\end{claimproof}

Thus, to prove the lemma, it is enough to find a collection $Q_i$ of alternating cycles such that  $\vert E(Q_i)\vert =\vert C_i\vert$,  the edges in $E(Q_i)$ uses distinct colors from $C_i$, for each $i\in [r]$. That is, our algorithm guesses $C_1,\ldots C_r$ and computes $Q_1,\ldots,Q_r$. The cost of guessing $C_1,\ldots,C_r$ is $2^{r^2s}$.  
Now, given $C_i$, to compute $Q_i$ with desired property ($\vert E(Q_i)\vert =\vert C_i\vert$,  the edges in $E(Q_i)$ are colored with distinct colors from $C_i$), we design a simple dynamic programming (DP) algorithm. We give a brief outline of this algorithm below. For each subset $L\subseteq C_i$ and pair of vertices $u,v$ we have table entries $D[L,u,v]$ and $D[L,\bot,\bot]$ which stores the following. If there is a collection $Q$ of alternating cycles and an alternating path with $u$ and $v$ as endpoints such that $\vert E(Q)\vert =\vert L\vert$ and the edges in $E(Q)$ are colored with distinct colors from $L$, then we store one such collection in $D[L,u,v]$. Otherwise, we store $\bot$ in $D[L,u,v]$. If there is a collection $Q'$ of alternating cycles such that $\vert E(Q')\vert =\vert L\vert$ and the edges in $E(Q')$ are colored with distinct colors from $L$, then we store one such collection in $D[L,\bot,\bot]$. Otherwise, we store $\bot$ in $D[L,\bot,\bot]$.  We compute the DP table entries in the increasing order of the size of $L$. The base case is when $L=\emptyset$. That is, $D[\emptyset,\bot,\bot]=\emptyset$ and 
$D[\emptyset,u,v]=\bot$ for any two vertices $u$ and $v$. Now, for any $\emptyset \neq L\subseteq C_i$, and two distinct vertices $u,v\in V(G)$, we compute $D[L,u,v]$ and $D[L,\bot,\bot]$ as follows. If there is an edge $(x,y)\in E(G)$ such that the color $c$ of $(x,y)$ belongs to $L$, and $D[L\setminus \{c\},x,y]=Q_1\neq \bot$, then we store the graph induced on $E(Q_1)\cup \{(x,y)\}$ in $D[L,\bot,\bot]$. Otherwise we store $\bot$ in  $D[L,\bot,\bot]$. Also, if there a vertex $w$ adjacent to $v$ such that 
the color $c$ of $(w,v)$ belongs to $L$, and $D[L\setminus \{c\},u,w]=Q_1'\neq \bot$, then we store the graph induced on $E(Q'_1)\cup \{(w,v)\}$ in $D[L,u,v]$. Otherwise we store $\bot$ in  $D[L,u,v]$. At the end we output $D[C_i,\bot,\bot]$. 

By using standard induction, one can prove that the computation of $D[C_i,\bot,\bot]$ is correct and the details is omitted here. 
As the number of table entries for $D[.,.,.]$ is upper bounded by $2^{rs+1}n^2$, the running time to compute 
$Q_i$ is $2^{rs} n^{\Oh(1)}$. We have already mentioned that the cost of guessing $C_1,\ldots,C_r$ is $2^{r^2s}$. Therefore, the total running time to compute the required $r$ perfect matchings is $2^{\Oh(r^2s)} n^{\Oh(1)}$.
\end{proof}

Finally, we put together both the lemmas and prove the main theorem of the section. 

\divMatchingFPT*

\begin{proof}
  Let $(G,k,d)$ be the input instance. Our algorithm ${\cal A}$ has two steps.
  In the first step of ${\cal A}$ we compute a collection of matchings greedily
  such that they are far apart using Lemma~\ref{lem:PMDstep1}. Towards that
  first we run an algorithm to compute a maximum matching in $G$ and let $M_1$
  be the output. If $M_1$ is not a perfect matching we output No and stop. Next
  we iteratively apply Lemma~\ref{lem:PMDstep1} to compute a collection of
  perfect matchings that are far apart. Formally, at the beginning of step $i$,
  where $\leq 1\leq i<k$, we have perfect matchings $M_1,\ldots,M_i$ such that
  $\vert M_{j}\setminus M_{j'}\vert \geq 2^{k-i}d$ for any two distinct $j,j'\in
  \{1,\ldots,i\}$. Now, we apply Lemma~\ref{lem:PMDstep1} with $r=i$ and
  $s=2^{k-i-1}d$ and it will either output a matching $M_{i+1}$ such that $\vert
  M_{i+1}\setminus M_j\vert \geq 2^{k-i-1}d$ for all $j\in \{1,\ldots,i\}$, or
  not. If no such matching exists, then the first step of the algorithm ${\cal
    A}$ is complete. So at the end of the first step of the algorithm ${\cal
    A}$, we have perfect matchings $M_1,\ldots,M_q$, where $q\in\{1,\ldots,k\}$
  such that
\begin{romanenumerate}
\item for any two distinct integers $i,j\in \{1,\ldots,q\}$, $|M_i\setminus M_j|\geq 2^{k-q}d$, and 
\item if $q\neq k$, then for any other perfect matching $M\notin
  \{M_1,\ldots,M_q\}$, $\vert M\setminus M_j\vert \leq 2^{k-q-1}d$.
\end{romanenumerate}

If $q= k$, then $\{M_1,\ldots,M_k\}$ is a solution to the instance $(G,k,d)$, and hence our algorithm ${\cal A}$ outputs Yes. Now on, we assume that $q\in \{1,\ldots,k-1\}$. Statements $(i)$ and $(ii)$, and 
Observation~\ref{obs:setminussym} imply that  

\begin{romanenumerate}\addtocounter{enumi}{2}
\item for any two distinct integers $i,j\in \{1,\ldots,q\}$, $|M_i\sd M_j|\geq 2^{k-q+1}d$, and 
\item for any perfect matching $M\notin \{M_1,\ldots,M_q\}$, $\vert M\sd M_j\vert < 2^{k-q}d$.  
\end{romanenumerate}

Statements $(ii)$ and $(iv)$, and Observation~\ref{obs:te} imply the following claim.
\begin{claim}
\label{clm:uniquemat}
For any perfect matching $M$, there exists a unique $i\in \{1,\ldots,q\}$ such that $\vert M\sd M_i\vert < 2^{k-q}d$.
\end{claim}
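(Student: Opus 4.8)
The plan is to read the claim off from statements \((ii)\)--\((iv)\) established by the greedy first phase, together with Observations~\ref{obs:te} and~\ref{obs:setminussym}; no new machinery is needed. Throughout we may assume \(d\geq 1\), since when \(d=0\) the algorithm already reports \yes as soon as it finds a perfect matching and the claim is vacuous in that case.

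\textbf{Existence.} First I would argue that every perfect matching \(M\) is close to at least one \(M_i\). If \(M\notin\{M_1,\dots,M_q\}\), this is exactly statement \((iv)\): there is some \(j\in\{1,\dots,q\}\) with \(\vert M\sd M_j\vert<2^{k-q}d\). (That statement is itself just the termination condition of the first phase, translated through Observation~\ref{obs:setminussym}, which gives \(\vert M\sd M_j\vert=2\vert M\setminus M_j\vert\).) If instead \(M=M_i\) for some \(i\in\{1,\dots,q\}\), then \(\vert M\sd M_i\vert=0<2^{k-q}d\) because \(d\geq 1\). So a suitable index always exists.

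\textbf{Uniqueness.} For this I would use the triangle inequality together with the separation guaranteed by \((iii)\). Suppose, for contradiction, that there were two distinct indices \(i,j\in\{1,\dots,q\}\) with \(\vert M\sd M_i\vert<2^{k-q}d\) and \(\vert M\sd M_j\vert<2^{k-q}d\). Applying Observation~\ref{obs:te} to the triple \(M_i,M,M_j\) yields \(\vert M_i\sd M_j\vert\le\vert M_i\sd M\vert+\vert M\sd M_j\vert<2\cdot 2^{k-q}d=2^{k-q+1}d\). But statement \((iii)\) asserts \(\vert M_i\sd M_j\vert\ge 2^{k-q+1}d\) for all distinct \(i,j\in\{1,\dots,q\}\), a contradiction. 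Hence the index is unique.

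The only point to be careful about is the exponent bookkeeping: the argument needs the separation bound in \((iii)\) to be \emph{exactly} twice the closeness threshold \(2^{k-q}d\) appearing in the claim, so that the two ``close'' distances sum strictly below the separation. This is precisely the factor-two gap that the greedy doubling scheme maintains between the current round's lower bound and the next round's target, so there is no genuine obstacle here; one simply has to verify that the phase's stopping rule was faithfully recorded as \((iv)\) and that none of the exponents have drifted by one along the chain of restatements from \((i)\)--\((ii)\) to \((iii)\)--\((iv)\).
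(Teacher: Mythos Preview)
Your proof is correct and matches the paper's own argument. The paper dispatches the claim in a single sentence—``Statements \((ii)\) and \((iv)\), and Observation~\ref{obs:te} imply the following claim''—and you have simply spelled out those details: existence from \((iv)\) (plus the trivial case \(M=M_i\)), uniqueness from the triangle inequality together with the separation bound \((iii)\).
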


Let ${\cal M}=\{M_1^{\star},\ldots,M^{\star}_k\}$ is a solution to the instance $(G,k,d)$. Then, by Claim~\ref{clm:uniquemat}, there is a partition of ${\cal M}$ into ${\cal M}_1\uplus \ldots \uplus {\cal M}_q$ (with some blocks possibly being empty) such that for each $i\in \{1,\ldots,q\}$, and each $M\in {\cal M}_i$, $\vert M\sd M_i\vert \leq 2^{k-q}d$. Thus, in the second step of our algorithm ${\cal A}$, we guess $r_1=\vert {\cal M}_1\vert,\ldots, r_q=\vert {\cal M}_q\vert$ and apply Lemma~\ref{lem:PMDstep2}. That is, for each $i\in \{1,\ldots,q\}$ such that $r_i\neq 0$, we apply Lemma~\ref{lem:PMDstep2} with $M=M_i$, $r=r_i$, and $s=2^{k-q}d$. Then for each $i\in {1,\ldots,q}$, let the output of Lemma~\ref{lem:PMDstep2} be $N_{i,1},\ldots,N_{r_i}$. Clearly $\vert N_{i,j}\sd N_{i,j'}\vert \geq d$ for any two distinct $j,j'\in \{1,\ldots,r_i\}$. Observation~\ref{obs:te} and statement $(iii)$ implies that for any two distinct $i,j\in \{1,\ldots,q\}$, 
the cardinality of the symmetric difference between a matching in $\{N_{i,1},\ldots,N_{i,r_i}\}$ and a matching in $\{N_{j,1,\ldots,N_{j,r_j}}\}$  is at least $d$. 

If algorithm ${\cal A}$ computes a solution in any of the guesses for $r_1,\ldots,r_d$, then we output Yes. Otherwise we output No. As the number of choices for $r_1,\ldots r_k$ is upper bounded by $k^{\Oh(k)}$, from Lemmas~\ref{lem:PMDstep1} and \ref{lem:PMDstep2} we get that the running time of ${\cal A}$ is $2^{2^{\Oh(kd)}}n^{\Oh(1)}$ and the success probability is at least $2^{-2^{ckd}}$ for some constant $c$. To get success probability 
$1-1/e$, we do $2^{2^{ckd}}$ many executions  of ${\cal A}$  and output Yes if we succeed in at least one of the iterations and output No otherwise. Thus, running time of the overall algorithm is  $2^{2^{\Oh(kd)}}n^{\Oh(1)}$. 
\end{proof}


\section{Conclusion}\label{sec:conclusion}
We took up weighted diverse variants of two classical matroid problems and the
unweighted diverse variant of a classical graph problem. We showed that the two
diverse matroid problems are \classNP-hard, and that the diverse graph problem
cannot be solved in polynomial time even for the smallest sensible measure of
diversity. We then showed that all three problems are \classFPT with the
combined parameter \((k,d)\) where \(k\) is the number of solutions and \(d\) is
the diversity measure.

We conclude with a list of open questions:
\begin{itemize}
\item We showed that the unweighted, counting variant of \probWDB does not have
  a polynomial-time algorithm unless \(\classP=\classNP\)
  (\autoref{thm:WDB_NP_hard}). This is the case when all the weights are \(1\)
  and \(d=1\) or \(d=2\). Both the weighted and unweighted variants can be
  solved in polynomial time when \(k=1\) (the greedy algorithm) and \(k=2\)
  ((weighted) matroid intersection). What happens for larger, constant values of
  \(d\) and/or \(k\)? Till what values of \(d,k\) does the problem remain
  solvable in polynomial time? These questions are interesting also for special
  types of matroids. For instance, is there a polynomial-time algorithm that
  checks if an input graph has \emph{three} spanning trees whose edge sets have
  pairwise symmetric difference at least \(d\), or is this already
  \classNP-hard?
\item A potentially easier question along the same vein would be: we know from
  \autoref{thm:WDB_NP_hard} that \probWDB is unlikely to have an \classFPT
  algorithm parameterized by \(d\) alone. Is \probWDB \classFPT parameterized by
  \(k\) alone?
\item Unlike for the other two problems, we don't have hardness results for
  \probWDCIS for small values of \(k\) or \(d\). Is \probWDCIS \classFPT when
  parameterized by either \(d\) or \(k\)? Is this problem in \classP when all
  the weights are \(1\)?
\end{itemize}



\bibliography{matroids} 

\end{document}